\newcounter{nodemarkers}
\newtheorem{theorem}{Theorem}[section]
\newtheorem{defn}[theorem]{Definition}
\newtheorem{lemma}[theorem]{Lemma}
\newcommand{\bigO}{\mathcal{O}}
\newcommand{\agents}{\mathcal{N}}
\newcommand{\items}{\mathcal{I}}
\newcommand{\val}{\mathcal{V}}
\newcommand{\Q}{\mathbf{Q}}
\newcommand{\x}{\mathbf{x}}
\newcommand{\p}{\mathbf{p}}
\newcommand{\tbold}{\mathbf{t}}
\newcommand{\E}{\mathbb{E}}
\newcommand{\M}{\mathcal{M}}
\newcommand{\K}{\mathcal{K}}
\newcommand{\X}{\mathcal{X}}
\newcommand{\sw}{{\sf SW}}
\newcommand{\opt}{\text{OPT}}
\newcommand{\util}{{\sf U}}
\newcommand{\egal}{{\sf E}}
\newcommand{\lognash}{{\sf lgN}}
\newcommand{\avnash}{{\sf avN}}
\DeclareMathOperator{\prm}{Pr}
\DeclareMathOperator{\EG}{iEG}
\DeclareMathOperator*{\argmax}{arg\,max}
\newtheorem{example}[theorem]{Example}
\begin{document}
\title{On Interim Envy-Free Allocation Lotteries}

\author[1]{Ioannis Caragiannis}
%\orcid{0000-0002-4918-7131}

%\affiliation{%
%	\institution{Aarhus University}
%	\department{Department of Computer Science}
%	\streetaddress{{\AA}bogade 34}
%	\city{Aarhus N}
%	\postcode{8240}
%	\country{Denmark}
%}
%\email{iannis@cs.au.dk}

\author[2]{Panagiotis Kanellopoulos}
%\orcid{0000-0002-8093-1379}
%\affiliation{%
%	\institution{University of Essex}
%	\department{School of Computer Science and Electronic Engineering}
%	\city{Colchester}
%	\postcode{CO4 3SQ}
%	\country{United Kingdom}
%}
%\email{panagiotis.kanellopoulos@essex.ac.uk}
\author[2]{Maria Kyropoulou}
%\orcid{0000-0001-6913-8006}
%\affiliation{%
%	\institution{University of Essex}
%	\department{School of Computer Science and Electronic Engineering}
%	\city{Colchester}
%	\postcode{CO4 3SQ}
%	\country{United Kingdom}
%}
%\email{maria.kyropoulou@essex.ac.uk}
\affil[1]{Department of Computer Science, Aarhus University, Denmark}
\affil[2]{School of Computer Science and Electronic Engineering, University of Essex, UK}

\date{}
\maketitle

\begin{abstract}
With very few exceptions, recent research in fair division has mostly focused on deterministic allocations. Deviating from this trend, we study the fairness notion of {\em interim envy-freeness} (iEF) for lotteries over allocations, which serves as a sweet spot between the too stringent notion of ex-post envy-freeness and the very weak notion of ex-ante envy-freeness. Our analysis relates iEF to other fairness notions as well, and reveals tradeoffs between iEF and efficiency. Even though several of our results apply to general fair division problems, we are particularly interested in instances with equal numbers of agents and items where allocations are perfect matchings of the items to the agents. We show how to compute iEF allocations in matching allocation instances in polynomial time, while also maximizing several efficiency objectives, even though this proves to be considerably more challenging than computing envy-free allocations. Our algorithms use efficient solutions to a novel variant of the bipartite matching problem. We also study the extension of iEF when payments to or from the agents are allowed. We present a series of results on two optimization problems, including a generalization of the classical rent division problem to random~allocations. % Enter your abstract
\end{abstract}

%%%%%%%%%%%%%%%%%%%%%%%%%%%%%%%%%%%%%%%%%%%%%%%%%%%%%%%%%%%%%%%%%%%%%%

% Samples of sectioning (and labeling) in MOOR.
% NOTE: (1) all section levels end with a period,
%       (2) capitalization is as shown (sentence style, not title style).
%
%\section{Introduction.}\label{intro} %%1.
%\subsection{Duality and the classical EOQ problem.}\label{class-EOQ} %% 1.1.
%\subsection{Outline.}\label{outline1} %% 1.2.
%\subsubsection{Cyclic schedules for the general deterministic SMDP.}
%  \label{cyclic-schedules} %% 1.2.1
%\section{Problem description.}\label{problemdescription} %% 2.

% Text of your paper here

% Appendix here
% Options are (1) APPENDIX (with or without general title) or 
%             (2) APPENDICES (if it has more than one unrelated sections)
% Outcomment the appropriate case if necessary
%
% \begin{APPENDIX}{<Title of the Appendix>}
% \end{APPENDIX}
%
%   or 
%
% \begin{APPENDICES}
% \section{<Title of Section A>}
% \section{<Title of Section B>}
% etc
% \end{APPENDICES}

\section{Introduction}
Plenty of situations arise in the real world every day in which assets need to be distributed among individuals. Making sure that everyone gets what they are entitled to is an imperative, yet vague, aspiration that is open to interpretation. Fair division is a research area that deals with problems of distributing assets in a way that is considered fair. Fair allocation problems, which focus on indivisible items, have received considerable attention from the EconCS community recently.
 
Among the fairness notions that have been proposed with the goal of capturing the necessity for impartiality and justice, envy-freeness~\citep{F67,V74} is, without doubt, the prevailing one. Envy-freeness requires that each individual, or agent, prefers their own share to anyone else's. However natural and intuitive, though, envy-freeness may not be possible to achieve. In addition, the universality of fair division disputes justifies many different definitions of fairness. Some additional popular fairness notions in the literature include proportionality and max-min fair share, among others. 
 
The vast majority of the related literature on envy-freeness focuses on deterministic allocations, while for random allocations (lotteries or probability distributions over allocations) the relevant fairness concepts are those of ex-ante and ex-post envy-freeness. 
%
%The vast majority of the related literature focuses on deterministic allocations, while some rather recent exceptions~(e.g.,~\cite*{AAGW15}) that consider random allocations (lotteries or probability distributions over allocations) are either too liberal or too conservative in the fairness concepts they consider. For example, 
Ex-ante envy-freeness compares the random bundle allocated to an agent, in terms of expected valuation, to the random bundle allocated to any other agent and is very weak as a fairness guarantee. Indeed, a lottery that allocates all items to an agent selected uniformly at random is ex-ante envy-free; clearly, such a lottery can hardly be considered fair. On the other extreme, the notion of ex-post envy-freeness requires that every outcome of a random allocation is envy-free. Ex-post envy-freeness is very strict and essentially invalidates the advantages of randomness. 

In an attempt to overcome the deficiencies of the ex-ante and ex-post fairness guarantees mentioned above, a recent stream of research on lotteries considers them simultaneously instead of separately. Such guarantees are referred to as “best-of-both-worlds” guarantees \cite{AFSV23}, and they examine whether it is possible to achieve desirable ex-ante outcomes by randomizing over acceptable deterministic outcomes, thus also ensuring ex-post guarantees. For example, \cite{AFSV23} showed that there always exist ex-ante envy-free lotteries over envy-free up to one item (EF1, a relaxation of envy-freeness) deterministic allocations. Such works pursue the approaches of randomization and relaxation simultaneously by considering lotteries that are considered fair both before and after their implementation.

The notion of interim envy-freeness \citep*{PS87}, presented in this paper, provides an alternative but related approach to strengthening ex-ante fairness guarantees. Conceptually, without resorting to relaxations of ex-post fairness, like EF1, interim envy-freeness assumes partial (but consistent) uncertainty about the realization of the ex-ante lottery while maintaining the desire for envy-freeness conceptually. It serves as a middle ground between ex-post and ex-ante envy-freeness, balancing the stringency of the
constraint and the substance of the fairness guarantee. Interestingly, interim envy-freeness also achieves proportionality ex-post. 

\begin{example}\label{ex:iEF-noEF}
\begin{table}[h!]
\centering
\caption{The instance used in Example \ref{ex:iEF-noEF}. \label{tab:ef-vs-ief}}
{\begin{tabular}{c| c cc}
%\toprule
\hline
\phantom{aaa} &\phantom{aaa}$a$&\phantom{aaa}$b$&\phantom{aaa}$c$\\
%\midrule
\hline
\phantom{aaa}$1$&\phantom{aaa}$1/3$&\phantom{aaa}$2/3$&\phantom{aaa}$0$\\
\phantom{aaa}$2$&\phantom{aaa}$0$&\phantom{aaa}$1/2$&\phantom{aaa}$1/2$\\
\phantom{aaa}$3$&\phantom{aaa}$0$&\phantom{aaa}$1/2$&\phantom{aaa}$1/2$\\
%\bottomrule
\hline
\end{tabular}}
{}
\end{table}

Consider the asset allocation instance of Table~\ref{tab:ef-vs-ief} 
where the set $\{a, b, c\}$ of indivisible items are to be allocated to the set $\{1, 2, 3\}$ of agents.  
 A concise notation like $a$-$c$-$b$ is used to represent the allocation in which agents $1$, $2,$ and $3$ get items $a$, $c$, and $b$, respectively.

Now, consider the fixed, publicly known lottery that returns allocations $a$-$b$-$c$ and $a$-$c$-$b$, with probability $1/2$ each. 
Let an outcome of the lottery be realized and each agent observe only their own allocation. Each agent, then, compares their utility to the random bundle allocated to any other agent, conditioned upon their own realized allocation. If such a constraint is satisfied for every agent, with respect to every other agent and any possible bundle, then the lottery is said to be interim envy-free. 

Note that agent $1$ is allocated item $a$, for which she has a value of $1/3$, in both allocations in the support of the lottery. Agent $2$ (and, similarly, agent $3$) gets item $b$ with probability $1/2$ and item $c$ with probability $1/2$. Agent $1$'s average value for the item agent $2$ (or agent $3$) gets is $2/3\cdot 1/2 + 0\cdot 1/2 = 1/3$. Hence, agent $1$ does not envy the average bundle of any other agents conditional on the fact that she received item $a$; we say that the interim envy-freeness condition for agent $1$ with respect to agent $2$ and bundle $\{a\}$ (and, similarly, for agent $1$ with respect to $3$ and bundle $\{a\}$) is satisfied. Since $1$ never receives a different bundle than $\{a\}$, the interim envy-freeness condition for agent $1$ with respect to both other agents is satisfied. Arguing about the interim envy-free conditions of agents $2$ and $3$ is easier as, in our example, agents $2$ and $3$ have maximum possible value in each of the two allocations produced by the lottery and, hence, they do not envy any other agent; the (interim) envy-freeness conditions are satisfied for them as well. We can conclude that this is an interim envy-free lottery. 
\end{example}

One might observe an analogy between the notions of interim envy-freeness in a resource allocation context, and that of Bayesian equilibrium in a game-theoretic model where a set of players strategize to maximize their individual utilities. In the latter, there is a common prior on the strategy profiles of all players, and each player selects an optimal strategy given her beliefs about the other players’ strategies, similar to a game of poker. In other words, the notion of a Bayesian equilibrium involves probabilistic reasoning about the states of nature, which are captured by the potential strategy profiles of players. In a resource allocation setting, the notion of interim envy-freeness similarly involves probabilistic reasoning about the states of nature, only now, the states of nature are captured by the potential allocations of resources to agents. In particular, the players estimate the split of resources in the form of a probability distribution over allocations. This probability distribution is common knowledge, but individual players update their estimation through private information of their personal allocation once that is realized. As a consequence, the differences in the players’ perception are entirely attributed to the differences in private information on top of the common prior assumption they all have, similar to the notion of Bayesian equilibria. Such notions are more general (and weaker) than their deterministic counterparts, in the sense that any property that is satisfied in every possible state of nature will be straightforwardly satisfied in these probabilistic extensions. 

The interim aspect limits this generality by capturing the intuitive assumption that in an uncertain environment, it is easier for an agent to understand their own value (or allocation) than that of others. This notion helps avoid trivial shortcomings of worst-case-type analysis in a meaningful way. To see that, note that there are resource allocation instances that do not admit any (ex-post) envy-free allocation but do admit an interim envy-free allocation. Example \ref{ex:iEF-noEF} demonstrates such an instance; 
to see that there is no EF allocation, note that each agent must obtain exactly one item and agent $1$ must therefore obtain item $b$, which leaves the agent among $2$ and $3$ who obtains item $a$ envious of both other agents. We remark, furthermore, that the ex-post guarantee of EF1 is trivially satisfied, and is hence meaningless, when each agent may obtain at most one item.

Interim envy-freeness can be naturally extended to accommodate payments, similarly to the recent fair allocation literature~\citep*{GMPZ17,HS19}. It is known that payments can help eliminate envy in the deterministic allocation case, both in the form of subsidies paid to the agents to compensate for an unsatisfactory bundle, and in the form of rent payments paid by the agents to make their allocation look less desirable to others. Rent division~\citep*{ASU04} is a fundamental fair allocation problem involving payments, where the input consists of a total rent amount, a set of agents, and an equal number of rooms on which the preferences of agents are expressed. The goal is to assign a price to each room so that the room prices sum up to the total rent, and to match agents to rooms so that everyone prefers their own allocation and rent share. Using the interim envy-freeness concept, we consider natural extensions of problems with payments to the random allocation case, both in the rent division and in the subsidy distribution context. 
 
Matching allocation instances, as in the rent division setting just discussed, are relevant in many applications; hence, we partially focus on this case. An important technical advantage is that such instances allow for an easy computation of (deterministic) envy-free allocations, as opposed to general allocation instances, for which relevant problems are typically NP-hard.
%Indeed, to decide whether a given matching instance has an envy-free matching, it suffices to build a bipartite graph, with the agents in one node side and the items in the other, and add an an edge if the corresponding agent has maximum value for the corresponding item. Then, any perfect matching in this graph is envy-free, as each agent is assigned to one of her most valuable items. Vice-versa, it is not hard to argue that any envy-free matching corresponds to a perfect matching in this bipartite graph. 
However, allowing randomization seems to make the situation considerably more complex. Interestingly, as we will see, the added complication still allows for positive computational results related to interim envy-free lotteries. 
 
\subsection{Overview and significance of our contribution}
To the best of our knowledge, interim envy-freeness (iEF) has not received any attention by the EconCS community. 
Our work presents several key observations into this previously unexplored notion, highlighting its advantages over alternative fairness concepts. We begin by summarizing important properties that emphasize the significance of interim envy-freeness as a fairness notion for lotteries of allocations, followed by a more thorough discussion of our results. Most importantly, and aligned with the ``best-of-both-wolds'' paradigm discussed above, interim envy-freeness is a probabilistic notion that provides substantial fairness guarantees for each deterministic allocation that can be realized. In particular, an interim envy-free lottery only has proportional\footnote{An allocation to $n$ agents is proportional if each allocated bundle is worth to its corresponding agent at least a $1/n$ fraction of the value of the total set of items.} allocations in its support. This has several implications, including bounding the maximum envy at any allocation in the support of an interim envy-free lottery. This is clearly not the case for general ex-ante envy-free allocations: a trivial such allocation would allocate the entire bundle to each agent with equal probability thus violating proportionality and resulting in unbounded envy in each realization. Furthermore, an interim envy-free lottery always satisfies ex-ante envy-freeness, while interim envy-freeness is easier to satisfy than ex-post envy-freeness (as explained above). When extended to settings with payments, interim envy-free lotteries can be superior to envy-free allocations with respect to objectives relating to payments and utility.

%We justify its importance as a fairness notion for lotteries of allocations, by demonstrating a rich menu of interesting properties it enjoys. 
We continue with a more detailed summary of the interesting properties that interim envy-freeness enjoys. First, we relate it to the most important fairness properties for deterministic allocations and lotteries. In terms of strength as a fairness property, iEF is proved to lie between proportionality and envy-freeness in the following way. Clearly, when viewed as a degenerate lottery, any envy-free allocation is iEF. Also, every iEF lottery is defined over proportional allocations (Lemma \ref{lem:ief-implies-prop}). These implications are shown to be strict in a strong sense. We show that there are allocation instances that admit proportional allocations but no iEF lottery (Lemma \ref{lem:prop-no-ief}), and instances that admit iEF lotteries but no envy-free allocation (Lemma \ref{lem:ief-vs-ef}). Compared to fairness properties for lotteries, iEF lies between ex-ante envy-freeness (which can be always attained trivially) and ex-post envy-freeness (which is too restrictive) (Lemmas \ref{lem:ex-ante} and \ref{lem:ex-post}). These findings and observations appear in Section~\ref{sec:ief} alongside other results relating iEF with other fairness properties.

Our next goal is to explore the trade-offs between iEF and economic efficiency (Section~\ref{sec:ief-vs-efficiency}). We pay special attention to matching allocation instances where envy-freeness implies Pareto-efficiency and optimal utilitarian, egalitarian, and average Nash social welfare. A careful interpretation of these facts reveals that envy-freeness is a very restrictive fairness property. In contrast, being less restrictive, iEF lotteries may not be Pareto-efficient (Theorem \ref{thm:po-3}) and can furthermore produce allocations of suboptimal social welfare. We provide tight or almost tight bounds on the price of iEF with respect to the utilitarian, egalitarian, and average Nash social welfare (Theorems \ref{thm:poief-upper}, \ref{thm:poief-util-lower}, \ref{thm:poief-egal-lower}, and \ref{thm:poief-nash-lower}). These bounds suggest that iEF allocations can have social welfare that is up to $\Theta(n)$ times far from optimal, where $n$ is the number of agents.

Bounds on the price of iEF give only rough estimates of the best social welfare of iEF lotteries. We present polynomial-time algorithms for computing iEF lotteries that maximize the utilitarian, egalitarian, and log-Nash social welfare (Theorem \ref{thm:compute-ief}). Our algorithms follow a general template that can be briefly described as follows. The problem of computing an iEF lottery of maximum social welfare is formulated as a linear program. This linear program has exponentially many variables; to solve it, our algorithm instead solves the dual linear program using the ellipsoid method. As the dual linear program has exponentially many constraints, the ellipsoid method needs access to polynomial-time separation oracles that check whether the dual variables violate the dual constraints or not. We design such separation oracles by exploiting connections to maximum edge-pair-weighted bipartite perfect matching (2EBM), a novel (to the best of our knowledge) combinatorial optimization problem that involves perfect matchings in bipartite graphs. We show how to solve 2EBM in polynomial time by exploiting an elegant lemma by~\citet{C77} on decompositions of doubly-stochastic centro-symmetric matrices. We believe that 2EBM is a natural combinatorial optimization problem of independent interest and with applications in other contexts. These computational results appear in Section~\ref{sec:computing} and constitute the most technically intriguing results in the paper.

Finally, we extend the definition of interim envy-freeness to accommodate settings where monetary transfers (payments) are allowed. We define and study two related optimization problems. In subsidy minimization, which is motivated by a similar problem for deterministic allocations that was studied recently (see e.g.,~\citep*{BDN+20, CI20, HS19}), we seek iEF pairs of lotteries and payments to the agents so that the total expected amount of payments is minimized. In utility maximization, which extends the well-known rent division problem, we seek iEF pairs of lotteries and payments that are collected from the agents and contribute to a fixed rent; the objective is to maximize the minimum expected agent utility. We consider different types of payments depending on whether the payments are agent-specific, bundle-specific, or unconstrained (i.e., specific to agents and allocations). iEF is proved to be considerably more powerful than envy-freeness, allowing for much better solutions to the two optimization problems compared to their deterministic counterparts (Example \ref{ex:ief-vs-ef}). We also showcase the importance of both agent-specific and bundle-specific payments by showing that they are incomparable to each other, in the context of the two optimization problems (Theorems \ref{thm:a-vs-b} and \ref{thm:b-vs-a}). By applying our computational template, we present efficient algorithms that compute optimal solutions to subsidy minimization (Theorem \ref{thm:sm}) and utility maximization (Theorem \ref{thm:umax}) using unconstrained payments, violating the iEF condition only marginally. These results are presented in Section~\ref{sec:payments}. We believe that they also attest to the significance of interim envy-freeness and will motivate further study.

\subsection{Further related work}
Previous work on randomness in allocation problems is clearly related to ours. \citet{Aziz19} discusses the benefits of using randomization in social choice settings, including fair division, and addresses the related challenges. \citet{GT02} study ex-ante and ex-post notions of fairness in a setting with inherent uncertainty imposed by the environment. More recently, \citet{AAGW15} analyze the ex-post and ex-ante envy-freeness guarantees of particular algorithms for online fair division. \citet{AFSV23} study the possibility of achieving ex-ante and ex-post fairness guarantees simultaneously in the classical fair allocation setting. For example, they show that there always exists an ex-ante envy-free lottery, with all allocations in its support satisfying a relaxed fairness property known as envy-freeness up to one item (EF1; see~\citep{B11}). They furthermore show that such lotteries can be computed efficiently. \citet{BEF22} present a polynomial time algorithm that computes a lottery that is ex-ante proportional while every ex-post allocation gives every agent at least half of her maximin share (MMS; see again \citep{B11}). Very recently, \citet{AGM23} and \citet{HSV23} extend the study of simultaneously achieving ex-ante and ex-post fairness guarantees to the case of agents with different entitlements.

Interim envy-freeness was first defined by \citet{PS87} (see also \citep*{Ves04,deC08}).
\citet{PS87} consider a differential information economy with a set of states, a set of agents, and a divisible good (endowment), where an agent's utility for a given amount of endowment depends also on the state. Each agent has a prior probability distribution over the states and is aware of a partition of the set of states. Partitions and priors are public information and agent-specific, while the outcome of the partition that each agent observes is private information. Among other questions, they study whether there exist truthful mechanisms leading to allocations of the endowment that satisfy interim envy-freeness. \citet{deC08} considers a similar, albeit non-strategic, setting with multiple items where the aggregate endowment is state-dependent and shows that interim envy-free allocations do not always exist. \citet{Ves04} discusses a setting with an equal number of agents and items, allowing also for a divisible good in the form of money, and studies existence of interim envy-free allocations when each agent must obtain exactly one item. 
 Even though the intuition behind the iEF notion in those papers coincides with ours, their settings are different. They are more complex in the sense that the value of an agent for a bundle may depend on the allocation, but they are more restrictive as the lottery probabilities are fixed in advance. Our setting is more suitable to formulate and study existence and computational questions. 

Randomness is an important design tool for mechanisms that compute solutions in matching allocation instances. In a slightly different context than ours~(e.g., see~\citep*{HZ79}), agents are assumed to have ordinal preferences on the items. Mechanisms such as the probabilistic serial rule~\citep*{BM01} introduce randomness to avoid discrimination between agents, and achieve ex-ante fairness guarantees for all cardinal valuations that are compatible to the ordinal preferences. Other rules in this line of research include random priority~\citep*{AS98}, vigilant eating~\citep*{AB20}, and several extensions of the probabilistic serial rule~\citep*{BCKM13}.

Previous work on subsidy minimization has focused on envy-freeness as the solution concept. The objective is to compute an allocation and appropriate payments to the agents so that agents are non-envious for the combinations of payments and bundles of items they receive. The work of \citet{Maskin1987} seems to be the first treatment of the problem, although without optimizing subsidies. The notion of envy-freeability refers to an allocation that can become envy-free when paired with appropriate payments. \citet{HS19} present a characterization that indicates that envy-freeability is strongly connected to a no-positive-cycle property in an appropriately defined envy graph. Alternatively, envy-freeable allocations maximize the utilitarian social welfare with respect to all bundle redistributions among the agents. Among other results, \citet{HS19} aim to bound the amount of subsidies assuming that all agent valuations are in $[0,1]$. They conjecture that subsidies of $n-1$ suffice; an even stronger version of the conjecture is proved by \citet{BDN+20}. Complexity results for subsidy minimization are presented by \citet{CI20}. Another study that blends fairness (including envy-freeness) with payments is the work of \citet{CEM17} on distributed allocations of items.

Rent division had attracted attention much before subsidy minimization~\citep*{A95,S99,S83}. A very similar characterization of envy-freeable allocations~\citep*{ADG91,S09} like the one mentioned above, has allowed for a simple solution in~\citep*{GMPZ17} to the problems of maximizing the minimum agent utility and minimizing the disparity between agent utilities. Earlier, the papers~\citep*{ASU04,HRS02,K00} present algorithms for computing envy-free rent divisions without considering optimization criteria. 

\section{Preliminaries}
\label{sec:prelim}
An instance of an allocation problem consists of a set $\agents$ of $n$ agents and a set $\items$ of $m$ items. Agent $i\in \agents$ has valuation $v_i(j)$ for item $j\in \items$ while we use $v_i(S)$ to denote her valuation for the set (or bundle) of items $S$. We assume that valuations are non-negative and additive, i.e., $v_i(S)=\sum_{j\in S}{v_i(j)}$. We remark, though, that our results (including the concept of interim envy-freeness, which we define later in this section) carry over to the more general case of subadditive valuations\footnote{Indeed, Lemmas \ref{lem:ex-ante}, \ref{lem:ex-post}, \ref{lem:ief-implies-prop}, and \ref{lem:max-envy} in Section \ref{sec:ief}, Theorem \ref{thm:poief-upper} in Section \ref{sec:ief-vs-efficiency}, and Theorem \ref{thm:characterization} in Section \ref{sec:payments} still hold for subadditive valuations, while all negative results carry over trivially. Our results in Section \ref{sec:computing} and Subsection \ref{sec:compute-payments} hold for matching instances where only valuations per item matter.}. Furthermore, even though this is rarely required for our positive statements, in our examples we use normalized valuations satisfying $\sum_{j\in \items}{v_i(j)}=1$ for every agent $i\in \agents$.

An allocation $A=(A_1, A_2, ..., A_n)$ of the items in $\items$ to the agents of $\agents$ is simply a partition of the items of $\items$ into $n$ bundles $A_1$, $A_2$, ..., $A_n$, with the understanding that agent $i\in \agents$ gets the bundle of items $A_i$. An allocation $A=(A_1, A_2, ..., A_n)$ is {\em envy-free} (EF) if $v_i(A_i)\geq v_i(A_k)$ for every pair of agents $i$ and $k$. In words, the allocation $A$ is envy-free if no agent prefers the bundle of items that has been allocated
to some other agent to her own. The allocation $A$ is {\em proportional} if $v_i(A_i)\geq \frac{1}{n}v_i(\items)$. An instance may not admit any envy-free or proportional allocation; to see why, consider an instance in which all agents have a positive valuation of $1$ for a single item (and zero value for any other item). It is well-known that, due to additivity, an envy-free allocation is always proportional. 

Envy-freeness is defined accordingly if monetary transfers are allowed. In particular, a pair of an allocation $A$ and a vector $\p$ consisting of a payment $p_i$ to each agent $i\in \agents$ is envy-free with payments if $v_i(A_i)+p_i\geq v_i(A_k)+p_k$ for every pair of agents $i$ and $k$. The term {\em envy-freeable} refers to an allocation that can become envy-free with an appropriate payment vector. Depending on the setting, payments can be restricted to be non-negative (e.g., representing subsidies that are given to the agents~\citep*{BDN+20,CI20,HS19}) or non-positive (e.g., when payments are collected from the agents, like in the rent division problem~\citep*{GMPZ17}).

In addition to their fairness properties, allocations are typically assessed in terms of their efficiency. We say that an allocation $A=(A_1, A_2, ..., A_n)$ is Pareto-efficient if there is no other allocation $A'=(A'_1, A'_2, ..., A'_n)$ with $v_i(A'_i)\geq v_i(A_i)$ for every agent $i\in \agents$, with the inequality being strict for at least one agent of $\agents$. The term social welfare is typically used to assign a cardinal score that characterizes the efficiency of an allocation. Among the several social welfare notions, the utilitarian, egalitarian, and Nash social welfare are the three most prominent. We use the notation $\util(A)$, $\egal(A)$, $\avnash(A)$, and $\lognash(A)$ to refer to the utilitarian, egalitarian, average Nash, and log-Nash social welfare, respectively, of an allocation $A=(A_1, ..., A_n)$; the corresponding efficiency scores are defined as follows:
\begin{align*}
\util(A)=\sum_{i\in\agents}{v_i(A_i)}, \quad \egal(A)=\min_{i\in\agents}{v_i(A_i)}, \quad \avnash(A)=\left(\prod_{i\in\agents}{v_i(A_i)}\right)^{1/n}, \quad \lognash(A)=\sum_{i\in\agents}{\ln{v_i(A_i)}}.
\end{align*}

The {\em price of fairness}, introduced independently in~\citep*{BFT11} and~\citep*{CKKK12}, refers to a class of notions that aim to quantify trade-offs between fairness and efficiency. For example, the price of envy-freeness with respect to the utilitarian social welfare for an allocation instance (that admits at least one envy-free allocation) is the ratio of the optimal utilitarian social welfare of the instance over the utilitarian social welfare of the best envy-free allocation. Different price of fairness notions follow by selecting different fairness concepts and social welfare definitions. 

We are particularly interested in {\em matching allocation instances}, in which the number of agents is equal to the number of items. Our assumptions (e.g., for non-negative valuations) imply that the only reasonably fair (e.g., proportional) allocations should then assign (or match) exactly one item to each agent. We refer to such allocations as {\em matchings} and we typically use the small letter $b$ to denote a matching, instead of the usual notation of $A$ for allocations in general instances. Notice that, an envy-free matching must allocate to each agent her most-valued item. As such, whenever an envy-free allocation exists in a matching instance, it is Pareto-efficient and maximizes the social welfare, according to all definitions of social welfare mentioned above. Hence, the price of envy-freeness is trivially $1$ in this case (with respect to all the social welfare definitions given above).

\subsection*{Random allocations and interim envy-freeness}
We consider random allocations that are produced according to {\em lotteries} (or probability distributions). The lottery $\Q$ over allocations of the items of $\items$ to the agents of $\agents$ is ex-ante envy-free if $\E_{A\sim \Q}[v_i(A_i)]\geq \E_{A\sim \Q}[v_i(A_k)]$ for every pair of agents $i,k\in \agents$. $\Q$ is ex-post envy-free if any allocation it produces with positive probability is envy-free (or, in other words, if all allocations in the {\em support} of $\Q$ are envy-free). %Similarly, an allocation is ex-ante proportional if $\E_{A\sim \Q}[v_i(A_i)]\geq \frac{1}{n}v_i(\items)$ for every agent $i\in \agents$ and ex-post proportional if all allocations in its support are proportional.

We now provide the formal definition of the central concept of this paper. We say that a lottery $\Q$ over allocations is {\em interim envy-free} (iEF) if for any pair of agents $i,k\in \agents$ and any possible bundle of items $S$ that agent $i$ can get in a random allocation produced by $\Q$, it holds
\begin{align*}
v_i(S) & \geq \E_{A\sim \Q}[v_i(A_k)|A_i=S].
\end{align*}
The definition of iEF requires that the value agent $i$ has when she gets a bundle $S$ is at least as high as the average value she has for the bundle that agent $k$ gets, conditioned on $i$'s allocation.

We extend the notion of interim envy-freeness to pairs of lotteries over allocations and payments to/from the agents, in an analogous way that recent work has defined envy-freeness with payments. In fact, we differentiate between different payment schemes with respect to whether payments are per agent (A-payments), per bundle (B-payments), or per allocation and agent (C-payments). We similarly extend the notion of price of fairness and envy-freeability to the case of iEF. We postpone providing formal definitions for the corresponding sections that these notions are being examined.

\section{Interim envy-freeness vs. other fairness notions}\label{sec:ief}
In this section we compare interim envy-freeness with other fairness notions, with the aim to identify possible fairness implications. The first implication follows easily by the definitions and has been observed before in more general contexts than ours (e.g., see \citep{Ves04}).

\begin{lemma}\label{lem:ex-ante}
Any iEF lottery $\Q$ is ex-ante envy-free. 
\end{lemma}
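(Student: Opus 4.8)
The plan is to show that interim envy-freeness, which constrains $i$'s valuation conditioned on each possible realized bundle, implies ex-ante envy-freeness, which constrains $i$'s valuation in expectation over the whole lottery. The natural approach is to obtain the ex-ante inequality $\E_{A\sim \Q}[v_i(A_i)]\geq \E_{A\sim \Q}[v_i(A_k)]$ by averaging the iEF inequalities over all bundles $S$ that agent $i$ can receive, weighted by the probability that $i$ receives $S$. The key tool is the law of total expectation (tower property): conditioning on the value of $A_i$ partitions the probability space, so expectations decompose as a weighted sum over the events $\{A_i=S\}$.

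First I would fix a pair of agents $i,k\in\agents$ and let $\mathcal{S}$ denote the (finite) collection of bundles $S$ that agent $i$ receives with positive probability under $\Q$. For each such $S$, the iEF definition gives $v_i(S)\geq \E_{A\sim \Q}[v_i(A_k)\mid A_i=S]$. Next I would multiply each inequality by $\prm_{A\sim \Q}[A_i=S]\geq 0$ — multiplying by a nonnegative weight preserves the inequality direction — and sum over all $S\in\mathcal{S}$. The left-hand side becomes $\sum_{S\in\mathcal{S}} v_i(S)\cdot \prm[A_i=S]$, which is exactly $\E_{A\sim\Q}[v_i(A_i)]$, since on the event $\{A_i=S\}$ the quantity $v_i(A_i)$ equals the constant $v_i(S)$. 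The right-hand side becomes $\sum_{S\in\mathcal{S}} \prm[A_i=S]\cdot \E[v_i(A_k)\mid A_i=S]$, which by the law of total expectation collapses to $\E_{A\sim\Q}[v_i(A_k)]$.

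Combining the two sides yields $\E_{A\sim\Q}[v_i(A_i)]\geq \E_{A\sim\Q}[v_i(A_k)]$, which is precisely the ex-ante envy-freeness condition for the ordered pair $(i,k)$. Since $i$ and $k$ were arbitrary, the conclusion holds for every pair of agents, establishing that $\Q$ is ex-ante envy-free.

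I do not anticipate a genuine obstacle here, as the statement is essentially an application of the tower property; the lemma is flagged in the text as following ``easily by the definitions.'' The only point requiring minor care is bookkeeping: ensuring the conditioning is well-defined (restricting the sum to bundles $S$ with $\prm[A_i=S]>0$), and correctly identifying $\sum_{S} v_i(S)\,\prm[A_i=S]$ with $\E[v_i(A_i)]$ — which holds because $v_i(A_i)$ is constant on each event $\{A_i=S\}$. A brief remark could note that the same averaging argument, applied over the distribution of $A_i$, is what lets the per-realization iEF guarantee aggregate into the weaker ex-ante guarantee, thereby also making transparent why the reverse implication fails.
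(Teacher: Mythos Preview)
Your proposal is correct and follows essentially the same approach as the paper: both multiply the iEF inequality $v_i(S)\geq \E_{A\sim\Q}[v_i(A_k)\mid A_i=S]$ by $\prm_{A\sim\Q}[A_i=S]$, sum over $S$, and invoke the law of total expectation to obtain the ex-ante inequality.
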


\begin{proof}
Indeed, using the definitions of iEF, ex-ante envy-freeness and well-known properties of random variables, we have
\begin{align*}
\E_{A\sim\Q}[v_i(A_i)] &= \sum_{S\subseteq \items}{v_i(S)\cdot \prm_{A\sim\Q}[A_i=S]} \geq \sum_{S\subseteq \items}{\E_{A\sim\Q}[v_i(A_k)|A_i=S]\cdot \prm_{A\sim\Q}[A_i=S]}\\
&= \E_{A\sim\Q}[v_i(A_k)],
\end{align*}
for every pair of agents $i$ and $k$.
\end{proof}

An even simpler observation is that any lottery that deterministically produces an envy-free allocation $A$ is trivially iEF. Indeed, $A_i$ is the only bundle that can be given to agent $i$, who weakly prefers it to the bundle $A_k$ that is allocated to agent $k$. Trivially, $\E_{A\sim\Q}[v_i(A_k)|A_i=S]=v_i(A_k)$ and the iEF condition is identical to the envy-freeness condition $v_i(A_i)\geq v_i(A_k)$. We can slightly extend this argument to obtain the following implication.

\begin{lemma}\label{lem:ex-post}
Any ex-post envy-free lottery $\Q$ is iEF.
\end{lemma}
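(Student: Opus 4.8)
The plan is to unpack the conditional expectation appearing in the iEF definition and to bound it term by term using the ex-post guarantee, in the same spirit as the manipulation in the previous lemma. Fix an arbitrary pair of agents $i,k\in\agents$ and an arbitrary bundle $S$ that agent $i$ receives with positive probability under $\Q$. The quantity $\E_{A\sim\Q}[v_i(A_k)\mid A_i=S]$ is a weighted average of the numbers $v_i(A_k)$, taken over exactly those allocations $A$ in the support of $\Q$ for which $A_i=S$, with weights given by the conditional probabilities $\prm_{A\sim\Q}[A\mid A_i=S]$.

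First I would observe that every allocation $A$ appearing in this sub-collection is, in particular, an allocation in the support of $\Q$, and is therefore envy-free by the ex-post assumption. Consequently, for each such $A$ we have $v_i(A_k)\le v_i(A_i)=v_i(S)$, where the final equality simply records the conditioning event $A_i=S$. Substituting this termwise bound into the weighted average and using the fact that the conditional probabilities sum to one, I would conclude that $\E_{A\sim\Q}[v_i(A_k)\mid A_i=S]\le v_i(S)$, which is exactly the iEF constraint for the triple $(i,k,S)$. Since $i$, $k$, and $S$ were arbitrary, this establishes that $\Q$ is iEF.

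The only subtle point, and what I would treat as the main obstacle, is to keep straight that conditioning on $A_i=S$ restricts attention to a \emph{sub-collection} of support allocations, so that ex-post envy-freeness applies to every allocation over which we average. This is what licenses the bound $v_i(A_k)\le v_i(S)$ to hold term by term rather than merely in expectation; once that is in place, the argument reduces to the routine observation that an average of quantities each at most $v_i(S)$ is itself at most $v_i(S)$.
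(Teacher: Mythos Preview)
Your proposal is correct and matches the paper's approach. The paper does not give a formal proof of this lemma; it simply remarks that the argument for the degenerate (deterministic) case ``can be slightly extended'' to yield the statement, and your write-up is exactly that extension: condition on $A_i=S$, use ex-post envy-freeness of every support allocation to bound each term $v_i(A_k)\le v_i(S)$, and average.
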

However, the opposite is not true; we show below that the existence of an iEF allocation does not imply the existence of an EF allocation. This is important as it indicates that the set of iEF allocations is larger than those of EF ones.

\begin{lemma}\label{lem:ief-vs-ef}
There exist allocation instances with an iEF lottery but with no EF allocation.
\end{lemma}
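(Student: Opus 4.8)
The plan is to prove the statement by an explicit construction: I will exhibit a single allocation instance, display a concrete lottery $\Q$, and then verify directly that $\Q$ is iEF while arguing that the instance admits no EF allocation. Before searching for such an instance I would record a structural obstruction that rules out trivial examples and guides the search. For two agents, conditioning on agent $1$'s bundle $S$ determines agent $2$'s bundle (it is the complement $\items\setminus S$), so the iEF constraint for agent $1$ collapses into the ex-post EF constraint; hence any separating instance must have at least three agents, and the randomization must leave genuine uncertainty about some third party's allocation even after we condition on a given agent's own bundle.

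The construction I would attempt uses three agents and three items (a matching instance, which keeps the EF analysis transparent). The design principles are: (i) make two of the agents share the same strict favorite item, so that in every matching one of them is deprived of it and strictly envies its holder---this is what precludes EF; (ii) keep each agent's second-best item worth at least a $1/n$ fraction of her total value. This last point is a necessary condition for iEF that I would derive in advance, in a matching instance, by summing the iEF inequalities of agent $i$ over all $k\ne i$ and using additivity: since the remaining items always go to the remaining agents, $\sum_{k\ne i}\E_{A\sim\Q}[v_i(A_k)\mid A_i=\{j\}]=v_i(\items)-v_i(j)$, and bounding each term by $v_i(j)$ yields $v_i(j)\ge v_i(\items)/n$ for every item $j$ that $i$ can receive. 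Finally, (iii) I would use the third agent as ``cover,'' e.g. indifferent among all items, so that she never contributes envy and can absorb whichever item the two rivals are not holding.

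With these principles fixed, I would take $\Q$ to be the uniform distribution over a small family of matchings (three suffice), arranged so that, conditioned on a rival receiving her less-preferred-but-still-acceptable item, the coveted item lands with the third (indifferent) agent exactly as often as with the other rival; the two conditional contributions to expected envy then cancel against the slack provided by the $1/n$ margin. A representative instantiation is to let agents $1$ and $2$ both value the shared favorite at $0.4$, value their respective second items at $0.35$ (just above $1/3$) and their worst items at $0.25$, while agent $3$ values every item at $1/3$. The verification then splits into two routine parts: first, that no EF allocation exists, by a short case analysis on who holds the contested item (the two rivals cannot both be satisfied, and handing the item to the indifferent agent leaves both rivals envious); second, that the iEF inequality $v_i(S)\ge\E_{A\sim\Q}[v_i(A_k)\mid A_i=S]$ holds for each agent $i$, each bundle $S$ she receives with positive probability, and each other agent $k$---the indifferent agent's constraints holding with equality and the two rivals' with a small strict margin.

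The main obstacle is reconciling requirements (i) and (ii), which pull in opposite directions: forcing envy in every matching tends to push a deprived agent's item below the $1/n$ threshold, at which point no iEF lottery can ever place that item with her, whereas flattening the valuations enough to meet the threshold tends to reinstate an EF matching. The heart of the proof is therefore pinning down numerical values for which the strict conflict over the shared favorite and the $1/n$ margin on each received item hold simultaneously; once such values are fixed, both verifications reduce to elementary arithmetic.
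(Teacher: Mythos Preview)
Your proposal is correct and takes essentially the same approach as the paper: exhibit an explicit three-agent, three-item matching instance together with a simple lottery, verify the iEF inequalities directly, and rule out any envy-free matching by a short case analysis. The paper's specific instance is a bit simpler---only two matchings are needed in the support of $\Q$ (agent~$1$ is fixed to item $a$ while agents~$2$ and~$3$, who are indifferent between $b$ and $c$, swap them equiprobably), and the no-EF argument rests on item $a$ being positively valued only by agent~$1$---but the structure of the argument is identical to yours.
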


\begin{proof}
Consider the matching allocation instance at the left of Table~\ref{tab:ef-vs-ief-prop} and the lottery $\Q$ which returns matchings $a$-$b$-$c$ and $a$-$c$-$b$, with probability $1/2$ each. As agents $2$ and $3$ have maximum possible value in each of the two matchings produced by $\Q$, EF and, consequently, iEF conditions for them are satisfied. To see that the iEF condition is satisfied for agent $1$, observe that she is allocated item $a$ in both matchings in the support of $\Q$, for which she has a value of $1/3$. Agent $2$ (and, similarly, agent $3$) gets item $b$ with probability $1/2$ and item $c$ with probability $1/2$. Agent $1$'s average value for the item agent $2$ (or agent $3$) gets is $2/3\cdot 1/2 + 0\cdot 1/2 = 1/3$. Hence, the iEF condition for agent $1$ with respect to agent $2$ (and, similarly, for agent $1$ with respect to $3$) is satisfied. The proof that the lottery $\Q$ is iEF is complete.

\begin{table}[ht]
\centering
\caption{The two matching instances that are used in the proofs of Lemmas~\ref{lem:ief-vs-ef} and~\ref{lem:prop-no-ief} to distinguish between iEF, EF, and proportionality. Throughout the paper, we consider several examples with three agents and items $a$, $b$, and $c$. A concise notation like $a$-$c$-$b$ is used to represent the matching in which agents $1$, $2,$ and $3$ get items $a$, $c$, and $b$, respectively. \label{tab:ef-vs-ief-prop}}
{\begin{tabular}{l r}
{\begin{tabular}{c| c cc}
%\toprule
\hline
\phantom{aa}&\phantom{aa}$a$&\phantom{aa}$b$&\phantom{aa}$c$\\
%\midrule
\hline
\phantom{aa}$1$&$1/3$&\phantom{aa}$2/3$&\phantom{aa}$0$\\
\phantom{aa}$2$&\phantom{aa}$0$&\phantom{aa}$1/2$&\phantom{aa}$1/2$\\
\phantom{aa}$3$&\phantom{aa}$0$&\phantom{aa}$1/2$&\phantom{aa}$1/2$\\
%\bottomrule
\hline
\end{tabular}}
& \quad\quad
{\begin{tabular}{c| cccc}
%\toprule
\hline
\phantom{aa}&\phantom{aa}$a$&\phantom{aa}$b$&\phantom{aa}$c$\\
%\midrule
\hline
\phantom{aa}$1$&\phantom{aa}$1/3$&\phantom{aa}$2/3$&\phantom{aa}$0$\\
\phantom{aa}$2$&\phantom{aa}$0$&\phantom{aa}$2/3$&\phantom{aa}$1/3$\\
\phantom{aa}$3$&\phantom{aa}$1/4$&\phantom{aa}$1/2$&\phantom{aa}$1/4$\\
%\bottomrule
\hline
\end{tabular}}
\end{tabular}}
\end{table}

In the same example, it can be easily seen that there is no EF allocation. Indeed, as the only agent who has positive value for item $a$ is agent $1$, agent $1$ should get this item and be envious of the agent who gets item $b$. The proof of the lemma is complete.
\end{proof}
Our next lemma relates iEF to proportionality and is used extensively in our proofs.

\begin{lemma}\label{lem:ief-implies-prop}
Any allocation in the support of an iEF lottery is proportional.
\end{lemma}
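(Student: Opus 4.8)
The plan is to fix an arbitrary allocation $A$ in the support of $\Q$ and an arbitrary agent $i\in\agents$, and to establish the proportionality inequality $v_i(A_i)\geq \frac{1}{n}v_i(\items)$. Write $S=A_i$. Since $A$ lies in the support, $S$ is a bundle that agent $i$ receives with positive probability, so for every agent $k\in\agents$ the corresponding iEF constraint $v_i(S)\geq \E_{A'\sim\Q}[v_i(A'_k)\mid A'_i=S]$ is valid and available to us.

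The key step I would take is to apply this inequality not for a single $k$, but for all $k\in\agents$ simultaneously and then sum. For $k=i$ the inequality holds with equality, because conditioning on the event $A'_i=S$ forces $v_i(A'_i)=v_i(S)$ deterministically. Summing the $n$ inequalities over all agents therefore yields
\[
n\cdot v_i(S)\ \geq\ \sum_{k\in\agents}\E_{A'\sim\Q}\bigl[v_i(A'_k)\mid A'_i=S\bigr].
\]

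The right-hand side then collapses by two elementary facts. First, by linearity of (conditional) expectation, the sum of conditional expectations equals $\E_{A'\sim\Q}\bigl[\sum_{k\in\agents} v_i(A'_k)\mid A'_i=S\bigr]$. Second, and this is the crucial identity, every allocation $A'$ partitions $\items$ into the bundles $A'_1,\dots,A'_n$, so additivity of $v_i$ gives $\sum_{k\in\agents} v_i(A'_k)=v_i(\items)$ for \emph{every} allocation in the support. The quantity inside the conditional expectation is thus the constant $v_i(\items)$, and the conditioning disappears. Combining, we obtain $n\cdot v_i(A_i)=n\cdot v_i(S)\geq v_i(\items)$, which is exactly proportionality for agent $i$; since $A$ and $i$ were arbitrary, the lemma follows.

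There is no serious obstacle here: the argument is short once one spots the right move. The only genuine insight is to sum the iEF constraints over \emph{all} agents $k$ (including the trivial $k=i$ term), so that the partition identity $\sum_k v_i(A'_k)=v_i(\items)$ can be invoked and annihilate the conditioning. I would flag that additivity of the valuations is the property being used at this step, consistent with the additive-valuation assumption made in the preliminaries.
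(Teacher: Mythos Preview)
Your proof is correct and follows essentially the same approach as the paper: sum the iEF constraints over all agents $k$, use linearity of conditional expectation, and invoke the partition identity $\sum_k v_i(A'_k)=v_i(\items)$ coming from additivity. The only cosmetic difference is that the paper sums over $k\neq i$ to obtain $(n-1)v_i(S)\geq v_i(\items\setminus S)$ and then adds $v_i(S)$ to both sides, whereas you include the trivial $k=i$ term in the sum from the outset; the two are equivalent.
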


\begin{proof}
Consider an iEF lottery $\Q$ and any agent $i\in \agents$. By the definition of iEF, we have that for any allocation in the support of $\Q$ where agent $i$ gets the bundle of items $S$, it holds that $v_i(S)\geq \E_{A\sim \Q}[ v_i(A_k)|A_i=S]$ for each other agent $k$. By summing up over all other agents we get
\begin{align*}
(n-1)v_i(S)&\geq \sum_{k\neq i} \E_{A\sim \Q}[ v_i(A_k)|A_i=S] =\E_{A\sim\Q}[\sum_{k\neq i} v_i(A_k)|A_i=S] \geq v_i(\items \setminus S).
\end{align*}
By adding $v_i(S)$ to both sides of the above inequality and rearranging, we get $v_i(S)\geq \frac{1}{n}v_i(\items)$,
implying that any allocation in the support of $\Q$ is proportional.
\end{proof}

However, iEF is a stronger property than proportionality as the next lemma shows.
\begin{lemma}\label{lem:prop-no-ief}
There exist allocation instances with a proportional allocation but with no iEF lottery.
\end{lemma}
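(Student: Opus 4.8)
The plan is to use the matching instance on the right of Table~\ref{tab:ef-vs-ief-prop} as the witness. Since the valuations are normalized, proportionality of an allocation $A$ here means $v_i(A_i)\geq 1/3$ for every agent $i$, so my first task is to pin down exactly which allocations of this instance are proportional.

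First I would observe that every proportional allocation of this instance must be a perfect matching. Indeed, any agent receiving no item has value $0 < 1/3$ and so violates proportionality; since there are three items and three agents, proportionality forces each agent to receive exactly one item. I can then enumerate the matchings directly: agent $3$ reaches value at least $1/3$ only via item $b$ (items $a$ and $c$ give her $1/4$), so she must get $b$; and since agent $1$ values $c$ at $0$, the items $a$ and $c$ can only be split proportionally between agents $1$ and $2$ by giving $a$ to agent $1$ and $c$ to agent $2$. Hence $a$-$c$-$b$ is the unique proportional allocation --- in particular a proportional allocation exists.

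Next I would invoke Lemma~\ref{lem:ief-implies-prop}, which ensures that every allocation in the support of an iEF lottery is proportional. Together with the previous step, this forces the support of any iEF lottery to be contained in the single allocation $a$-$c$-$b$. So the only possible iEF lottery is the degenerate one that deterministically outputs $a$-$c$-$b$.

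Finally, I would rule out this candidate by recalling that a degenerate lottery is iEF exactly when the allocation it produces is envy-free. In $a$-$c$-$b$, agent $1$ receives item $a$, worth $1/3$ to her, yet she values agent $3$'s item $b$ at $2/3 > 1/3$; thus she envies agent $3$ and the allocation is not envy-free. Consequently the degenerate lottery on $a$-$c$-$b$ is not iEF, and no iEF lottery exists, which establishes the lemma. I expect the only nonroutine step to be the reduction to perfect matchings, as it is what lets me conclude that the proportional support is a single allocation; after that, the claim collapses to verifying one envy inequality.
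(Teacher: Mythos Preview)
Your proposal is correct and follows essentially the same route as the paper: identify $a$-$c$-$b$ as the unique proportional allocation, use Lemma~\ref{lem:ief-implies-prop} to restrict any iEF lottery's support to this single allocation, and then observe that agent~$1$ envies agent~$3$ there. The only difference is that you spell out the reduction to perfect matchings and the enumeration in full, whereas the paper simply asserts that $a$-$c$-$b$ is the only proportional allocation (and notes that both agents~$1$ and~$2$ envy agent~$3$).
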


\begin{proof}
Consider the instance at the right of Table~\ref{tab:ef-vs-ief-prop}. In this instance, allocation $a$-$c$-$b$ is the only proportional allocation. Hence, by Lemma~\ref{lem:ief-implies-prop}, to show that no iEF lottery exists, it suffices to consider only the (lottery that deterministically returns) allocation $a$-$c$-$b$. In this allocation, agents $1$ and $2$ are envious of agent $3$, contradicting the iEF requirement.
\end{proof}

Our next lemma quantifies the disparity between envy-freeness and interim envy-freeness; the proof exploits Lemma~\ref{lem:prop-no-ief}. 

\begin{lemma}\label{lem:max-envy}
The maximum envy at any allocation in the support of an iEF lottery, when the agent valuations are normalized, can be as high as $1-\frac{2}{n}$ and this is tight.
\end{lemma}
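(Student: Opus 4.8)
The plan is to prove the two directions separately: first the upper bound that no allocation in the support of an iEF lottery can exhibit envy exceeding $1-\frac{2}{n}$, and then a construction showing this value is attained.

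For the upper bound I would invoke Lemma~\ref{lem:ief-implies-prop} directly. Fix an iEF lottery $\Q$, an allocation $A$ in its support, and a pair of agents $i,k$, where the envy of $i$ toward $k$ is $v_i(A_k)-v_i(A_i)$. Proportionality guarantees $v_i(A_i)\ge \frac{1}{n}v_i(\items)=\frac{1}{n}$ under normalized valuations. Since $A_k\subseteq \items\setminus A_i$ and valuations are non-negative and additive, $v_i(A_k)\le v_i(\items\setminus A_i)=1-v_i(A_i)$. Hence the envy is at most $1-2v_i(A_i)\le 1-\frac{2}{n}$, which is exactly the claimed bound; this half is immediate once proportionality is in hand.

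For tightness I would generalize the instance of Lemma~\ref{lem:ief-vs-ef} to $n$ agents. Consider a matching instance with items $a_1,\dots,a_n$ where agent $1$ values $a_1$ at $\frac{1}{n}$ and $a_2$ at $\frac{n-1}{n}$ (and $0$ elsewhere), while each of agents $2,\dots,n$ values every item in $\{a_2,\dots,a_n\}$ at $\frac{1}{n-1}$ (and $a_1$ at $0$); all valuations are normalized. Let $\Q$ be the uniform lottery over the $n-1$ cyclic matchings in which agent $1$ always receives $a_1$ and agents $2,\dots,n$ are matched to $a_2,\dots,a_n$ through the $n-1$ cyclic shifts, so that item $a_2$ reaches each of agents $2,\dots,n$ with probability $\frac{1}{n-1}$. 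I would then check that $\Q$ is iEF: agents $2,\dots,n$ always receive value $\frac{1}{n-1}$, which is the maximum value they assign to any item, so they are ex-post envy-free and their iEF constraints hold trivially; and since agent $1$ always receives $a_1$, her iEF constraint toward any $k$ reduces to $v_1(a_1)\ge \E_{A\sim\Q}[v_1(A_k)]=\frac{n-1}{n}\cdot\frac{1}{n-1}=\frac{1}{n}=v_1(a_1)$, holding with equality. Finally, in the matching where agent $2$ gets $a_2$, the envy of agent $1$ toward agent $2$ is $v_1(a_2)-v_1(a_1)=\frac{n-1}{n}-\frac{1}{n}=1-\frac{2}{n}$, matching the upper bound.

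I expect the main obstacle to be the tightness construction rather than the bound: one must spread the heavily-envied item $a_2$ across agents $2,\dots,n$ uniformly enough that agent $1$'s \emph{averaged} iEF constraint is met (here with equality), while simultaneously keeping all other agents non-envious. The symmetric valuations among agents $2,\dots,n$ together with the uniform cyclic lottery are precisely what reconcile these two requirements at once, whereas the upper-bound direction follows essentially for free from proportionality.
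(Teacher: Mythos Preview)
Your proposal is correct and follows essentially the same approach as the paper: the upper bound via proportionality (Lemma~\ref{lem:ief-implies-prop}) and the tightness via the same $n$-agent matching instance with agent~$1$ fixed on item~$a_1$ and agents $2,\dots,n$ receiving $a_2,\dots,a_n$ uniformly. The only cosmetic differences are that the paper phrases the upper bound slightly more tersely and uses a uniform lottery over all permutations of $a_2,\dots,a_n$ rather than just the $n-1$ cyclic shifts, but either choice yields the same marginals and hence the same iEF verification.
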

\begin{proof}
Consider the matching instance of Table~\ref{tab:max-envy} with $n$ agents/items.
First, observe that an iEF lottery should always give item $g_1$ to agent $1$. Indeed, any other agent has value $0$ for $g_1$ and could not satisfy the iEF condition if she got it. Hence, in any allocation in the support of an iEF (if one exists), agent 1 has envy $1-\frac{2}{n}$ for the agent who gets item $g_2$. It remains to show that such an iEF lottery does exist. Indeed, any lottery which gives item $g_1$ to agent $1$ and in which the agents $2, 3, ..., n$ are assigned uniformly at random the items $g_2, g_3, ..., g_n$ is iEF. The agents 2, 3, ..., $n$ are clearly not envious of each other or of agent $1$, as each of them gets the maximum possible value. Agent $1$ has expected value $1/n$ for the bundle of each of the remaining agents, which is equal to her value for the item ($g_1$) she gets.

\begin{table}[h!]
\centering
\caption{The instance used in the proof of Lemma~\ref{lem:max-envy}.\label{tab:max-envy}}
{\begin{tabular}{c| c cccc}
%\toprule
\hline
&$g_1$& $g_2$&$g_3$&$\ldots$&$g_n$\\
%\midrule
\hline
$1$&$1/n$&$(n-1)/n$&0&$\ldots$&0\\
$2$&0&$1/(n-1)$&$1/(n-1)$&$\ldots$&$1/(n-1)$\\
$\vdots$&$\vdots$&$\vdots$&$\vdots$&$\vdots$&$\vdots$\\
$n$&0&$1/(n-1)$&$1/(n-1)$&$\ldots$&$1/(n-1)$\\
%\bottomrule
\hline
\end{tabular}}
{}
\end{table}

To see that $1-2/n$ is the maximum agent envy in any allocation in the support of an iEF lottery, recall that Lemma~\ref{lem:ief-implies-prop} implies that the value of each agent $i$ is at least $1/n$ and, then, the average value $i$ has for the bundle allocated to another agent $k$ cannot exceed $1-1/n$.
\end{proof}
%In Appendix~\ref{app:sec:ief-vs-eef-mms}, we compare iEF to two more properties, namely min-max share fairness and epistemic envy-freeness.
We now compare iEF to two fairness properties for deterministic allocations: min-max fair-share (mFS) and epistemic envy-freeness (EEF). mFS and EEF were considered by ~\citet{BL16} and ~\citet{ABC+18}, respectively, who characterized instances in terms of the properties of the allocations they admit.

Given an allocation instance with a set of agents $\agents$, with valuations $\val$, for the set of items $\items$, let $\mathcal{A}$ be the set of all allocations of the items to the agents. Denote by $\tau_i$ the min-max share of agent $i$, which is defined as
$$\tau_i=\min_{A\in \mathcal{A}}{\max_{j\in \agents}{v_i(A_j)}}.$$
In words, the min-max share $\tau_i$ is the minimum over all allocations of the maximum value agent $i$ has for some bundle. An allocation $A=(A_1, A_2, ..., A_n)$ is mFS (or satisfies the min-max fair-share criterion) if $v_i(A_i)\geq \tau_i$ for every agent $i\in \agents$. mFS should not be confused with maximin share (MMS) fairness, a weaker-than-proportionality fairness property that has received much attention recently~\citep*{KPW18}; see also~\citep*{BL16} for a taxonomy of many fairness properties that we have mentioned here.

An allocation $A=(A_1, A_2, ..., A_n)$ is EEF if for every agent $i\in\agents$ there exists an allocation $B\in \mathcal{A}$ with $B_i=A_i$ such that $v_i(A_i)\geq v_i(B_j)$ for every agent $j\in \agents$. In words, there is a redistribution of the items that agent $i$ does not get to the other agents, so that agent $i$ is not envious.

By this definition, it is clear that an instance that admits an EF allocation also admits an EEF allocation~\citep*{ABC+18}. Any EEF allocation is also mFS~\citep*{ABC+18}, while any mFS allocation is proportional~\citep*{BL16}. These implications are strict~\citep*{ABC+18,BL16}. There are instances that admit proportional allocations but have no mFS allocation, instances with an mFS allocation but no EEF allocation, and instances with an EEF allocation but with no EF allocation.

Where does iEF stand in this hierarchy of properties? So far, we have shown that EF implies iEF which in turn implies proportionality. We have furthermore seen (in Lemmas~\ref{lem:ief-vs-ef} and~\ref{lem:prop-no-ief}) that these implications are strict. Interestingly, iEF is incomparable to both mFS and EEF, as the following two statements show.

\begin{lemma}\label{lem:ief-no-eef-mms}
There exist matching instances with an iEF lottery that do not admit any mFS (and, subsequently, any EEF) allocation.
\end{lemma}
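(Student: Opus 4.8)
The plan is to reuse the very instance from Lemma~\ref{lem:ief-vs-ef} (the left table in Table~\ref{tab:ef-vs-ief-prop}), for which an iEF lottery has already been exhibited and verified. It then remains only to argue that this matching instance admits no mMS allocation; the lack of an EEF allocation follows immediately, since any EEF allocation is mMS. So the entire burden of the proof is to show non-existence of an mMS matching.

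The conceptual key is a structural observation about the min-max share in matching instances. I would first argue that, when the number of items equals the number of bundles, the quantity $\tau_i=\min_{A\in\mathcal{A}}\max_{j\in\agents}v_i(A_j)$ equals agent $i$'s maximum single-item value $\max_{t\in\items}v_i(t)$. The reason is that, over all partitions of the $n$ items into $n$ bundles, the maximum bundle value is minimized by the allocation that isolates each item in its own bundle: grouping any two items into one bundle can only increase the maximum bundle value, while no bundle can be made worth less than its single most-valued item. Consequently, in a matching instance the mMS condition $v_i(A_i)\geq\tau_i$ demands that every agent receive one of her most-valued items, which is exactly the envy-freeness condition for matchings. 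Thus mMS and EF coincide here, and since Lemma~\ref{lem:ief-vs-ef} already shows that the instance admits no EF allocation, it admits no mMS (and hence no EEF) allocation, completing the argument.

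For concreteness I would also spell out the direct verification on the instance. Here $\tau_1=2/3$, attained only by item $b$, while $\tau_2=\tau_3=1/2$, attained by either $b$ or $c$. An mMS allocation would force agent $1$ to take $b$; then agents $2$ and $3$ must be served from $\{a,c\}$, but only $c$ meets their threshold of $1/2$ (item $a$ is worth $0$ to them), so at most one of them can be satisfied and the other is left with value $0<1/2$. Hence no mMS matching exists, which matches the conclusion obtained via the general characterization.

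The main obstacle is purely the characterization of $\tau_i$ in matching instances: one must be careful that $\mathcal{A}$ ranges over \emph{all} allocations (partitions), not only matchings, and justify that the minimizing partition isolates each item, so that $\tau_i=\max_{t\in\items}v_i(t)$. Once this is in place the equivalence mMS $\Leftrightarrow$ EF (for matchings) makes the lemma essentially immediate from Lemma~\ref{lem:ief-vs-ef}; the rest is the short direct check above.
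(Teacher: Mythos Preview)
Your proof is correct and follows essentially the same approach as the paper: both use the instance from Lemma~\ref{lem:ief-vs-ef}, compute the min-max shares $\tau_1=2/3$, $\tau_2=\tau_3=1/2$, and verify directly that no allocation can meet all three thresholds simultaneously. Your additional structural observation that $\tau_i=\max_{t\in\items}v_i(t)$ in matching instances (so that mMS coincides with EF there, reducing the claim to Lemma~\ref{lem:ief-vs-ef}) is a clean point the paper does not make explicit, but it is not needed beyond the direct verification you already give.
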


\begin{proof}
The proof uses the instance at the left of Table~\ref{tab:ef-vs-ief-prop} which does have an iEF as explained in the proof of Lemma~\ref{lem:ief-vs-ef}. By definition, in any mFS (and, by the implication shown in~\cite{ABC+18}, any EEF) allocation, the value of each agent should be at least her mFS share. The corresponding shares are $2/3$, $1/2$, and $1/2$ for agents $1$, $2$, and $3$, respectively. We can easily verify that no such allocation exists.
\end{proof}

\begin{lemma}\label{lem:eef-no-ief}
There exist allocation instances with an EEF (and, subsequently, an mFS) allocation but with no iEF lottery.
\end{lemma}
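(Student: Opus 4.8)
The plan is to exhibit a single (non-matching) instance with three agents and four items $a,b_1,b_2,c$ that admits an EEF allocation yet no iEF lottery. Since EEF implies proportionality, and in any matching instance EEF coincides with EF (an agent holding one item can only spread the remaining $n-1$ items one per agent, so her epistemic redistribution gains nothing), the separating instance must have more items than agents. I would take $v_1(a)=v_1(b_1)=v_1(b_2)=1/3$ and $v_1(c)=0$, and make agents $2$ and $3$ identical with $v_i(c)=0.38$, $v_i(b_1)=v_i(b_2)=0.3$, and $v_i(a)=0.02$. The intended EEF allocation is $A^*=(\{a\},\{c\},\{b_1,b_2\})$.

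First I would verify that $A^*$ is EEF but not EF. It is not EF because both agent $1$ (value $1/3$) and agent $2$ (value $0.38$) strictly envy agent $3$, who holds $\{b_1,b_2\}$, worth $2/3$ and $0.6$ to them respectively. It is EEF because each envious agent can redistribute the items she does not hold into bundles she values at most as much as her own: agent $1$ can split $\{b_1,b_2,c\}$ as $\{b_1\}$ and $\{b_2,c\}$, each worth $1/3$ to her, and agent $2$ can split $\{a,b_1,b_2\}$ as $\{a,b_1\}$ (worth $0.32$) and $\{b_2\}$ (worth $0.3$), both at most $v_2(\{c\})=0.38$; agent $3$ is not envious at all. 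Since EEF implies mMS, this also delivers the mMS part of the statement.

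The crux is to show there is no iEF lottery, and here I would invoke Lemma~\ref{lem:ief-implies-prop}: every allocation in the support of an iEF lottery is proportional. The main technical step is therefore to prove that the instance has \emph{exactly} two proportional allocations, namely $R_1=(\{a\},\{c\},\{b_1,b_2\})$ and $R_2=(\{a\},\{b_1,b_2\},\{c\})$. This follows from a short case analysis on the owner of $c$: the chosen values make each of agents $2,3$ proportional only if her bundle contains $c$ or contains both $b_1$ and $b_2$ (crucially, $\{a,b_1\}$ and $\{a,b_2\}$ are each worth $0.32<1/3$), so $c$ must go to one of agents $2,3$ and $\{b_1,b_2\}$ must go together to the other, after which agent $1$'s proportionality forces $a$ into her own bundle.

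Given this, the argument is immediate and robust to averaging. In $R_1$ agent $2$ receives the bundle $\{c\}$, which she receives in \emph{no other} proportional allocation; hence any iEF lottery $\Q$ placing positive probability on $R_1$ satisfies $\E_{A\sim\Q}[v_2(A_3)\mid A_2=\{c\}]=v_2(\{b_1,b_2\})=0.6>0.38=v_2(\{c\})$, violating the iEF condition. Symmetrically, $R_2$ is blocked by agent $3$'s bundle $\{c\}$. Thus neither $R_1$ nor $R_2$ can lie in the support of an iEF lottery, so no iEF lottery exists. The main obstacle is precisely the \emph{exactness} of the enumeration of proportional allocations: the whole argument hinges on each blocking agent's envied bundle being locally unique, so that no mixture can average the envy away (as it can when, say, the envied item is split symmetrically among the other agents). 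The valuations must therefore be tuned so that proportionality is rigid enough to leave only $R_1,R_2$, while still being loose enough that the envy in $A^*$ remains epistemically splittable.
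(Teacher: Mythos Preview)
Your proof is correct and follows essentially the same approach as the paper: exhibit a non-matching instance with three agents and four items that admits an EEF allocation, then use Lemma~\ref{lem:ief-implies-prop} to pin down exactly two proportional allocations and observe that in each one the agent holding the singleton bundle from the identical pair deterministically envies the agent holding the doubleton, so no mixture can help. Your instance is in fact structurally isomorphic to the paper's (swap the roles of agents $1,2$ with $2,3$ and items $a,d$ with $c,a$); the numerical values differ slightly but the mechanism is identical.
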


\begin{proof}
Consider the instance shown in Table~\ref{tab:eef-no-ief} and note that this is our only example with a non-matching instance. This is necessary as it can be easily seen that, in matching instances, EF is equivalent to EEF; consequently, EEF implies iEF in these instances.

We first observe that the allocation in which agent $1$ gets item $a$, agent $2$ gets items $b$ and $c$, and agent $3$ gets item $d$ is EEF. Indeed, agent $2$ is clearly non-envious in this distribution. Agent $1$ does not envy the other agents in the redistribution where agent $2$ gets item $b$ and agent $3$ gets items $c$ and $d$. Similarly, agent $3$ does not envy the other agents in the redistribution where agent $2$ gets item $c$ and agent $1$ gets items $a$ and $b$.

\begin{table}[hbt]
\centering
\caption{The instance used in the proof of Lemma~\ref{lem:eef-no-ief}. \label{tab:eef-no-ief}}
{\begin{tabular}{c| cccc}
%\toprule
\hline
\phantom{a}&\phantom{a}$a$&\phantom{a}$b$&\phantom{a}$c$&\phantom{a}$d$\\
%\midrule
\hline
\phantom{a}$1$&\phantom{a}$4/10$&\phantom{a}$3/10$&\phantom{a}$3/10$&\phantom{a}$0$\\
\phantom{a}$2$&\phantom{a}$4/10$&\phantom{a}$3/10$&\phantom{a}$3/10$&\phantom{a}$0$\\
\phantom{a}$3$&\phantom{a}$0$&\phantom{a}$3/10$&\phantom{a}$3/10$&\phantom{a}$4/10$\\
%\bottomrule
\hline
\end{tabular}}
{}
\end{table}

We now show that no iEF lottery exists. As the proportionality threshold is $1/3$ (observe that valuations are normalized), there are only two proportional allocations: $(\{a\},\{b,c\},\{d\})$ and $(\{b,c\},\{a\},\{d\})$. For any lottery over these two allocations, among agents 1 and 2, the one who gets item $a$ is envious of the other who gets items $b$ and $c$.
\end{proof}
\section{Interim envy-freeness vs. efficiency}\label{sec:ief-vs-efficiency}
We now explore tradeoffs between interim envy-freeness and efficiency. Two well-studied refinements of Pareto-efficiency are relevant for lotteries of allocations: ex-ante and ex-post Pareto-efficiency. A lottery $\Q$ over allocations is ex-ante Pareto-efficient if there exists no other lottery $\Q'$ such that $\E_{A\sim\Q'}[v_i(A_i)]\geq \E_{A\sim\Q}[v_i(A_i)]$ for every agent $i\in\agents$, with the inequality being strict for at least one agent of $\agents$. A lottery is ex-post Pareto-efficient if all allocations in its support are Pareto-efficient. It is well-known that ex-ante Pareto-efficiency implies ex-post Pareto-efficiency. 

For allocation instances with two agents, the allocations in the support of an iEF lottery are envy-free and, thus (as observed in Section~\ref{sec:prelim}), Pareto-efficient. This is due to the fact that any allocation in the support of an iEF lottery is proportional (by Lemma~\ref{lem:ief-implies-prop}) and hence envy-free, since there are only two agents. This implies that an iEF lottery is ex-post and ex-ante Pareto-efficient (it can be easily seen that ex-post and ex-ante Pareto-efficiency coincide for matching allocation instances with two agents; this is not true in general). In the following, we show that this may not be the case in instances with more agents. %The proof of the next theorem appears in Appendix~\ref{app:sec:thm:po-3}.

\begin{theorem}\label{thm:po-3}
There exist matching instances with $n\geq 3$ agents in which no iEF lottery is ex-post (and, consequently, ex-ante) Pareto-efficient.
\end{theorem}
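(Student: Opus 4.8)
The plan is to exhibit a single matching instance with $n=3$ agents (the extension to larger $n$ following by padding with agents who have a unique most-valued item) and argue that \emph{every} iEF lottery must place positive probability on an allocation that is not Pareto-efficient. By Lemma~\ref{lem:ief-implies-prop}, I only ever have to reason about proportional allocations, so the first step is to design valuations so that the set of proportional matchings is small and explicitly enumerable. The key is to engineer the instance so that among the proportional matchings, the ones that are Pareto-efficient fail iEF (they create an unavoidable envy that cannot be averaged away), while the only matchings that can sustain iEF are themselves Pareto-dominated. Concretely, I would look for an instance where one agent, say agent $1$, is forced to receive a fixed item in every support allocation of any iEF lottery (as in the argument of Lemma~\ref{lem:max-envy}, where an item wanted by nobody else pins down the allocation), and where the remaining randomization needed to satisfy agent $1$'s iEF constraint against the others necessarily includes a Pareto-dominated matching.

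The main structural observation I would exploit is the contrast already visible in Lemma~\ref{lem:ief-vs-ef}: there, the iEF lottery over $a$-$b$-$c$ and $a$-$c$-$b$ works precisely because agents $2$ and $3$ are indifferent between $b$ and $c$, so swapping their items costs nothing in welfare. To force non-Pareto-efficiency I want to break that indifference: arrange the valuations so that in one of the two support matchings needed to balance agent $1$'s expected envy, some agent strictly prefers the bundle of another agent in the \emph{same} support matching, and moreover a single deterministic reallocation (a swap) strictly improves every such agent while weakly preserving the rest. Then that support allocation is not Pareto-efficient, yet it cannot be dropped from the lottery without violating agent $1$'s iEF condition, because agent $1$'s value for her fixed item exactly equals her conditional expected value for what the others receive only when both matchings appear with the prescribed probabilities.

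The steps, in order, are: (i) write down the $3\times 3$ (or $n\times n$) valuation table and verify by inspection which matchings are proportional, invoking Lemma~\ref{lem:ief-implies-prop} to restrict attention to these; (ii) show that the iEF constraints for the pinned agent (agent $1$) can only be met by a genuine mixture, i.e., no single proportional matching is itself envy-free, so every iEF lottery has at least two matchings in its support; (iii) identify, for each candidate support set, a specific matching in it that is Pareto-dominated via an explicit alternative allocation, and verify the dominating allocation gives weakly more value to all three agents and strictly more to at least one; (iv) conclude that every iEF lottery is supported on a non-Pareto-efficient matching, hence is not ex-post Pareto-efficient, and invoke the stated implication that ex-ante Pareto-efficiency implies ex-post Pareto-efficiency to get the parenthetical consequence.

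The hard part will be step (iii) combined with (ii): I must guarantee \emph{simultaneously} that the Pareto-dominated matching cannot be excised from the support of \emph{any} iEF lottery, not just the obvious one. This requires characterizing the full polytope of iEF lotteries for the instance and checking that the problematic matching receives positive weight in all of them — equivalently, that the proportional matchings which are Pareto-efficient are too few or too envy-laden to balance agent $1$'s conditional expectation on their own. The cleanest route is to make the proportionality constraint so tight that only two or three matchings survive, reducing the verification to a small finite case analysis on the probabilities (a one- or two-parameter family), where I can solve agent $1$'s equality condition explicitly and read off that it demands positive mass on the dominated matching. Care is needed to ensure the instance truly admits \emph{some} iEF lottery (otherwise the statement is vacuous for that instance), so I would confirm existence by displaying one explicit iEF lottery before arguing that all of them fail Pareto-efficiency.
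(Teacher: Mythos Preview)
Your plan is correct and matches the paper's approach almost exactly: the paper exhibits a $3\times 3$ instance, uses Lemma~\ref{lem:ief-implies-prop} to cut the support down to three proportional matchings ($a$-$b$-$c$, $a$-$c$-$b$, $b$-$c$-$a$), shows by a short case analysis that the \emph{unique} iEF lottery is the uniform one over all three, and observes that $a$-$c$-$b$ is Pareto-dominated by each of the other two. One small deviation worth flagging: your heuristic of ``pinning'' agent~$1$ to a fixed item (as in Lemma~\ref{lem:max-envy}) is not how the paper's instance works---there agent~$1$ receives item $a$ in two support matchings and item $b$ in the third---so the argument that the dominated matching cannot be excised proceeds not via a single conditional-expectation equality for agent~$1$, but by checking, for each of the three matchings in turn, that dropping it breaks some agent's iEF constraint, and then that unequal weights also break iEF for agent~$1$ or agent~$2$.
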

\begin{proof}
Consider the matching instance of Table~\ref{tab:no-po}. We will show that the only iEF lottery $\Q$ returns the allocations $a$-$b$-$c$, $a$-$c$-$b$, and $b$-$c$-$a$ equiprobably. Observe that allocation $a$-$c$-$b$ (where all agents have value $1/3$) is Pareto-dominated by both $a$-$b$-$c$ and $b$-$c$-$a$ (in which two agents get value $1/3$ and another agent gets value $2/3$) and, hence, $\Q$ is not ex-post Pareto-efficient. 

\begin{table}[h]
\centering
\caption{The instance in the proof of Theorem~\ref{thm:po-3}. \label{tab:no-po}}
{\begin{tabular}{c| ccc}
%\toprule
\hline
\phantom{aa}&\phantom{aa}$a$&\phantom{aa}$b$&\phantom{aa}$c$\\
%\midrule
\hline
\phantom{aa}$1$&\phantom{aa}$1/3$&\phantom{aa}$2/3$&\phantom{aa}$0$\\
\phantom{aa}$2$&\phantom{aa}$0$&\phantom{aa}$2/3$&\phantom{aa}$1/3$\\
\phantom{aa}$3$&\phantom{aa}$1/3$&\phantom{aa}$1/3$&\phantom{aa}$1/3$\\
%\bottomrule
\hline
\end{tabular}}
{}
\end{table}

It remains to show that $\Q$ is the only iEF lottery. First, observe that allocations $a$-$b$-$c$, $b$-$c$-$a$, and $a$-$c$-$b$ are the only proportional allocations and, by Lemma~\ref{lem:ief-implies-prop}, the only ones that can be used in the support of another iEF lottery $\Q''$, assuming that one exists. Now, assume that $\Q''$ does not have one of these allocations in its support. We distinguish between three cases:
\begin{itemize}
\item If $a$-$b$-$c$ is not part of the support, then agent $3$ gets item $b$ whenever agent $1$ gets item $a$. The iEF condition is violated for agent $1$ with respect to agent $3$ and bundle $\{a\}$.
\item If $b$-$c$-$a$ is not part of the support, then agent $3$ gets item $b$ whenever agent $2$ gets item $c$. The iEF condition is violated for agent $2$ with respect to agent $3$ and bundle $\{c\}$.
\item If $a$-$c$-$b$ is not part of the support, then agent $2$ gets item $b$ whenever agent $1$ gets item $a$. The iEF condition is violated for agent $1$ with respect to agent $2$ and bundle $\{a\}$.
\end{itemize}
Hence, the three allocations $a$-$b$-$c$, $b$-$c$-$a$, and $a$-$c$-$b$ all appear in the support of $\Q''$. Since $\Q''$ is different than $\Q$, we distinguish between two cases. First, assume that the allocations $a$-$b$-$c$ and $a$-$c$-$b$ have different probabilities in $\Q''$. Then, whenever agent $1$ gets item $a$, some of agents $2$ and $3$, call her $i$, gets item $b$ with probability strictly higher than $1/2$. Hence, the expected value of agent $1$ for the bundle of agent $i$ is strictly higher than $1/3$ and the iEF condition is violated for the pair of agents $1$, $i$ and bundle $\{a\}$. The other case in which the allocations $b$-$c$-$a$ and $a$-$c$-$b$ have different probabilities in $\Q''$ is symmetric. We conclude that $\Q''$ is not iEF.

We complete the proof by showing that the lottery $\Q$ is indeed iEF. Agent $3$ is never envious. The iEF condition for agent $1$ (or agent $2$) clearly holds when she gets item $b$ (since her value for item $b$ is the highest). When agent $1$ gets item $a$ (respectively, when agent $2$ gets item $c$), she has a value of $1/3$. Then, each of the other two agents $2$ and $3$ (respectively, $1$ and $3$) gets equiprobably items $b$ and $c$ (respectively, $a$ and $b$). The expected value agent $1$ (respectively, agent $2$) has for the item allocated to agent $2$ or agent $3$ (respectively to agent $1$ or agent $3$) is equal to $1/3$, and the iEF condition holds with equality.
\end{proof}

To assess the impact of fairness in random allocations to social welfare, we need to extend the price of fairness definition to lotteries. We do so implicitly here by defining the price of iEF (one can similarly define, e.g., the price of ex-ante envy-freeness). We say that the price of iEF with respect to a social welfare measure is the worst-case ratio over all allocation instances with at least one iEF lottery, of the optimal social welfare in the instance over the expected social welfare of the best iEF lottery (where ``best'' is defined with respect to this social welfare measure).

In our next theorems, we bound the price of iEF with respect to different social welfare notions. In the proof of our upper bounds, we consider normalized valuations. This is a typical assumption in the related literature as well, e.g., see~\citep*{CKKK12}. This assumption is not necessary for average Nash social welfare.

\begin{theorem}\label{thm:poief-upper}
The price of iEF with respect to the utilitarian, egalitarian, and average Nash social welfare is at most $n$, when the agent valuations are normalized.
\end{theorem}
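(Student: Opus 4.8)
The plan is to sandwich both quantities appearing in the price-of-iEF ratio using two elementary facts: normalization caps the optimal social welfare from above, while Lemma~\ref{lem:ief-implies-prop} forces a per-allocation lower bound on every agent's value throughout the support of any iEF lottery, which caps its expected social welfare from below. Concretely, for normalized valuations we have $v_i(A_i)\le v_i(\items)=1$ for every agent $i\in\agents$ and every allocation $A$, whereas proportionality gives $v_i(A_i)\ge \frac1n v_i(\items)=\frac1n$ for every agent $i$ in every allocation in the support of an iEF lottery $\Q$.

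First I would bound the numerator, the optimal social welfare $\opt$, for each objective. Summing $v_i(A_i)\le 1$ over all agents gives $\util(A)\le n$; a single agent already gives $\egal(A)=\min_i v_i(A_i)\le 1$; and the product bound gives $\avnash(A)=\left(\prod_i v_i(A_i)\right)^{1/n}\le 1$. Hence $\opt$ is at most $n$, $1$, and $1$ for the utilitarian, egalitarian, and average Nash objectives, respectively. Next I would lower-bound the denominator; it suffices to show that \emph{every} iEF lottery $\Q$ already attains the bound, since then so does the best one. For each allocation $A$ in the support of $\Q$, proportionality yields $\util(A)=\sum_i v_i(A_i)\ge n\cdot\frac1n=1$, $\egal(A)\ge\frac1n$, and $\avnash(A)\ge\left(\left(\frac1n\right)^n\right)^{1/n}=\frac1n$. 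Taking expectations (linearity in the utilitarian case, monotonicity of expectation in the others since each per-allocation bound holds surely) gives $\E_{A\sim\Q}[\util(A)]\ge 1$ and $\E_{A\sim\Q}[\egal(A)],\E_{A\sim\Q}[\avnash(A)]\ge \frac1n$. Dividing the numerator bounds by these yields ratios $n/1$, $1/(1/n)$, and $1/(1/n)$, all equal to $n$.

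The only place that needs care, and the closest thing to an obstacle, is the handling of the two nonlinear objectives: one cannot replace $\E[\min_i v_i(A_i)]$ by $\min_i \E[v_i(A_i)]$, nor the expected geometric mean by the geometric mean of the expectations. The proposal sidesteps this entirely by exploiting that proportionality is an ex-post (per-allocation) guarantee, so the bound $v_i(A_i)\ge 1/n$ holds surely and can be pushed through any monotone per-allocation welfare function \emph{before} the expectation is taken. Finally, for average Nash I would note that scaling agent $i$'s valuations by a constant $c_i$ multiplies every allocation's $\avnash$, and hence the expected $\avnash$ of every lottery as well as $\opt$, by the same factor $\left(\prod_i c_i\right)^{1/n}$; the ratio is therefore scale-invariant, which justifies normalizing without loss of generality and explains why the normalization hypothesis is dispensable for this objective.
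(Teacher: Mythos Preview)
Your proposal is correct and follows essentially the same approach as the paper's proof: upper-bound the optimal via normalization and lower-bound any iEF lottery per allocation via Lemma~\ref{lem:ief-implies-prop}, then take ratios. Your additional remarks---that the ex-post nature of proportionality sidesteps the nonlinearity of $\egal$ and $\avnash$, and that scale-invariance makes the normalization hypothesis redundant for $\avnash$---are correct elaborations that the paper either leaves implicit or mentions without proof.
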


\begin{proof}
The proof follows by Lemma~\ref{lem:ief-implies-prop}, which implies that the valuation of each agent in any allocation in the support of an iEF lottery is at least $1/n$. Then, the utilitarian, egalitarian, and average Nash social welfare is at least $1$, $1/n$, and $1/n$, respectively, while the corresponding optimal values are at most $n$, $1$, and $1$, respectively.
\end{proof}

The next two statements indicate that our price of iEF upper bounds with respect to utilitarian and egalitarian social welfare are asymptotically tight.

\begin{theorem}\label{thm:poief-util-lower}
The price of iEF with respect to the utilitarian social welfare is at least $\Omega(n)$.
\end{theorem}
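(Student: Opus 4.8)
The plan is to exhibit a single matching instance for which the optimal utilitarian welfare is $\Theta(n)$ while \emph{every} iEF lottery has welfare $\Theta(1)$, so their ratio is $\Omega(n)$. I would take $n=2r$ agents, split into $r$ \emph{controllers} $\gamma_1,\dots,\gamma_r$ and $r$ \emph{dumps} $\delta_1,\dots,\delta_r$, and $2r$ items, split into $r$ \emph{prizes} $p_1,\dots,p_r$ and $r$ \emph{locks} $\ell_1,\dots,\ell_r$. Controller $\gamma_k$ values only its lock $\ell_k$ (at a small value $c=\Theta(1/n)$) and its prize $p_k$ (at a large value $V=\Theta(1)$), tuned so that $V=(n/2)\,c$; each dump values every prize equally at a value $\Theta(1/n)$ exceeding its proportionality threshold, and values every lock at $0$. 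The crucial design feature is that the locks are \emph{private}: $\ell_k$ is worth $0$ to everyone but $\gamma_k$.

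The first key step is to show that every proportional matching has the same, low welfare. Since $\ell_k$ is worthless to every agent except $\gamma_k$ and each agent receives a single item in a matching, any other agent assigned $\ell_k$ would fall below its (strictly positive) proportionality threshold; hence $\ell_k$ must be matched to $\gamma_k$. Thus in \emph{every} proportional matching all controllers are locked to their locks (value $c$) and all prizes go to the dumps (value $\Theta(1/n)$ to their recipients), so the utilitarian welfare is exactly $rc+r\cdot\Theta(1/n)=\Theta(1)$. By Lemma~\ref{lem:ief-implies-prop} every iEF lottery is supported on proportional matchings, so every iEF lottery also has welfare $\Theta(1)$; notably, this upper bound on iEF welfare uses \emph{only} the proportionality consequence of iEF. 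By contrast, the matching assigning each prize $p_k$ to its controller $\gamma_k$ has welfare $rV=\Theta(n)$, giving $\opt=\Theta(n)$; this matching is deliberately \emph{not} proportional, since the dumps, stripped of the prizes, are left with worthless locks.

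It remains to certify that the instance actually admits an iEF lottery, so that the price of iEF is well defined, and this is the step I expect to be the \textbf{main obstacle} --- it is exactly where the balance $V=(n/2)\,c$ is forced. I would take the lottery that locks every controller to its lock and distributes the $r$ prizes uniformly among the $r$ dumps. Verifying iEF then reduces to two checks. For a dump the constraint holds with equality: conditioned on its own prize, every other dump holds an equally valued prize and every controller a worthless lock. For a controller $\gamma_k$, conditioned on receiving $\ell_k$ (value $c$), the only bundle it values positively is one holding its coveted prize $p_k$; under the uniform assignment each dump holds $p_k$ with probability $2/n$, so the conditional expectation equals $V\cdot(2/n)=c$, meeting the iEF bound exactly. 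This makes transparent why the covet value cannot be too large relative to the number of dumps available to absorb a prize: the controllers' conditional-expectation constraints are satisfiable precisely because $V/c\le r$, and driving $V$ up to $\Theta(1)$ with $c=\Theta(1/n)$ (the most the proportionality threshold allows) is what opens the $\Theta(n)$ gap. Combining the three steps yields $\opt=\Theta(n)$, best iEF welfare $\Theta(1)$, and hence price of iEF $\Omega(n)$.
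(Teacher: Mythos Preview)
Your proposal is correct and is essentially the paper's own construction: your controllers, dumps, prizes, and locks correspond exactly to the paper's agents $1,\dots,k$, agents $k{+}1,\dots,2k$, items $1,\dots,k$, and items $k{+}1,\dots,2k$, with the same ratio $V/c=n/2$ and the same uniform-over-dumps iEF lottery. The only cosmetic difference is that the paper gives the dumps a tiny positive value $\tfrac{1}{2k}-\epsilon$ for the locks (to keep valuations normalized) instead of your~$0$; the proportionality argument via Lemma~\ref{lem:ief-implies-prop} and the verification of the iEF lottery are otherwise identical.
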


\begin{proof}
Let $\epsilon>0$ be negligibly small and $k\geq 2$ be an integer. We use the following matching instance with $n=2k$ agents and items. For $i=1, 2, ..., k$, agent $i$ has value $\frac{k}{k+1}$ for item $i$, value $\frac{1}{k+1}$ for item $i+k$, and value $0$ for any other item. For $i=k+1, k+2, ..., 2k$, agent $i$ has value $\frac{1}{2k}+\epsilon$ for items $1, 2, ..., k$ and value $\frac{1}{2k}-\epsilon$ for items $k+1, ..., 2k$.

An optimal allocation has utilitarian social welfare (at least) $\frac{k^2}{k+1}+\frac{1}{2}-k\epsilon$. To see why, consider the allocation in which agent $i$ gets item $i$ for $i=1, 2, ..., 2k$. We now claim that no iEF lottery $\Q$ over allocations has welfare higher than $\frac{k}{k+1}+\frac{1}{2}+k\epsilon$. The lower bound on the price of iEF will follow by the relation between $n$ and $k$ (and by taking $\epsilon$ to be sufficiently small). Indeed, by Lemma~\ref{lem:ief-implies-prop}, the support of $\Q$ should consist of allocations in which agents $k+1, k+2, ..., 2k$ get items $1, 2, ..., k$ for a total value of $\frac{1}{2}+k\epsilon$. Then, the maximum value each of the agents $1, 2, ..., k$ gets from the items $k+1, ..., 2k$ is $\frac{1}{k+1}$.

It remains to present such a lottery $\Q$. It suffices to assign item $i+k$ to agent $i$ for $i=1, 2, ..., k$ and assign uniformly at random the items $1, 2, ..., k$ to the agents $k+1, ..., 2k$. Clearly, agents $k+1, ..., 2k$ are not envious. For $i=1, ..., k$, agent $i$ has value $\frac{1}{k+1}$. Her expected value for the item of another agent $\ell$ is $0$ if $\ell$ is one of the $k$ first agents besides $i$ and $\frac{1}{k+1}$ if $\ell$ is one of the $k$ last agents. Notice that in the latter case, agent $\ell$ gets item $i$ (for which agent $i$ has value $\frac{k}{k+1}$) with probability $1/k$, while she gets items for which agent $i$ has no value otherwise.
\end{proof}

\begin{theorem}\label{thm:poief-egal-lower}
The price of iEF with respect to the egalitarian social welfare is at least $\Omega(n)$.
\end{theorem}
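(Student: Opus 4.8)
The plan is to exhibit an explicit family of normalized matching instances with $n$ agents and $n$ items and to show that the optimal egalitarian welfare is $\Omega(1)$ while the egalitarian welfare $\E_{A\sim\Q}[\min_{i}v_i(A_i)]$ of \emph{every} iEF lottery $\Q$ is $O(1/n)$; dividing the two yields the claimed $\Omega(n)$ bound. I would follow the template of Theorem~\ref{thm:poief-util-lower}: use Lemma~\ref{lem:ief-implies-prop} to pin down the support of any iEF lottery to proportional allocations, and then read off the bound from the surviving allocations together with the interim constraints. The essential new requirement, relative to the utilitarian case, is that the optimal egalitarian welfare be bounded below by a constant. Note that the instance of Theorem~\ref{thm:poief-util-lower} does not transfer, since there the ``dummy'' agents $k+1,\dots,2k$ value every item at roughly $1/n$ and therefore cap the optimal egalitarian welfare at $\approx 1/n$ as well, making the egalitarian ratio only $\Theta(1)$. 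So a genuinely different instance is needed.

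Concretely I would aim for an instance containing a perfect matching $M^{*}$ in which \emph{every} agent receives value $\Omega(1)$ --- this certifies optimal egalitarian welfare $=\Omega(1)$ --- while designating one agent (say agent~$1$) whose only high-value item $h$ is coveted by the remaining agents, each of whom nonetheless also owns a distinct high-value item, so that $M^{*}$ can give $h$ to agent~$1$ and a personal high item to everyone else. There are then three things to prove: (a) optimal egalitarian welfare $\ge c$ for a constant $c$, by writing down $M^{*}$; (b) every iEF lottery has egalitarian welfare $O(1/n)$; and (c) at least one iEF lottery exists, so that the price of iEF is well defined. For (b), I would first invoke Lemma~\ref{lem:ief-implies-prop} to confine the support to proportional allocations, and then exploit the interim-envy constraints of the coveting agents to show that $h$ must be spread thinly across them: the interim constraint of each agent $k$ who covets $h$ yet receives one of her own items upper-bounds the conditional probability that agent~$1$ holds $h$, and summing these constraints (in the spirit of the proof of Lemma~\ref{lem:ief-implies-prop} and of the case analysis in Theorem~\ref{thm:po-3}) should force $\prm_{A\sim\Q}[A_1=h]=O(1/n)$. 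Since agent~$1$ has value $\Omega(1)$ only when she holds $h$ and value $\Theta(1/n)$ otherwise, her expected value, and hence $\E_{A\sim\Q}[\min_i v_i(A_i)]$, is $O(1/n)$.

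The hard part is the decoupling at the heart of the instance design. To push the egalitarian welfare of iEF lotteries down to $\Theta(1/n)$ one wants to pin some agent to her proportionality threshold; the \emph{easy} way to do this --- as in Lemma~\ref{lem:max-envy} --- is a ``poison'' item valued only by that agent, but a poison item simultaneously caps the \emph{optimal} egalitarian welfare at $1/n$ (whoever receives it contributes at most $1/n$ to the minimum) and collapses the ratio to $\Theta(1)$. The pinning must therefore be engineered through the interim-envy constraints of the other agents --- they covet agent~$1$'s high item and force it to be shared --- rather than through proportionality. This creates a direct tension with obligation (c): making the coveting agents value $h$ strictly more than what they receive is exactly what makes $M^{*}$ fail to be (interim) envy-free, and if $h$ can be routed to too few alternative recipients the instance admits no iEF lottery at all. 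I therefore expect the main technical effort to lie in choosing the valuations so that $h$ is shareable among enough agents to keep the instance iEF-feasible, while still being coveted strongly enough that no iEF lottery can deliver $h$ to agent~$1$ with probability exceeding $O(1/n)$; verifying both the spreading lower bound on $\prm[A_1=h]$ and the existence of one explicit iEF lottery is where the calculations will concentrate.
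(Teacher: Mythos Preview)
Your high-level structure is right --- exhibit a normalized matching instance, certify optimal egalitarian welfare $\Omega(1)$ via an explicit matching, show every iEF lottery has egalitarian welfare $O(1/n)$, and verify that at least one iEF lottery exists. The gap is in your proposed mechanism for step~(b). You want the interim constraints of the coveting agents $k=2,\dots,n$ to force $\Pr[A_1=h]=O(1/n)$, but each such constraint only yields
\[
\Pr[A_1=h\mid A_k=g_k]\ \le\ \frac{v_k(g_k)}{v_k(h)},
\]
a \emph{constant} ratio, and these bounds do not compound: under the natural design (if $A_1=h$ then every $k$ holds $g_k$), all $n-1$ constraints bound essentially the same quantity $\Pr[A_1=h]$, and summing gives nothing better than $\Pr[A_1=h]\le v_k(g_k)/v_k(h)$ again. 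To push this to $O(1/n)$ you would need $v_k(h)=\Omega(n)\cdot v_k(g_k)$, which under normalization forces $v_k(g_k)=O(1/n)$ and caps the optimal egalitarian welfare at $O(1/n)$ --- precisely the ``poison item'' collapse you correctly identified and wanted to avoid. So the spreading argument, as stated, cannot close.

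The paper sidesteps this tension by \emph{not} making agent~$1$ the victim. It uses agent~$1$'s \emph{own} interim constraint, together with a proportionality chain, to force a \emph{different} agent into a low-value item deterministically. In the paper's instance, item~$2$ is valued positively only by agents~$1$ and~$2$; hence in any proportional allocation where agent~$1$ holds item~$1$, agent~$2$ must hold item~$2$, and then agent~$1$'s interim constraint (given $A_1=\text{item }1$) against agent~$2$ is violated since $v_1(1)=1/3<2/3=v_1(2)$. So agent~$1$ always holds item~$2$ in any iEF lottery. Item~$1$ is then valued only by agent~$n$ (at $1/(n-1)$), so proportionality pins agent~$n$ to item~$1$ in every allocation in the support, giving egalitarian welfare exactly $1/(n-1)$. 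Meanwhile the identity matching has egalitarian welfare $1/3$, and an explicit iEF lottery is easy to write down. The key idea your plan is missing is this decoupling: the agent whose iEF constraint is binding (agent~$1$) is \emph{not} the agent who ends up with the $\Theta(1/n)$ value (agent~$n$); a short proportionality chain transfers the damage, and the argument becomes deterministic rather than probabilistic.
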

\begin{proof}
Consider the following matching instance with $n$ agents and items. Agent $1$ has value $1/3$ for item 1 and value $2/3$ for item 2. For $i=2, 3, ..., n-2$, agent $i$ has value $1/3$ for each of the items $i$, $i+1$, and $n$. Agent $n-1$ has value $1/2$ for each of the last two items, while agent $n$ has value $\frac{1}{n-1}$ for item $1$ and $1-\frac{1}{n-1}$ for item $n$. All other agent valuations are $0$.

An optimal allocation has egalitarian social welfare at least $1/3$. To see why, consider the allocation where agent $i$ gets item $i$, for $i=1, 2, ..., n$. Furthermore, we claim that any iEF lottery $\Q$ over allocations has egalitarian social welfare $\frac{1}{n-1}$. Indeed, by Lemma \ref{lem:ief-implies-prop}, any allocation in the support of $\Q$ has agent $1$ getting either item $1$ or item $2$. In the former case, agent $2$ must get item $2$ and, hence, agent $1$ envies agent $2$. So, agent $1$ gets item $2$ and, again by Lemma \ref{lem:ief-implies-prop}, item $1$ must be assigned to agent $n$.

Consider now the following lottery. Agent $1$ gets item $2$, agent $n$ gets item $1$, while the remaining items are assigned in the following way. Let $\ell$ be selected equiprobably from $\{2, \dots, n-1\}$. Then, agent $\ell$ gets item $n$, each agent $i$, for $i=2, ..., \ell-1$, gets item $i+1$ and each agent $i$, for $i=\ell+1, ..., n-1$, gets item $i$. Note that all agents, besides agent $n$, get an item of maximum value to them. Agent $n$ obtains value $\frac{1}{n-1}$ from item $1$; she does not envy another agent, as, apart from item $1$, she only values item $n$ positively and any other agent gets item $n$ with probability at most $\frac{1}{n-2}$.
\end{proof}

A very similar proof to the one of Theorem~\ref{thm:poief-util-lower} yields our best (albeit not known to be tight) lower bound for the price of iEF with respect to the average Nash social welfare.

\begin{theorem}\label{thm:poief-nash-lower}
The price of iEF with respect to the average Nash social welfare is at least $\Omega\left(\sqrt{n}\right )$.
\end{theorem}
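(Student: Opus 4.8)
The plan is to reuse, essentially verbatim, the matching instance and the structural argument from the proof of Theorem~\ref{thm:poief-util-lower}, and simply replace the utilitarian (additive) accounting with a geometric-mean (multiplicative) one. Recall that instance has $n=2k$ agents: for $i\le k$, agent $i$ values item $i$ at $\frac{k}{k+1}$ and item $i+k$ at $\frac{1}{k+1}$ (and nothing else); for $i>k$, agent $i$ values items $1,\dots,k$ at $\frac{1}{2k}+\epsilon$ and items $k+1,\dots,2k$ at $\frac{1}{2k}-\epsilon$. These valuations are normalized, so the proportionality threshold is exactly $\frac{1}{2k}$.

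First I would lower-bound the optimal average Nash social welfare by exhibiting the identity allocation (agent $i$ gets item $i$): there agents $i\le k$ obtain value $\frac{k}{k+1}$ and agents $i>k$ obtain value $\frac{1}{2k}-\epsilon$, so $\avnash(\opt)\ge\bigl(\frac{k}{k+1}\cdot(\frac{1}{2k}-\epsilon)\bigr)^{1/2}$. Next I would upper-bound the average Nash social welfare of every iEF lottery. The key leverage is Lemma~\ref{lem:ief-implies-prop}: every allocation in the support of an iEF lottery is proportional, hence gives each agent value at least $\frac{1}{2k}$. For an agent $i>k$ this forces her to receive one of the items $1,\dots,k$, since the items $k+1,\dots,2k$ give her only $\frac{1}{2k}-\epsilon<\frac{1}{2k}$. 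As there are $k$ such agents and exactly $k$ such items, these agents collectively absorb the whole block $\{1,\dots,k\}$, so each agent $i\le k$ is left with an item from $\{k+1,\dots,2k\}$ and hence value at most $\frac{1}{k+1}$ (her only positive value in that block, on item $i+k$). Plugging these per-agent upper bounds into the product and taking the $1/(2k)$-th root bounds $\avnash(A)$ for every allocation $A$ in the support, and therefore its expectation $\E_{A\sim\Q}[\avnash(A)]$, by $\bigl(\frac{1}{k+1}\cdot(\frac{1}{2k}+\epsilon)\bigr)^{1/2}$.

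Finally, dividing the two bounds yields a ratio of at least $\bigl(k\cdot\frac{1/(2k)-\epsilon}{1/(2k)+\epsilon}\bigr)^{1/2}$, which tends to $\sqrt{k}=\sqrt{n/2}=\Omega(\sqrt n)$ as $\epsilon\to0$. The point that makes the bound come out as $\sqrt n$ rather than $n$ is exactly where this argument diverges from Theorem~\ref{thm:poief-util-lower}: the iEF constraint degrades each of the first $k$ agents by a multiplicative factor of $k$ (from $\frac{k}{k+1}$ down to $\frac{1}{k+1}$) while leaving the other $k$ agents essentially untouched, and in the geometric mean a factor-$k$ loss on half of the $2k$ agents contributes only $k^{k/(2k)}=k^{1/2}$. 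I do not anticipate a serious obstacle beyond bookkeeping the $\epsilon$ terms; the only subtlety worth stating explicitly is that the per-agent value bounds hold for each realized allocation in the support, so they transfer to the expected Nash welfare of the lottery without any further argument.
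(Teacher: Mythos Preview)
Your proposal is correct and follows essentially the same route as the paper: the paper also reuses the instance from Theorem~\ref{thm:poief-util-lower}, invokes Lemma~\ref{lem:ief-implies-prop} to force agents $k+1,\dots,2k$ onto items $1,\dots,k$ (hence agents $1,\dots,k$ onto their ``low'' items), and then takes the ratio of the two geometric means. The only cosmetic difference is that the paper fixes $\epsilon=\frac{1}{6k}$ to obtain the explicit constant $\sqrt{n}/2$, whereas you let $\epsilon\to 0$; both yield the claimed $\Omega(\sqrt{n})$.
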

\begin{proof}
The proof uses the same instance with the one in the proof of Theorem~\ref{thm:poief-util-lower} and a very similar reasoning. By considering again the allocation in which agent $i$ gets item $i$ for $i=1, 2, ..., 2k$, we get that the optimal allocation has average Nash social welfare at least $\left(\left(\frac{k}{k+1}\right)^k\left(\frac{1}{2k}-\epsilon\right)^k\right)^{\frac{1}{2k}}$.

As we argued in the proof of Theorem~\ref{thm:poief-util-lower}, the support of any iEF lottery $\Q$ consists of allocations in which agents $k+1$, $k+2$, ..., $2k$ get distinct items among items $1$, $2$, ..., $k$; each of these agents has value $\frac{1}{2k}+\epsilon$. Then, for $i=1, 2, ..., k$, agent $i$ gets item $k+i$ and has value $\frac{1}{k+1}$. The average Nash social welfare of such allocations (which do exist as we have also shown) is $\left(\left(\frac{1}{k+1}\right)^k\left(\frac{1}{2k}+\epsilon\right)^k\right)^{\frac{1}{2k}}$.

By setting $\epsilon=\frac{1}{6k}$, we obtain that the price of iEF with respect to the average Nash social welfare is (at least)
\begin{align*}
\left(\frac{\left(\frac{k}{k+1}\right)^k\left(\frac{1}{2k}-\epsilon\right)^k}{\left(\frac{1}{k+1}\right)^k\left(\frac{1}{2k}+\epsilon\right)^k}\right)^{\frac{1}{2k}} &=\sqrt{\frac{k}{2}}=\frac{\sqrt{n}}{2}.
\end{align*}
The theorem follows.
\end{proof}
\section{Computing efficient interim envy-free allocations}
\label{sec:computing}

We devote this section to proving Theorem \ref{thm:compute-ief} below, and show how we can compute an iEF lottery of maximum expected social welfare efficiently in the case of matching instances. We remark that, in the case of non-matching instances (where the number of items is larger than the number of agents), deciding whether an iEF lottery exists is an NP-hard problem. To see why, consider the case of two agents and recall that (i) an iEF lottery is a lottery over proportional allocations (Lemma \ref{lem:ief-implies-prop}), and (ii) a proportional allocation is envy-free and (trivially) iEF. Hence, the existence of an iEF lottery is equivalent to the existence of a proportional/EF allocation in the case of two agents. Now, if we consider instances with identical valuations (in which both agents have value $v(g)$ for item $g$), deciding whether an iEF lottery exists is equivalent to deciding {\sc Partition}, a well-known NP-hard problem.

\begin{theorem}\label{thm:compute-ief}
For matching instances, an iEF lottery of maximum expected utilitarian, egalitarian, or log-Nash social welfare (if one exists) can be computed in polynomial time in terms of the number of agents.
\end{theorem}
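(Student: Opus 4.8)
The plan is to cast, for each objective, the computation of a welfare-maximal iEF lottery as a linear program whose variables $q_b\ge 0$ are the probabilities of the individual matchings $b$ (with $b(i)$ denoting the item agent $i$ receives), subject to $\sum_b q_b=1$. The first key observation is that the iEF condition linearizes: writing it for matchings as $v_i(j)\ge \E_{A\sim\Q}[v_i(A_k)\mid A_i=j]$ (a bundle in a matching being a single item $j$) and multiplying through by $\prm_{A\sim\Q}[A_i=j]=\sum_{b:\,b(i)=j}q_b$ clears the conditional expectation and yields, for every triple $(i,k,j)$ with $i\neq k$,
\[
\sum_{b:\,b(i)=j} q_b\bigl(v_i(j)-v_i(b(k))\bigr)\ge 0,
\]
which moreover holds trivially when the item $j$ is never given to $i$. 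Hence iEF is captured by only $\bigO(n^3)$ linear constraints. The second observation is that all three objectives are linear in $q$: for a realized matching $b$ each of $\util(b)$, $\egal(b)$, and $\lognash(b)$ is a fixed scalar $w(b)$, so the expected welfare is $\sum_b w(b)\,q_b$. (For $\lognash$, matchings assigning some agent value $0$ are excluded; by Lemma~\ref{lem:ief-implies-prop} iEF lotteries are supported on proportional matchings, which keeps all realized values positive.) This LP has $n!$ variables but only polynomially many nontrivial constraints.

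Since the LP has exponentially many variables, I would solve its dual with the ellipsoid method. The dual has one nonnegative variable $y_{i,k,j}$ per iEF constraint and one free variable $\lambda$ for the normalization --- polynomially many variables --- but one constraint per matching, namely
\[
\lambda \;\ge\; w(b)+\sum_{i}\sum_{k\ne i} y_{i,k,b(i)}\bigl(v_i(b(i))-v_i(b(k))\bigr).
\]
Running the ellipsoid method then reduces to a separation oracle that, given $(\lambda,y)$, maximizes the right-hand side over all matchings $b$ and compares the optimum with $\lambda$, returning a maximizing $b$ as the violated constraint when the inequality fails.

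It remains to implement this oracle. Grouping terms, $w(b)$ together with $\sum_{k\ne i} y_{i,k,b(i)}\,v_i(b(i))$ depends only on the chosen edges $(i,b(i))$ and contributes an edge weight $\omega_{ij}$, whereas the remaining term $-\sum_i\sum_{k\ne i} y_{i,k,b(i)}\,v_i(b(k))$ depends on \emph{pairs} of matching edges $(i,b(i))$ and $(k,b(k))$ and contributes edge-pair weights. Thus the oracle is exactly an instance of maximum edge-pair-weighted bipartite perfect matching (2EBM). For the egalitarian objective, $w(b)=\min_i v_i(b(i))$ is not edge-separable; I would handle it by guessing the minimizer: for each pair $(i^\star,j^\star)$ set $\theta=v_{i^\star}(j^\star)$, force $b(i^\star)=j^\star$, forbid every edge of value below $\theta$, add the constant $\theta$, and take the best over the $\bigO(n^2)$ guesses, each again a 2EBM. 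The hard part is solving 2EBM in polynomial time: edge-pair-weighted matching is quadratic-assignment-like and NP-hard in general, so everything hinges on the special product structure $-y_{i,k,a}\,v_i(c)$ of the pair weights. I would relax the matching to the Birkhoff polytope and argue that this relaxation always attains an integral (permutation) optimum, using Cruse's decomposition of doubly-stochastic centro-symmetric matrices~\cite{C77}; establishing and exploiting this integrality is the technically delicate core of the result.

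Finally, I would recover the primal solution from the ellipsoid run. Because the method queries the oracle only polynomially many times, it generates only polynomially many matchings; restricting the primal LP to exactly those variables gives a polynomial-size LP with the same optimal value, whose solution is the desired welfare-maximal iEF lottery supported on polynomially many matchings. Infeasibility of this restricted LP certifies that the instance admits no iEF lottery at all.
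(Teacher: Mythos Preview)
Your proposal follows essentially the same route as the paper: the same LP over matchings with the iEF constraints linearized by clearing the conditional probability, the same dual-via-ellipsoid strategy with a 2EBM-based separation oracle, the same appeal to Cruse's decomposition of centro-symmetric doubly stochastic matrices for integrality of the lifted relaxation, and the same recovery of a polynomial-support primal lottery from the matchings touched by the ellipsoid run. Two small deviations worth noting: for the egalitarian objective the paper enumerates only the $\bigO(n^2)$ distinct valuation thresholds $e$ and restricts to matchings in $\M_e$ (rather than pinning a specific minimizing edge as you do---both are fine), and the paper states and proves the 2EBM result for \emph{arbitrary} edge-pair weights, so the ``special product structure'' you single out is not actually used.
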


Our algorithms use linear programming. Let $\M$ be the set of all possible perfect matchings between the agents in $\agents$ and the items in $\items$ (more formally, $\M$ is the set of all perfect matchings in the complete bipartite graph $G=(\agents,\items,\agents\times\items)$). For agent $i\in \agents$ and item $j\in \items$, denote by $\M_{ij}$ the set of matchings from $\M$ in which item $j$ is assigned to agent $i$. Also, for a matching $b\in \M$ and an agent $k\in\agents$, $b(k)$ denotes the item of $\items$ to which agent $k$ is matched in $b$. Then, an iEF lottery can be computed as the solution to the following linear program.
\begin{equation}\label{LP1}
\begin{array}{ll@{}ll}
\text{maximize} & \displaystyle\sum\limits_{b\in \M} &x(b)\cdot \sw(b) & \\
\text{subject to}& \displaystyle\sum\limits_{b\in \M_{ij}}  &x(b)\cdot (v_i(j)-v_i(b(k))) \geq 0, & i\in \agents, j\in \items, k\in \agents\setminus\{i\}\\
 & \displaystyle\sum\limits_{b\in \M}  &x(b) =1 &\\
 &                                                &x(b) \geq 0, & b\in \M
\end{array}
\end{equation}
The variables of the linear program are the probabilities $x(b)$, for every matching $b\in \mathcal{M}$, with which the lottery produces matching $b$. Together with the non-negativity constraints on $\x$, the second constraint $\sum_{b\in \M}{x(b)}=1$ requires that the vector of probabilities $\x=(x(b))_{b\in \M}$ defines a lottery over all matchings of $\M$. The notation $\sw(b)$ is used here to refer generally to the social welfare of matching $b$. We will specifically replace $\sw$ by $\util$, $\egal$, and $\lognash$ later. The objective of the linear program is to maximize the expected social welfare $\E_{b\sim \x}[\sw(b)]$ or, equivalently, the quantity $\sum_{b\in \M}{x(b)\cdot \sw(b)}$.

The first set of constraints represent the iEF conditions. Indeed, the constraint is clearly true for every agent $i\in\agents$ and item $j\in\items$ that is never assigned to agent $i$ under $\x$ (i.e., when $\Pr_{b\sim\x}[b(i)=j]=0$). We will also show that this is the case when $\Pr_{b\sim\x}[b(i)=j]>0$ as well. For agent $k\in \agents\setminus\{i\}$, interim envy-freeness requires that
\begin{align}\label{eq:ief-condition}
    v_i(j) &\geq \E_{b\sim \x}[v_i(b(k))|b(i)=j].
\end{align}
By multiplying the left-hand-side of (\ref{eq:ief-condition}) with $\prm_{b\sim\x}[b(i)=j]$, we get
\begin{align*}
    v_i(j)\cdot \prm_{b\sim\x}[b(i)=j] &= \sum_{b\in \M_{ij}}{x(b)\cdot v_i(j)},
\end{align*}
and by doing the same with the right-hand-side, we have
\begin{align*}
    \E_{b\sim \x}\left[v_i(b(k))|b(i)=j]\cdot \prm_{b\sim\x}[b(i)=j\right] &= \sum_{b\in\M_{ij}}{x(b)\cdot v_i(b(k))}.
\end{align*}
Hence, inequality (\ref{eq:ief-condition}) is equivalent to the first constraint of the linear program (\ref{LP1}).

The linear program (\ref{LP1}) has exponentially many variables, i.e., one variable for each of the $n!$ different matchings of $\M$. To solve it efficiently, we will resort to its dual linear program
\begin{equation}\label{dual-LP1}
\begin{array}{llll}
\text{maximize}  & z & & \\
\text{subject to}& z+\sum\limits_{\substack{(i,j)\in b\\k\in \agents\setminus \{i\}}}{(v_i(j)-v_i(b(k)))\cdot y(i,j,k)} +\sw(b)\leq 0, & b\in \M\\
 & y(i,j,k) \geq 0,  i\in \agents, j\in \items, k\in\agents\setminus\{i\},
\end{array}
\end{equation}
where variable $y(i,j,k)$ corresponds to the iEF constraint between agents $i$ and $k$ when $i$ is allocated item $j$.
The dual linear program (\ref{dual-LP1}) has polynomially many variables and exponentially many constraints. Fortunately, we will be able to solve it using the ellipsoid method~\citep*{GLS88,S86}. To do so, all we need is a polynomial-time separation oracle, which takes as input values for the dual variables $z$ and $y(i,j,k)$ for all triplets $(i,j,k)$ consisting of agent $i\in \agents$, item $j\in\items$, and agent $k\in \agents\setminus\{i\}$, and either computes a matching $b^*$ for which a particular constraint is violated, or correctly concludes that no constraint of the dual linear program (\ref{dual-LP1}) is violated.

Let us briefly remind the reader how solving the dual linear program using the ellipsoid method can give us an efficient solution to the primal linear program as well; a more detailed discussion can be found in~\cite{GLS88,S86}. To solve the dual linear program, the ellipsoid method will make only polynomially many calls to the separation oracle. This is due to the fact that, among the exponentially many constraints, the ones that really constrain the variables of the dual linear program are very few; the rest are just redundant. Then, after having kept track of the execution of the ellipsoid method on the dual linear program, the primal linear program can be simplified by setting implicitly to $0$ all variables that correspond to dual constraints that were not returned as violated ones by the calls of the separation oracle during the execution of the ellipsoid method. As a final step, the solution of the simplified primal linear program (which is now of polynomial size) will give us the solution $\x$; this will have only polynomially-many matchings in its support.

In the rest of this section, we will show how to design such separation oracles for the dual linear program (\ref{dual-LP1}) when we use the utilitarian, egalitarian, or log-Nash definition of the social welfare. Our separation oracles essentially solve instances of a novel variation of the maximum bipartite matching problem. We believe that this can be of independent interest, with applications in many different contexts.

\subsection*{The maximum edge-pair-weighted perfect bipartite matching}
Instances of the maximum edge-pair-weighted perfect bipartite matching problem (or, 2EBM, for short) consist of the complete bipartite graph $G=(\agents,\items,\agents\times\items)$, with $|\agents| = |\items|$, and a weighting function $\psi$ that assigns weight $\psi(e_1,e_2)$ to every ordered pair of non-adjacent edges $e_1$ and $e_2$ from $\agents\times\items$. The objective is to compute a perfect matching $b\in\M$ so that the total weight over all edge-pairs of $b$, denoted by
\begin{align}\label{2ebm-objective}
\Psi(b) &= \sum_{(i,j)\in b}{\sum_{\substack{(k,\ell)\in b:\\k\not=i}}{\psi\left((i,j),(k,\ell)\right)}},
\end{align}
is maximized. We remark that the problem of computing a perfect matching of maximum total edge weight in an edge-weighted complete bipartite graph with edge weight $w(e)$ for each edge $e$, corresponds to 2EBM with edge-pair weights defined as
%is equivalent to 2EBM by defining the edge-pair weights of the latter as 
$\psi(e,e')=\frac{w(e)}{n-1}$ for every ordered pair of edges $e$ and $e'$. The separation oracle used for the solution of the dual linear program (\ref{dual-LP1}) will use appropriately defined weights $\psi$ that will depend on the variables that appear at the left-hand side of the constraint of (\ref{dual-LP1}) in a way that will be specified in the following Subsections \ref{sec:util}, \ref{sec:egal}, and \ref{sec:log-nash}. 

Let $\X$ be the set of quadruples $(i,j,k,\ell)$ where $i,k\in \agents$ and $j,\ell \in \items$, with $i\not=k$ and $j\not=\ell$. We can view such a quadruple as the ordered pair of edges $(i,j)$ and $(k,\ell)$ in the input graph $G$. Essentially, the quadruples of $\X$ correspond to all possible ordered pairs of different edges in the input graph. To compute a perfect matching $b\in \M$ of maximum total edge-pair weight, we will use the following integer linear program with $\Theta(n^4)$ variables and constraints (where $n$ is the number of agents and items). We remark that, from now on, we simplify the notation $\psi((i,j),(k,\ell))$ and use $\psi(i,j,k,\ell)$ instead.
\begin{equation}\label{lp:max-epw-perfect-matching}
\begin{array}{llll}
\text{maximize}  & \sum\limits_{(i,j,k,\ell)\in \X}{t(i,j,k,\ell)\cdot \psi(i,j,k,\ell)} & & \\
\text{subject to} & \sum\limits_{\substack{j,\ell\in \items:\\(i,j,k,\ell)\in \X}}{t(i,j,k,\ell)} = 1, & i\in \agents, k\in \agents\setminus\{i\}\\
& \sum\limits_{\substack{i,k\in \agents:\\(i,j,k,\ell)\in \X}}{t(i,j,k,\ell)} = 1, & j\in \items, \ell\in\items\setminus\{j\}\\
& t(i,j,k,\ell) = t(k,\ell,i,j), & (i,j,k,\ell)\in \X\\
& t(i,j,k,\ell) \in \{0,1\}, & (i,j,k,\ell)\in \X
\end{array}
\end{equation}
For a quadruple $(i,j,k,\ell)\in \X$, the variable $t(i,j,k,\ell)$ indicates whether both edges $(i,j)$ and $(k,\ell)$ belong to the perfect matching ($t(i,j,k,\ell)=1$) or not ($t(i,j,k,\ell)=0$). The first constraint indicates that among all edge pairs with endpoints at agent nodes $i$ and $k$, exactly one has both its edges in the matching. Similarly, the second constraint indicates that among all edge pairs with endpoints at item nodes $j$ and $\ell$, exactly one has both its edges in the matching. The third constraint ensures symmetry of the variables so that they are consistent to our interpretation. 

We relax the integrality constraint of (\ref{lp:max-epw-perfect-matching}) and replace it by
\begin{align}\label{eq:lp-non-negative-variables}
t(i,j,k,\ell) \geq 0, & \quad\quad\quad\quad (i,j,k,\ell)\in \X.
\end{align}
Then, we compute an extreme solution of the resulting linear program (e.g., again, using the ellipsoid method~\citep*{GLS88,S86}). We claim (in Lemma~\ref{lem:extreme}) that this solution is integral, i.e., all variables have values either $0$ or $1$, and are, hence, solutions to the integer linear program (\ref{lp:max-epw-perfect-matching}) and, consequently, to our maximum edge-pair-weighted perfect bipartite matching problem.

To prove this, we can view the solution $\tbold$ of the relaxation of the linear program (\ref{lp:max-epw-perfect-matching}) as a square matrix $T$. In this matrix, each row corresponds to a pair of different agents and each column to a pair of different items. Then, the first and the second set of constraints indicate that $T$ is {\em doubly stochastic}. The additional symmetry constraint of the linear program (\ref{lp:max-epw-perfect-matching}) prevents us from using the famous Birkhoff-von Neumann theorem~\citep{B46}, which states that any doubly stochastic matrix is a convex combination of permutation matrices (i.e., square binary matrices with exactly one $1$ at each row and each column). 
%Here, ideally, we would like to use the famous Birkhoff-von Neumann theorem~\citep{B46}, which states that any doubly stochastic matrix is a convex combination of permutation matrices (i.e., square binary matrices with exactly one $1$ at each row and each column) and conclude that the extreme solutions of the relaxation of the linear program (\ref{lp:max-epw-perfect-matching}) are integral and, hence, correspond to perfect matchings. Unfortunately, the linear program (\ref{lp:max-epw-perfect-matching}) has the additional symmetry constraint that does not allow for such a use of the Birkhoff-von Neumann theorem.
Fortunately, we can use an extension due to~\citet{C77} which proves a similar result for
%applies to 
centro-symmetric matrices.

\begin{defn}
An $N\times N$ matrix $T=(T_{u,v})_{u,v\in[N]}$ is called centro-symmetric if it satisfies $T_{u,v}=T_{N+1-u,N+1-v}$ for all $u,v\in [N]$.
\end{defn}

\begin{theorem}[\cite{C77}]\label{thm:cruse}
If $N$ is even, then any $N\times N$ centro-symmetric doubly stochastic matrix is the convex combination of centro-symmetric permutation matrices.
\end{theorem}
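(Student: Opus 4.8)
The plan is to mirror the standard proof of the Birkhoff--von Neumann theorem~\cite{B46}, peeling off one centro-symmetric permutation matrix at a time, with the evenness of $N$ entering through a parity argument on bipartite cycles. Throughout, let $J$ be the anti-diagonal permutation matrix, $J_{u,v}=1$ iff $v=N+1-u$, and let $\iota$ be the associated involution on indices, $\iota(u)=N+1-u$; since $N$ is even, $\iota$ has no fixed point. Centro-symmetry of $T$ is exactly the identity $JTJ=T$, and a permutation matrix $P$ is centro-symmetric iff its underlying perfect matching, viewed as a set of row--column pairs $(u,v)$, is invariant under the map $(u,v)\mapsto(\iota(u),\iota(v))$, which I will also denote by $\iota$.

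The reduction I would carry out first is the following. Suppose I can establish the key lemma: every centro-symmetric doubly stochastic matrix $T\neq 0$ admits a centro-symmetric permutation matrix $P$ whose support is contained in the support of $T$. Given such a $P$, set $\lambda=\min_{(u,v)\in\operatorname{supp}(P)}T_{u,v}>0$. Then $T-\lambda P$ is centro-symmetric (a difference of centro-symmetric matrices), has nonnegative entries, and has all row and column sums equal to $1-\lambda$, while containing strictly more zero entries than $T$. If $\lambda=1$ then $T=P$ and we are done; otherwise $\tfrac{1}{1-\lambda}(T-\lambda P)$ is again centro-symmetric doubly stochastic with strictly smaller support, so induction on the number of nonzero entries expresses it as a convex combination of centro-symmetric permutation matrices, and reinserting $P$ with weight $\lambda$ yields the desired decomposition of $T$. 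Thus the entire theorem reduces to the key lemma.

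To prove the key lemma, I would first note that the support graph $G$ of $T$ (a bipartite graph on rows and columns, with an edge for each positive entry) satisfies Hall's condition: for any set $R$ of rows, the total mass $|R|$ carried by those rows is spread over the columns in the neighbourhood $N(R)$, each of which can absorb at most mass $1$, so $|N(R)|\geq|R|$; hence $G$ has a perfect matching $M$. Since $G$ is $\iota$-invariant, $\iota(M)$ is also a perfect matching in $G$, and I would analyse the symmetric structure $M\cup\iota(M)$, whose components are either edges lying in both $M$ and $\iota(M)$ or alternating even cycles. The involution $\iota$ permutes these components. Components paired by $\iota$ (a doubled edge $e$ together with $\iota(e)$, or a cycle $C$ together with a distinct cycle $\iota(C)$) are easy: on such a pair I keep the $M$-edges on one component and their $\iota$-images on the other, obtaining an $\iota$-invariant perfect matching of the pair.

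The delicate case, and the step I expect to be the main obstacle, is an $\iota$-invariant cycle $C=\iota(C)$ of length $2\ell\geq 4$. Here $\iota$ restricts to a fixed-point-free graph automorphism of $C$; for a cycle the only automorphisms are the reflections and the rotations. Because $\iota$ preserves the row/column bipartition and fixes no vertex (this is exactly where $N$ even is used), the reflections are excluded: a reflection either fixes two vertices or swaps the two endpoints of some edge, and those endpoints lie on opposite sides of the bipartition, so neither is compatible with $\iota$. The only surviving involution is therefore the antipodal rotation by $\ell$, and now comes the parity point: since rows and columns alternate around $C$, shifting by $\ell$ preserves this alternation only when $\ell$ is even, i.e.\ when $|C|\equiv 0\pmod 4$. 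For such $\ell$ the antipodal map sends each of the two alternating perfect matchings of $C$ to itself, so either one is $\iota$-invariant and I take it. Assembling the chosen edges over all components yields an $\iota$-invariant perfect matching of $G$, equivalently a centro-symmetric permutation matrix supported inside $T$, which completes the key lemma and hence the theorem. (The same parity obstruction explains why evenness is essential: for odd $N$ the involution $\iota$ fixes the central index and the statement genuinely fails.)
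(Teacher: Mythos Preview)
The paper does not prove this statement: Theorem~\ref{thm:cruse} is quoted as a black-box result from Cruse~\cite{C77} and then invoked in the proof of Lemma~\ref{lem:extreme}. So there is no in-paper argument to compare against.

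Your proof is correct. The Birkhoff-style peeling reduction to the key lemma is standard and sound, and your construction of a centro-symmetric permutation inside the support via the component structure of $M\cup\iota(M)$ goes through: shared edges pair under $\iota$ (no edge is $\iota$-fixed since $N$ is even), paired cycles are handled exactly as you say, and your exclusion of reflections on an $\iota$-invariant cycle is right.

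One remark that would tighten the write-up: the $\iota$-invariant-cycle case you flag as ``the main obstacle'' is actually vacuous. You correctly extract $\ell$ even from the row/column bipartition. But the edges around such a cycle $C$ also alternate between $M\setminus\iota(M)$ and $\iota(M)\setminus M$, and $\iota$ sends the first class to the second; hence the rotation by $\ell$ must swap the two edge classes, which forces $\ell$ odd. The two parity constraints are incompatible, so no $\iota$-invariant cycle exists. Your treatment of that case is still formally valid (with $\ell$ even the antipodal shift would indeed fix each alternating matching), so the proof stands as written; you are just spending effort on an empty case.
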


We use Theorem~\ref{thm:cruse} in the proof of the next lemma which shows that the extreme solutions of the relaxation of the linear program (\ref{lp:max-epw-perfect-matching}) are integral and, hence, correspond to perfect matchings.

\begin{lemma}\label{lem:extreme}
Any extreme solution of the relaxation of the linear program (\ref{lp:max-epw-perfect-matching}) is integral.
\end{lemma}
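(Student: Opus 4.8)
The plan is to recognize the feasible region of the relaxed linear program (\ref{lp:max-epw-perfect-matching}) as the polytope of centro-symmetric doubly stochastic matrices, so that Theorem~\ref{thm:cruse} can be brought to bear. First I would make precise the matrix view already hinted at in the text: let $T$ be the $N\times N$ matrix, with $N=n(n-1)$, whose rows are indexed by the ordered pairs $(i,k)$ of distinct agents and whose columns by the ordered pairs $(j,\ell)$ of distinct items, setting the $((i,k),(j,\ell))$-entry equal to $t(i,j,k,\ell)$. The first family of constraints says that every row sum equals $1$ and the second that every column sum equals $1$; combined with the non-negativity constraints (\ref{eq:lp-non-negative-variables}), this states exactly that $T$ is doubly stochastic.

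The key step is to choose the orderings of the two index sets so that the symmetry constraint $t(i,j,k,\ell)=t(k,\ell,i,j)$ becomes the algebraic centro-symmetry condition $T_{u,v}=T_{N+1-u,N+1-v}$. To achieve this I would enumerate the $\binom{n}{2}$ unordered agent-pairs as $s=1,\dots,N/2$, place the ordered pair $(i,k)$ at row position $s$ and its reverse $(k,i)$ at row position $N+1-s$, and do the analogous thing for the item-pairs on the columns. Under this indexing, reversing a pair is realized precisely by the antipodal map $u\mapsto N+1-u$ on positions, so the constraint $t(i,j,k,\ell)=t(k,\ell,i,j)$ reads exactly $T_{u,v}=T_{N+1-u,N+1-v}$. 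Hence the feasible region is precisely the set of centro-symmetric doubly stochastic matrices, and since one of $n$ and $n-1$ is even, $N=n(n-1)$ is even and Theorem~\ref{thm:cruse} applies.

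Finally I would close with a standard extreme-point argument. By Theorem~\ref{thm:cruse}, the polytope of centro-symmetric doubly stochastic matrices is contained in the convex hull of the finitely many centro-symmetric permutation matrices; as these matrices are themselves feasible and the polytope is convex, the two sets coincide. The extreme points of a polytope presented as the convex hull of a finite set lie within that set: writing an extreme solution $T$ as a convex combination $\sum_r \lambda_r M_r$ with $\lambda_r>0$ and centro-symmetric permutation matrices $M_r$, extremality forces $T=M_r$ for every $r$ in the combination. Thus any extreme solution is a centro-symmetric permutation matrix, whose entries are all $0$ or $1$, which yields integrality.

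The main obstacle I anticipate is the second step: one must verify carefully that the purely combinatorial symmetry constraint of (\ref{lp:max-epw-perfect-matching}) coincides with the algebraic centro-symmetry notion, which hinges on exhibiting an index ordering realizing reversal of a pair as the antipodal map on positions and on checking that this is consistent simultaneously for rows and columns. Once this correspondence is established, double stochasticity, the parity of $N$, and the extreme-point conclusion are all routine.
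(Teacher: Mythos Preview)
Your proposal is correct and follows essentially the same approach as the paper: both set up the $N\times N$ matrix $T$ with $N=n(n-1)$, identify the first two constraint families with double stochasticity, and then choose an indexing of ordered pairs so that pair reversal becomes the antipodal map $u\mapsto N+1-u$, turning the symmetry constraint into centro-symmetry and allowing Cruse's theorem to be applied. The paper differs only cosmetically, giving an explicit formula for the bijection $\pi$ (placing pairs with $i<k$ in positions $1,\dots,N/2$ and defining $\pi(k,i)=N+1-\pi(i,k)$), whereas you describe the same construction more conceptually; your closing extreme-point argument is also spelled out more fully than in the paper.
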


\begin{proof}
We will define an alternative representation of a feasible solution $\tbold$ of the relaxation of the linear program (\ref{lp:max-epw-perfect-matching}) as a doubly stochastic matrix $T$ with $N=n(n-1)$ rows and columns. To do so, we will use a particular mapping of each pair of different agents (respectively, of each pair of different items) to particular rows (respectively, columns) of the matrix $T$. This particular mapping will allow us to argue that the matrix $T$ is centro-symmetric. As $N$ is even, Theorem~\ref{thm:cruse} will give us that $T$ is a convex combination of centro-symmetric permutation matrices, which correspond to integral solutions.

As both sets $\agents$ and $\items$ contain $n$ elements each, we may view them as integers from $[n]$. We define the bijection $\pi$ from ordered pairs of different integers from $[n]$ to integers of $[N]$ as follows. For every ordered pair $(i,k)$ of different integers from $[n]$, let
\begin{align*}
\pi(i,k) &= \sum_{h=1}^{i-1}{(n-h)}+k-i
\end{align*}
if $i<k$, and
\begin{align*}
\pi(i,k) &=n(n-1)+1-\pi(k,i)
\end{align*}
otherwise. By this definition, we have
\begin{align}\label{eq:centro}
\pi(i,k)+\pi(k,i) &= n(n-1)+1.
\end{align}
Note that, for $i=1, 2, ..., n-1$ and $k=i+1, ..., n$, $\pi(i,k)$ takes all distinct integer values from $1$ to $n(n-1)/2$. Then, for the remaining pairs $(i,k)$ with $i=2, ..., n$ and $k=1, ..., i-1$, $\pi(i,k)$ takes all distinct integer values from $n(n-1)$ down to $n(n-1)/2+1$. Hence, since $N=n(n-1)$, each distinct ordered pair of different integers from $[n]$ is mapped to a different integer of $[N]$ under $\pi$. Hence, $\pi$ is indeed a bijection. Now, for every quadruple $(i,j,k,\ell)\in \X$, we store the value of $t(i,j,k,\ell)$ in the entry $T_{\pi(i,k),\pi(j,\ell)}$ of matrix $T$. By the properties of $\pi$, the matrix $T$ is well-defined.

We will complete the proof by showing that $T$ is centro-symmetric. Indeed, let $u$ and $v$ be any integers in $[N]$ and assume that $u=\pi(i,k)$ and $v=\pi(j,\ell)$ for pairs of distinct integers $(i,k)$ and $(j,\ell)$. We have
\begin{align*}
T_{u,v} &= T_{\pi(i,k),\pi(j,\ell)} = t(i,j,k,\ell) = t(k,\ell,i,j)\\
&= T_{\pi(k,i),\pi(\ell,j)}=T_{N+1-\pi(i,k), N+1-\pi(j,\ell)} = T_{N+1-u,N+1-v},
\end{align*}
i.e., $T$ is indeed centro-symmetric. The first and sixth equalities follow by the definition of $u$ and $v$. The second and fourth equalities follow by the definition of the entries of matrix $T$. The third equality is the symmetry constraint of linear program (\ref{lp:max-epw-perfect-matching}). The fifth equality follows by (\ref{eq:centro}).
\end{proof}

Hence, the execution of the ellipsoid algorithm on the relaxation of the linear program (\ref{lp:max-epw-perfect-matching}) will return an integral solution that corresponds to a solution of 2EBM. The next statement summarizes the above discussion.

\begin{theorem}
2EBM can be solved in polynomial time.
\end{theorem}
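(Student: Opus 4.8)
The plan is to show that the integer program~(\ref{lp:max-epw-perfect-matching}) is an \emph{exact} formulation of 2EBM and that its continuous relaxation can be solved to an integral optimum in polynomial time. Concretely, I would proceed in three steps: (i) establish a value-preserving correspondence between feasible $0/1$ solutions of~(\ref{lp:max-epw-perfect-matching}) and perfect matchings $b\in\M$; (ii) solve the relaxation obtained by replacing integrality with~(\ref{eq:lp-non-negative-variables}), computing an \emph{extreme} optimal solution $\tbold$ in polynomial time (the program has only $\Theta(n^4)$ variables and constraints, so the ellipsoid method, or any polynomial-time LP algorithm returning a vertex, suffices); and (iii) invoke Lemma~\ref{lem:extreme} to conclude that $\tbold$ is integral, hence a feasible solution of~(\ref{lp:max-epw-perfect-matching}) and, via the correspondence of step (i), an optimal perfect matching for 2EBM.

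First I would dispatch the easy half of the correspondence. Given a perfect matching $b$, set $t(i,j,k,\ell)=1$ exactly when both edges $(i,j)$ and $(k,\ell)$ lie in $b$, and $t(i,j,k,\ell)=0$ otherwise. The two stochasticity constraints then hold because $b$ assigns a unique item to each agent and a unique agent to each item, so for every ordered pair of distinct agents (resp.\ items) exactly one admissible edge-pair is selected; the symmetry constraint is immediate; and the objective evaluates to $\sum_{(i,j)\in b}\sum_{(k,\ell)\in b,\,k\neq i}\psi(i,j,k,\ell)=\Psi(b)$, matching~(\ref{2ebm-objective}). This already certifies that the optimum of the relaxation is at least the optimal 2EBM value, and together with step (iii) it shows the relaxation optimum can be attained integrally.

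The hard part will be the converse direction, namely extracting a \emph{genuine} perfect matching from an integral feasible $\tbold$. Taken alone, the first two constraints only guarantee that, for each ordered pair of distinct agents (and each ordered pair of distinct items), exactly one edge-pair is chosen; they do not by themselves force the item selected for agent $i$ to be the same across all of $i$'s partners $k$. The crux is therefore to combine the symmetry constraint $t(i,j,k,\ell)=t(k,\ell,i,j)$ with the two stochasticity constraints to prove this global consistency: that the chosen edge-pairs arise from a single bijection $\sigma:\agents\to\items$, with the pair selected for $(i,k)$ equal to $(\sigma(i),\sigma(k))$. I expect this consistency argument to be the delicate step, since the pairwise constraints together with centro-symmetry are precisely what must exclude locally feasible but globally incoherent selections; the correctness of the whole formulation, and hence of the theorem, hinges on it. Once consistency is secured, the induced matching $b$ satisfies $\Psi(b)$ equal to the objective value of $\tbold$, which meets the lower bound from the easy direction and completes the proof.
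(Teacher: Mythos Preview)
Your proposal follows the paper's approach almost exactly: formulate 2EBM via the integer program~(\ref{lp:max-epw-perfect-matching}), relax to~(\ref{eq:lp-non-negative-variables}), compute an extreme optimum, and appeal to Lemma~\ref{lem:extreme} for integrality. The paper, however, does \emph{not} carry out the step you single out as ``the hard part'': it never verifies that every integral feasible $\tbold$ (equivalently, every centro-symmetric permutation matrix arising from Cruse's theorem) corresponds to a genuine perfect matching. You are right to flag this as the crux, and unfortunately the step actually \emph{fails} as stated.

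Here is a concrete obstruction with $n=3$, agents $\{1,2,3\}$ and items $\{a,b,c\}$. Set
\[
t(1,a,2,b)=t(2,b,1,a)=t(1,c,3,b)=t(3,b,1,c)=t(2,a,3,c)=t(3,c,2,a)=1
\]
and all other entries $0$. One checks directly that every agent-pair row and every item-pair column sums to $1$, and the symmetry constraint holds; under the bijection $\pi$ of Lemma~\ref{lem:extreme} this is a centro-symmetric permutation matrix, hence an extreme point of the relaxation. Yet it is \emph{not} induced by any matching: agent~$1$ is paired with item $a$ in the $(1,2)$-row but with item $c$ in the $(1,3)$-row. Worse, choosing $\psi$ to equal $1$ on exactly these six quadruples and $0$ elsewhere gives this extreme point objective value $6$, while every perfect matching attains at most $2$. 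So the relaxation of~(\ref{lp:max-epw-perfect-matching}) is \emph{not} an exact formulation of 2EBM, and neither the paper's argument nor your outlined plan goes through without an additional idea (e.g., extra constraints enforcing consistency across partners, or a different decomposition of the objective that avoids the edge-pair lift altogether).
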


We are ready to show how solutions to appropriately defined instances of 2EBM can be used as separation oracles for solving the linear program (\ref{dual-LP1}) when $\sw$ is the utilitarian (Section~\ref{sec:util}), egalitarian (Section~\ref{sec:egal}), and log-Nash (Section~\ref{sec:log-nash}) social welfare.

\subsection{Utilitarian social welfare}\label{sec:util}
By the definition of the utilitarian social welfare, we have
\begin{align*}
    \util(b) &=\sum_{(i,j)\in b}{v_i(j)}= \sum_{(i,j)\in b}\sum_{\substack{(k,\ell)\in b:\\k\not=i}}{\frac{v_i(j)+v_k(\ell)}{2(n-1)}},
\end{align*}
and, using $\sw(b)=\util(b)$, the constraint of the dual linear program (\ref{dual-LP1}) corresponding to a matching $b\in \M$ is equivalent to
\begin{align}\label{eq:constr-sep-U}
    \sum_{(i,j)\in b}\sum_{\substack{(k,\ell)\in b:\\k\not=i}}{\left((v_i(j)-v_i(\ell))\cdot y(i,j,k)+\frac{v_i(j)+v_k(\ell)}{2(n-1)}+\frac{z}{n(n-1)}\right)} \leq 0.
\end{align}
So, consider the instance of 2EBM with edge weights defined as
\begin{align*}
    \psi(i,j,k,\ell) &= (v_i(j)-v_i(\ell))\cdot y(i,j,k)+\frac{v_i(j)+v_k(\ell)}{2(n-1)}+\frac{z}{n(n-1)}.
\end{align*}
Then, for a matching $b\in \M$, we have that the objective function of 2EBM, $\Psi(b)$, shown in (\ref{2ebm-objective}), is equal to the left-hand-side of inequality (\ref{eq:constr-sep-U}) and, consequently, to the left-hand-side of the constraint of the dual linear program (\ref{dual-LP1}), when the utilitarian definition of the social welfare is used.

Now, the separation oracle for the dual linear program (\ref{dual-LP1}) works as follows. It solves the instance of 2EBM just described and computes a matching $b^*\in \M$ that maximizes the quantity $\Psi(b)$, i.e., $b^*\in \argmax_{b\in \M}{\Psi(b)}$. If $\Psi(b^*)>0$, the constraint corresponding to the matching $b^*$ in the dual linear program (\ref{dual-LP1}) is returned as a violating constraint. Otherwise, it must be $\Psi(b)\leq 0$ for every matching $b\in \M$ and the separation oracle correctly returns that no such violating constraint exists.

\subsection{Egalitarian social welfare}\label{sec:egal}
Let $L$ denote the different values the valuations $v_i(j)$ of an agent $i$ for item $j$ can get, i.e., $L=\{v_i(j):i\in \agents, j\in \items\}$. For $e\in L$, denote by $\M_e$ the set of perfect matchings so that for any agent $i$ that is assigned to item $j$, it holds that $v_i(j)\geq e$. Observe that the perfect matching $b\in \M$ belongs to set $\M_e$ for every $e\leq \egal(b)$. Then, the constraints of the dual linear program (\ref{dual-LP1}) for the egalitarian definition of the social welfare are captured by the following set of constraints:
\begin{align}\label{eq:constr-sep-E}
    \sum_{(i,j)\in b}\sum_{\substack{(k,\ell)\in b:\\k\not=i}}{\left((v_i(j)-v_i(\ell))\cdot y(i,j,k)+\frac{e+z}{n(n-1)}\right)} \leq 0, \quad\quad e\in L, b\in \M_e
\end{align}
Indeed, for every matching $b\in \M$, the set of constraints (\ref{eq:constr-sep-E}) contains the constraint corresponding to $b$ in the dual linear program (\ref{dual-LP1}) with $\sw=\egal$ and, possibly, the redundant constraints
\begin{align*}
z+\sum\limits_{\substack{(i,j)\in b\\k\in \agents\setminus \{i\}}}{(v_i(j)-v_i(b(k)))\cdot y(i,j,k)} +e \leq 0,
\end{align*}
for $e\in L$ with $e<\egal(b)$ (if any). So, to design the separation oracle for the dual linear program (\ref{dual-LP1}), it suffices to design a separation oracle for the set of constraints (\ref{eq:constr-sep-E}), for each of the $\bigO(n^2)$ different values of $e\in L$. We now show how to do so.

For $e\in L$, let $\X_e$ be the subset of $\X$ such that $v_i(j)\geq e$ and $v_k(\ell)\geq e$. Essentially, the quadruples of $\X_e$ correspond to all possible (ordered) pairs of different edges in a perfect matching of $\M_e$. Now, for every $e\in L$, consider the instance of 2EBM with weights
\begin{align*}
    \psi(i,j,k,\ell) &= (v_i(j)-v_i(\ell))\cdot y(i,j,k)+\frac{e+z}{n(n-1)}
\end{align*}
for quadruple $(i,j,k,\ell)\in \X_e$. Then, for a matching $b\in \M_e$, the objective function of 2EBM, $\Psi(b)$, shown in (\ref{2ebm-objective}), is equal to the left-hand-side of inequality (\ref{eq:constr-sep-E}). Now, for each $e\in L$, the separation oracle computes the matching $b^*_e$ that maximizes the quantity $\Psi(b)$ among all matchings of $\M_e$. A violating constraint (corresponding to matching $b^*_e$) is then found if $\Psi(b^*_e)>0$ for some $e\in L$. %(The violating constraint will correspond to the maximum such $e$.)
Otherwise, the separation oracle concludes that no constraint is violated.

\subsection{Log-Nash social welfare}\label{sec:log-nash}
Observe that the definition of the log-Nash social welfare implies
\begin{align*}
    \lognash(b) &=\sum_{(i,j)\in b}{\ln{v_i(j)}}= \sum_{(i,j)\in b}\sum_{\substack{(k,\ell)\in b:\\k\not=i}}{\frac{\ln{v_i(j)}+\ln{v_k(\ell)}}{2(n-1)}}.
\end{align*}
Hence, the only modification that is required in the approach we followed for the utilitarian social welfare is to replace the term $\frac{{v_i(j)}+{v_k(\ell)}}{2(n-1)}$ by $\frac{\ln{v_i(j)}+\ln{v_k(\ell)}}{2(n-1)}$ in the definition of $\psi$.

\section{Interim envy-free allocations with payments}
\label{sec:payments}
In this section, we extend the notion of interim envy-freeness by accompanying lotteries over allocations with payments to/from the agents. In this case, the definition of iEF uses both the value an agent has for item bundles and the payment she receives or contributes. We distinguish between three different types of payments. A vector of {agent-dependent payments} or {\em A-payments} consists of a payment $p_i$ for each agent $i\in \agents$. More refined payments are defined using additional information for an allocation instance. We say that the agents receive {\em bundle-dependent payments} or {\em B-payments} when each agent is associated with a payment of $p(S)$ when she receives the bundle of items $S$. Finally, we say that the agents get {\em allocation-dependent payments} or C-payments when each agent $i$ is associated with payment $p_i(A)$ in allocation $A$.

We now extend the notion of interim envy-freeness to pairs of lotteries and payments by distinguishing between the three payment types.

\begin{defn}\label{defn:A-payments}
We say that a pair of a lottery $\Q$ and a vector of A-payments $\p$ is iEF if for every pair of agents $i, k\in \agents$ and every bundle of items $S$ agent $i$ can get under $\Q$, it holds $v_i(S) + p_i \geq \E_{A\sim \Q}[v_i(A_k)|A_i=S]+p_k$.
\end{defn}

\begin{defn}\label{defn:B-payments}
We say that a pair of a lottery $\Q$ and a vector of B-payments $\p$ per bundle of items is iEF if for every pair of agents $i, k\in \agents$ and every bundle of items $S$ agent $i$ can get under $\Q$, it holds $v_i(S) + p(S) \geq \E_{A\sim \Q}[v_i(A_k)+p(A_k)|A_i=S]$.
\end{defn}

For C-payments, we give a more general definition that allows for marginal violations of iEF. The notion of $\epsilon$-iEF will be useful later in Section~\ref{sec:compute-payments}.
\begin{defn}\label{defn:C-payments}
Let $\epsilon\geq 0$. We say that a pair of a lottery $\Q$ and a vector of C-payments $\p$ per agent and allocation is $\epsilon$-iEF if for every pair of agents $i, k\in \agents$ and every bundle of items $S$ agent $i$ can get under $\Q$, it holds $v_i(S) + \E_{A\sim \Q}[p_i(A)|A_i=S] \geq \E_{A\sim \Q}[v_i(A_k)+p_k(A)|A_i=S]-\epsilon$.
\end{defn}
The term iEF with C-payments is used alternatively to $0$-iEF. We remark that the payments are added to the value agents have for bundles in the above definitions. So, in general, the payments are assumed to be received by the agents. To represent payments that are contributed by the agents, we can allow negative entries in the payment vectors. We also remark that the definitions refer to general allocation instances. Indeed, the result we present in Section~\ref{sec:char} applies to general instances. Then, in Sections~\ref{sec:opt} and~\ref{sec:compute-payments}, we restrict our attention to matching instances and adapt the definitions accordingly.

Our technical contribution regarding iEF allocations with payments is many-fold. First, we characterize in Section~\ref{sec:char} those lotteries $\Q$ that can be complemented with vectors of A-payments $\p$, so that the pair ($\Q,\p$) is iEF with A-payments. There, our focus is on the existence of payments, with no additional restrictions on them. In Sections~\ref{sec:opt} and~\ref{sec:compute-payments}, we specifically consider two particular optimization problems that involve iEF allocations with payments; we define these problems in the following.

In both optimization problems, we are given an allocation instance and the objective is to compute a lottery $\Q$ over allocations and a payment vector $\p$ so that the pair ($\Q,\p$) is iEF with payments. In {\em subsidy minimization}, the payments are subsidies given to the agents by an external authority. So, the corresponding entries in the payment vector $\p$ are constrained to be non-negative. The goal of subsidy minimization is to find an iEF allocation and accompanying payments, such that the total expected amount of subsidies is minimized; the objective is equal to $\sum_{i\in \agents}{p_i}$, $\E_{A\sim\Q}[\sum_{i\in\agents}{p(A_i)}]$, or $\E_{A\sim\Q}[\sum_{i\in \agents}{p_i(A)}]$, depending on whether $\p$ is an A-, B-, or C-payment vector, respectively. Subsidy minimization is the generalization of the problem that was recently studied for deterministic allocations and envy-freeness in~\citep*{BDN+20, CI20, HS19}.

Our second optimization problem is called {\em utility maximization} and can be thought of as an extension of rent division~\citep*{GMPZ17} to lotteries and iEF. There is a rent $R$ and the payments are contributions from the agents that, in expectation, should sum up to $R$. So, the entries in the payment vector $\p$ are constrained to be non-positive. The goal is to find an iEF allocation and accompanying payments, such that the minimum expected utility over all agents is maximized. Depending on whether the problem asks for A-, B-, or C-payments, the utility of agent $i\in \agents$ from allocation $A$ is $v_i(A_i)+p_i$, $v_i(A_i)+p(A_i)$, and $v_i(A_i)+p_i(A)$, respectively.

As we will see in Section~\ref{sec:opt}, both subsidy minimization and utility maximization admit much better solutions compared to their versions with deterministic allocations and envy-freeness that had been previously studied in the literature. In addition, the quality of solutions depends on the type of payments. We demonstrate that there is no general advantage of A- or B-payments against each other; this justifies the importance of both types of payments. Clearly, C-payments allow for the best possible solutions as they generalize both A- and B-payments. In Section~\ref{sec:compute-payments}, we restrict our focus on matching instances and show how to solve subsidy minimization and utility maximization efficiently, by exploiting the machinery we developed in Section~\ref{sec:computing}.

\subsection{A characterization for A-payments}\label{sec:char}
We now extend the notion of envy-freeability from recent works focusing on the use of subsidies in fair division settings (e.g., see~\citep*{BDN+20, CI20, HS19}), and earlier studies in rent division (e.g., see~\citep*{ADG91,S09}). Given an allocation instance and a lottery $\Q$ over allocations, we say that $\Q$ is {\em iEF-able} with A-payments if there is a vector $\p$ of A-payments so that the pair ($\Q,\p$) is iEF. Even though we will not need these terms here, we can define the term iEF-ability with B- or C-payments analogously.

Our main result in this section (Theorem~\ref{thm:characterization}) will be a characterization of the lotteries that are iEF-able with A-payments. The notion of the {\em interim envy graph} will be very useful; it extends the notion of the envy-graph that is central in the characterization of envy-freeable allocations (see, e.g.,~\citep*{HS19}).

\begin{defn}\label{defn:envy-graph}
Given a lottery $\Q$ over allocations of the items in set $\items$ to the agents in set $\agents$, the interim envy graph $\EG(\Q,\agents,\items)$ is a complete directed graph with $n$ nodes corresponding to the agents of $\agents$, and edge weights defined as
\begin{align*}
w(i,k) &= \max_{\substack{S \subseteq \items:\\\Pr_{A\sim \Q}[A_i=S]>0}}{\left\{\E_{A\sim \Q}[v_i(A_k)|A_i=S] - v_i(S)\right\}}
\end{align*}
for every directed edge $(i,k)$.
\end{defn}

Our characterization follows; it extends well-known characterizations for deterministic envy-freeable allocations, e.g., see~\citep*{HS19,GMPZ17}. A quick comparison reveals that the second condition in Theorem~\ref{thm:characterization} is much less restrictive than an analogous condition for envy-freeability, which asserts that envy-freeable allocations locally maximize the utilitarian social welfare among all possible redistributions of the bundles to the agents. This justifies our claim that the space of iEF-able lotteries is quite rich. 

\begin{theorem}\label{thm:characterization}
For a lottery $\Q$ over allocations, the following statements are equivalent:
\renewcommand{\labelenumi}{(\roman{enumi})}
\begin{enumerate}%[(i)]
    \item $\Q$ is iEF-able with A-payments.
    \item For every agent $i\in \agents$, let $S_i$ be any bundle of items that is allocated to agent $i$ with positive probability according to $\Q$. Also, let $\sigma:\agents\rightarrow \agents$ be any permutation of agents. Then,
    \begin{align*}
        \sum_{i\in \agents}{v_i(S_i)} & \geq \sum_{i\in \agents}{\E_{A\sim \Q}[v_i(A_{\sigma(i)})|A_i=S_i]}.
    \end{align*}
    \item The interim envy graph $\EG(\Q,\agents,\items)$ has no cycle of positive weight.
\end{enumerate}
\end{theorem}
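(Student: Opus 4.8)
The plan is to establish the three-way equivalence by proving the cyclic chain of implications $(i)\Rightarrow(ii)\Rightarrow(iii)\Rightarrow(i)$. The unifying observation, which I would state up front, is that iEF-ability with A-payments is---after expanding Definition~\ref{defn:A-payments} and maximizing over the bundle $S$---precisely the requirement that the payment vector $\p$ be a \emph{potential} for the interim envy graph: the pair $\Q,\p$ is iEF if and only if $p_i-p_k\geq \E_{A\sim\Q}[v_i(A_k)\mid A_i=S]-v_i(S)$ for every valid $S$, which by the definition of $w(i,k)$ is equivalent to $p_i-p_k\geq w(i,k)$ for every edge $(i,k)$ of $\EG(\Q,\agents,\items)$. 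With this reformulation in hand, the whole statement reduces to the classical fact that a weighted digraph admits such a potential exactly when it has no positive-weight cycle.

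For $(i)\Rightarrow(ii)$ I would start from an iEF payment vector $\p$ and, for an arbitrary choice of bundles $S_i$ (each with $\prm_{A\sim\Q}[A_i=S_i]>0$) and an arbitrary permutation $\sigma$, instantiate the iEF-with-A-payments inequality with $k=\sigma(i)$ and $S=S_i$, obtaining $v_i(S_i)+p_i\geq \E_{A\sim\Q}[v_i(A_{\sigma(i)})\mid A_i=S_i]+p_{\sigma(i)}$. Summing over all $i\in\agents$ and using that $\sigma$ is a bijection, so that $\sum_i p_{\sigma(i)}=\sum_i p_i$, the payment terms cancel and exactly inequality~(ii) survives. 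The only point to record is that the conditional expectations are well defined precisely because each $S_i$ is reachable with positive probability.

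For $(ii)\Rightarrow(iii)$ I would argue by contraposition. Suppose $\EG(\Q,\agents,\items)$ contains a positive-weight cycle $i_1\to i_2\to\cdots\to i_r\to i_1$. For each edge $(i_t,i_{t+1})$ let $S_{i_t}$ be a bundle attaining the maximum in the definition of $w(i_t,i_{t+1})$, and extend to a full tuple $(S_i)_{i\in\agents}$ arbitrarily on the remaining agents. Let $\sigma$ be the permutation acting as the $r$-cycle $(i_1\,i_2\,\cdots\,i_r)$ and fixing every other agent. Agents outside the cycle contribute identical terms $\E_{A\sim\Q}[v_i(A_i)\mid A_i=S_i]=v_i(S_i)$ to both sides of~(ii), so they cancel, while the cyclic agents contribute exactly the cycle weight to the gap between the two sides; positivity of the cycle then violates~(ii). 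This step, where the cycle-decomposition viewpoint of permutations meets the maximum hidden in the edge weights, is the conceptual heart of the argument---though routine once set up.

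For $(iii)\Rightarrow(i)$ I would build the potential explicitly. Interpreting $-w(i,k)$ as the length of edge $(i,k)$, the desired inequalities $p_i-p_k\geq w(i,k)$ read $p_k\leq p_i-w(i,k)$, i.e.\ $\p$ is a feasible vertex potential for these lengths. I would adjoin a source vertex joined to every agent by a zero-length arc and set $p_i$ to the shortest-path distance from the source to $i$; the triangle inequality then yields the required inequality for every edge, and hence, via the reformulation, the full iEF-with-A-payments condition for all bundles at once. The main obstacle---and the only place where~(iii) is actually used---is guaranteeing these shortest paths are finite: a $-\infty$ distance would arise exactly from a negative-length cycle under the lengths $-w$, that is, a positive-weight cycle of $\EG(\Q,\agents,\items)$, which~(iii) excludes. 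With finiteness secured, the constructed $\p$ certifies iEF-ability and closes the loop.
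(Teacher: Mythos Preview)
Your proposal is correct and follows essentially the same cyclic chain $(i)\Rightarrow(ii)\Rightarrow(iii)\Rightarrow(i)$ as the paper, with identical arguments for the first two implications. The only cosmetic difference is in $(iii)\Rightarrow(i)$: the paper sets $p_i$ to be the weight of the longest path \emph{from} node $i$ in $\EG(\Q,\agents,\items)$, whereas you adjoin a source and take shortest-path distances under negated weights---these are equivalent potential constructions and yield the same inequality $p_i\geq p_k+w(i,k)$.
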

\begin{proof}
{\bf (i) $\Rightarrow$ (ii).}
First, assume that statement (i) holds, i.e., that lottery $\Q$ is iEF-able with A-payments. This implies that there exists a vector $\p$ of A-payments so that the pair ($\Q,\p$) is iEF. By applying the iEF condition (Definition~\ref{defn:A-payments}) for agent $i\in \agents$, agent $\sigma(i)\in \agents$, and bundle $S_i$, we get:
\begin{align*}
v_i(S_i)+p_i \geq \E_{A\sim \Q}[v_i(A_{\sigma(i)})|A_i=S_i]+p_{\sigma(i)}.
\end{align*}
By rearranging and summing these inequalities over all agents $i\in \agents$, we get
\begin{align}\label{eq:ief-sum}
\sum_{i\in \agents}{\left(v_i(S_i)- \E_{A\sim\Q}[v_i(A_{\sigma(i)})|A_i=S_i]\right)} & \geq \sum_{i\in \agents}{\left(p_{\sigma(i)}-p_i\right)}.
\end{align}
Since $\sigma$ is a permutation over the agents, it is $\sum_{i\in\agents}{p_{\sigma(i)}}=\sum_{i\in\agents}{p_i}$ and the RHS of (\ref{eq:ief-sum}) is equal to $0$. Then, inequality (\ref{eq:ief-sum}) is equivalent to statement (ii).

{\bf (ii) $\Rightarrow$ (iii).}
We now show that if the interim envy graph $\EG(\Q,\agents,\items)$ had a cycle of positive weight, this would violate statement (ii) for a particular permutation, $\sigma^*$, of the agents and a selection of bundles $S^*_1$, $S^*_2$, ..., $S^*_n$ such that bundle $S^*_i$ is given to agent $i$ with positive probability for each agent $i\in \agents$ under $\Q$.
%selection of an agent permutation $\sigma$ and, for each agent $i\in \agents$, of bundle $S_i$ that is given to agent $i$ with positive probability under $\Q$.
Let $C$ be such a positive-weight cycle, consisting of the agents $i_0$, $i_1$, ..., $i_{|C|-1}$. Define $\sigma^*$ to be the permutation of agents defined as $\sigma^*(i)=i$ for each agent $i$ who does not belong to cycle $C$ and as $\sigma^*(i_j)=i_{j+1 \bmod |C|}$ for each agent $i_j$ (with $j=0, 1, ..., |C|-1$) in the cycle $C$.

Observe that $\E_{A\sim\Q}[v_{A_{\sigma^*(i)}}|A_i=S_i]=v_i(S_i)$ for every agent $i$ not belonging to $C$ and every set $S_i$ allocated to $i$ with positive probability. Hence,
\begin{align}\label{eq:out-of-C}
\sum_{i\in \agents\setminus C}{v_i(S_i)} &= \sum_{i\in \agents \setminus C}{\E_{A\sim \Q}[v_i(A_{\sigma^*(i)})|A_i=S_i]}.
\end{align}
Furthermore, define the bundles $S^*_i$ for each agent $i\in \agents$ as follows. For each agent $i$ that does not belong to $C$, let $S^*_i$ be any bundle that $i$ gets with positive probability according to $\Q$. For agent $i_j$ (with $j=0, 1, ..., |C|-1$) of the cycle $C$, define
\begin{align*}
S^*_{i_j} &= \argmax_{\substack{S \subseteq \items:\\\Pr_{A\sim \Q}[A_{i_j}=S]>0}}{\left\{\E_{A\sim \Q}[v_{i_j}(A_{i_{j+1 \bmod |C|}})|A_{i_j}=S] - v_{i_j}(S)\right\}}.
\end{align*}
Since the total weight of the edges of the interim envy graph $\EG(\Q,\agents,\items)$ is positive, we have (using the definition of edge weight in the interim envy graph in Definition~\ref{defn:envy-graph})
\begin{align*}
0 &< \sum_{j=0}^{|C|-1}{\max_{\substack{S \subseteq \items:\\\Pr_{A\sim \Q}[A_{i_j}=S]>0}}\left\{\E_{A\sim\Q}[v_{i_j}(A_{i_{j+1 \bmod |C|}})|A_{i_j}=S]-v_{i_j}(S)\right\}}\\
&= \sum_{j=0}^{|C|-1}{\left(\E_{A\sim\Q}[v_{i_j}(A_{i_{j+1 \bmod |C|}})|A_{i_j}=S^*_{i_j}]-v_{i_j}(S^*_{i_j})\right)}\\
&= \sum_{i\in C}{\left(\E_{A\sim\Q}[v_i(A_{\sigma^*(i)})|A_i=S^*_{i}]-v_i(S^*_{i})\right)},
\end{align*}
and, equivalently,
\begin{align}\label{eq:in-C}
\sum_{i\in C}{v_i(S^*_{i})} &< \sum_{i\in C}{\E_{A\sim \Q}[v_i(A_{\sigma^*(i)})|A_i=S^*_{i}]}.
\end{align}
Now, by summing equation (\ref{eq:in-C}) with equation (\ref{eq:out-of-C}) for $S_i=S^*_i$, we get a contradiction to statement (ii) for the particular permutation $\sigma^*$ and the given selection of sets $S^*_i$ for $i\in \agents$.

{\bf (iii) $\Rightarrow$ (i).}
Finally, assuming that the interim envy graph $\EG(\Q,\agents,\items)$ has no positive cycle, we will show that $\Q$ is iEF-able with A-payments. For every agent $i\in \agents$, define her payment $p_i$ to be the total weight in the longest (in terms of total length) path from node $i$ in the interim envy graph (this is well defined because $\EG(\Q,\agents,\items)$ has no positive cycles). Then, for every pair of agents $i,k\in \agents$, and for every bundle $S'$ agent $i$ gets with positive probability under $\Q$, we have (using the definition of the edge weights in the interim envy graph)
\begin{align*}
p_i &\geq p_k+w(i,k)\\
&= p_k+\max_{\substack{S \subseteq \items:\\\Pr_{A\sim \Q}[A_i=S]>0}}{\left\{\E_{A\sim \Q}[v_i(A_k)|A_i=S] - v_i(S)\right\}}\\
&\geq p_k+\E_{A\sim \Q}[v_i(A_k)|A_i=S'] - v_i(S').
\end{align*}
By rearranging we get that the iEF condition for agent $i$ with respect to agent $k$ and the bundle $S'$ is satisfied.
\end{proof}
\subsection{Contrasting A-payments with B-payments}\label{sec:opt}
We now attempt a comparison between different types of payments. First, we remark that iEF lotteries with A-payments (or B-payments) can yield much lower total expected subsidies and much higher minimum expected utility for utility maximization instances, compared to envy-free allocations with payments. This should be clear given Lemma~\ref{lem:ief-vs-ef}; we give explicit bounds on subsidy minimization and utility maximization with the next example.

\begin{example}\label{ex:ief-vs-ef}
Consider the instance at the left of Table~\ref{tab:ef-vs-ief-payments}. By the characterization of~\cite{HS19}, we know that in matching instances only the most efficient allocation of items to agents is envy-freeable. Therefore, $a$-$b$-$c$ and $a$-$c$-$b$ are the only envy-freeable allocations and this is possible with a payment of $1/3$ to agent 1 (or, to item $a$) and no payments to the other two agents (or, to the other two items). In contrast, the lottery that has both allocations in its support with equal probability is iEF without any payments.

\begin{table}[h]
\centering
\caption{An instance of subsidy minimization (left) and utility maximization (right) with three agents where envy-freeness is inferior to iEF with A- and B-payments. \label{tab:ef-vs-ief-payments}}
{\begin{tabular}{l r}
{\begin{tabular}{c| cccc}
%\toprule
\hline
&$a$&$ b$&$c$\\
%\midrule
\hline
$1$&$1/3$&$2/3$&$0$\\
$2$&$0$&$1/2$&$1/2$\\
$3$&$0$&$1/2$&$1/2$\\
%\bottomrule
\hline
\end{tabular}}
&\quad\quad
{\begin{tabular}{c| cccc}
%\toprule
\hline
&$a$&$ b$&$c$\\
%\midrule
\hline
$1$&$1/4$&$3/4$&$0$\\
$2$&$0$&$1/2$&$1/2$\\
$3$&$0$&$1/2$&$1/2$\\
%\bottomrule
\hline
\end{tabular}}
\end{tabular}}
{}
\end{table}

Now, consider the matching instance at the right of Table~\ref{tab:ef-vs-ief-payments} and let $R=1$. The only envy-freeable allocations are $a$-$b$-$c$, $a$-$c$-$b$, $b$-$a$-$c$ and $b$-$c$-$a$. The rent shares that make the first two allocations EF are $0$, $1/2$, and $1/2$ to agents 1, 2, and 3, respectively, while they are $1/2$, $0$, and $1/2$ and $1/2$, $1/2$, and $0$ to agents 1, 2, and 3, respectively, for the last two allocations. Note that agents 2 and 3 obtain utility $0$ under these payments. In contrast, the lottery that has both allocations in its support with equal probability is iEF with payments $1/6$, $5/12$, and $5/12$ by agents 1, 2, and 3, (or, to items $a$, $b$, $c$) which sum up to $1$. The (expected) utility of each agent is then $1/12$.
%\hfill\Halmos
\end{example}

We now compare A-payments to B-payments in terms of the quality of the solutions they yield the two optimization problems. In particular, the proof of Theorem~\ref{thm:a-vs-b} presents instances where B-payments are superior to A-payments. The proof of Theorem~\ref{thm:a-vs-b} shows two instances, of subsidy minimization and utility maximization, respectively. For subsidy minimization, the expected amount of subsidies achieved with B-payments is arbitrarily close to $0$, while A-payments need a constant amount of subsidies. Similarly, in the utility maximization instance, the minimum expected agent utility is arbitrarily close to $0$ with A-payments and considerably higher with B-payments. Perhaps surprisingly, there are also instances where A-payments are preferable to B-payments, as shown in the proof of Theorem~\ref{thm:b-vs-a}.

\begin{theorem}\label{thm:a-vs-b}
The solution of subsidy minimization and utility maximization with B-payments can be strictly better than the solution of the corresponding problems with A-payments.
\end{theorem}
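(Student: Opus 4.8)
The plan is to construct two explicit matching instances---one for subsidy minimization, one for utility maximization---in which B-payments provably outperform A-payments. For each instance I would first characterize the support of any iEF lottery (using Lemma~\ref{lem:ief-implies-prop} to restrict attention to proportional matchings, exactly as in the proofs of Theorems~\ref{thm:po-3} and~\ref{thm:poief-util-lower}), then exhibit a specific lottery together with B-payments achieving a good objective value, and finally argue a lower bound showing that no A-payment scheme can do comparably well.

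For the \emph{subsidy minimization} side, I would aim for an instance where some agent---say agent~$1$---always receives the same bundle in every iEF-supported matching, but other agents swap bundles across the support. Since B-payments attach to bundles rather than agents, and agent~$1$'s bundle is fixed, a single payment on that bundle (or carefully chosen bundle payments) can neutralize agent~$1$'s envy at essentially zero expected cost, driving the total expected subsidy arbitrarily close to~$0$. By contrast, with A-payments the payment $p_1$ is a fixed constant that agent~$1$ carries regardless of realization; to eliminate agent~$1$'s interim envy against every other agent across \emph{all} support matchings simultaneously, one is forced into a strictly positive constant subsidy. The key is to engineer the valuations so that the bundle-specific flexibility of B-payments lets the payment effectively ``follow'' the envied configuration, whereas the agent-specific rigidity of A-payments cannot adapt. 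I expect the lower bound for A-payments to lean on Theorem~\ref{thm:characterization}: any feasible A-payment vector must avoid positive cycles in the interim envy graph, and I would show the forced edge weights in my instance impose a constant lower bound on $\sum_i p_i$.

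For the \emph{utility maximization} side, I would use a dual construction with a rent $R$ to be collected, seeking an instance where B-payments permit spreading the rent contributions across bundles so that every agent's \emph{expected} utility stays bounded away from~$0$, while A-payments force some single agent to bear a fixed (non-positive) payment that drags her expected utility arbitrarily close to~$0$. The mechanism mirrors the right-hand instance of Table~\ref{tab:ef-vs-ief-payments} in spirit: because agents who swap symmetric bundles can be charged per-bundle rather than per-agent, B-payments equalize utilities while an A-payment scheme, pinned by the iEF constraints from Definition~\ref{defn:A-payments}, cannot. I would again reduce the A-payment infeasibility to the characterization, deriving the binding envy inequalities and showing they leave no slack for a positive minimum utility.

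The main obstacle, I expect, is the A-payment \emph{lower bound} in each instance rather than the B-payment construction, which is largely a matter of writing down a lottery and checking Definitions~\ref{defn:A-payments} and~\ref{defn:B-payments}. For the A-payment side one must argue over \emph{all} A-payment vectors at once, and the cleanest route is Theorem~\ref{thm:characterization}(iii): identify a cycle in the interim envy graph whose total weight, while nonpositive (so feasibility holds), forces a sum of edge weights that pins down $\sum_i(p_{\sigma(i)}-p_i)=0$ against strictly positive envy terms, thereby bounding the objective. Getting the numerical valuations so that both the iEF support is forced and the envy-graph edge weights yield the desired constant gap---while keeping valuations normalized---is the delicate bookkeeping step; I would tune a free parameter (as with the $\epsilon$ in Theorem~\ref{thm:poief-util-lower}) so the B-payment objective tends to the ideal value while the A-payment objective stays bounded away from it.
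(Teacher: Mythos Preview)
Your overall approach matches the paper's: construct an explicit matching instance for each problem, exhibit a lottery with B-payments achieving a small (arbitrarily small in $\epsilon$) objective, and then lower-bound every A-payment solution by a constant via case analysis on the support, with Theorem~\ref{thm:characterization} doing the heavy lifting.

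There is, however, a genuine gap in the plan. You propose to ``characterize the support of any iEF lottery using Lemma~\ref{lem:ief-implies-prop} to restrict attention to proportional matchings.'' That lemma applies only to iEF lotteries \emph{without} payments; once A-payments are present, allocations in the support need not be proportional (an agent can receive a bundle she values at~$0$ if she is compensated). So proportionality cannot be the tool that narrows the support in the A-payment lower bound. In the paper's argument the support restriction comes entirely from Theorem~\ref{thm:characterization}: an allocation that forces \emph{two} agents to receive bundles they value at~$0$ creates a positive-weight cycle in the interim envy graph, and is thus ruled out regardless of payments. You should replace every appeal to Lemma~\ref{lem:ief-implies-prop} by this envy-graph reasoning.

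Your B-payment intuition is also slightly off, and fixing it clarifies why the instance works. The point is not that agent~$1$'s bundle is fixed; it is that agents~$2$ and~$3$ are \emph{symmetric} but swap two \emph{asymmetric} bundles (one slightly better, one slightly worse). A-payments, being per-agent, cannot differentiate between the two symmetric agents when one envies the other---the constraints $p_2\ge p_3+2\epsilon$ and $p_3\ge p_2+2\epsilon$ are simultaneously required and incompatible, so the lottery collapses to a deterministic allocation that then needs a constant subsidy for agent~$1$. B-payments, by contrast, can put a small extra payment on the worse bundle, restoring iEF for whichever agent receives it at cost $O(\epsilon)$. This asymmetry between symmetric agents swapping asymmetric bundles is the mechanism you want to engineer, not the fixedness of agent~$1$'s bundle.
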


\begin{proof}
For subsidy minimization, consider the matching instance at the left of Table~\ref{tab:ief-A-vs-B-payments}.
Under B-payments, consider the lottery $\Q$ that selects among allocations $a$-$b$-$c$ and $a$-$c$-$b$ equiprobably and the corresponding payment vector $\p$ such that $p(a) =  \epsilon$, $p(b) = 0$, and $p(c) = 2\epsilon$. It can be verified that the pair ($\Q,\p$) is iEF with a total payment amount of $3\epsilon$. 
Any iEF pair with A-payments, however, would require a total payment of at least $1/3+2\epsilon$. 
To see why, consider an iEF pair ($\Q',\p'$) with a smaller total payment amount. If agent $1$ gets item $c$ with positive probability, then $\Q'$ necessarily includes an allocation where two agents get their unique least valuable item. This leads to a cycle with positive weight in the interim envy graph and, by Theorem \ref{thm:characterization}, $\Q'$ cannot be part of any iEF pair. Therefore, agent $1$ gets either item $a$ or $b$ in $\Q'$ . Note that when $2$ gets $a$ it is always the case that $1$ gets $b$ and, so that agent $2$ does not envy agent $1$, it must be $p_2 \geq 1/2+ \epsilon + p_1>1/3+2\epsilon$; when agent $3$ gets $a$ the analysis is symmetric. So, agent $1$ always gets item $a$ in $\Q'$. If $\Q'$ randomizes over allocations $a$-$b$-$c$ and $a$-$c$-$b$, then we must have both $1/2-\epsilon +p_2\geq 1/2+\epsilon+p_3$ and $1/2-\epsilon +p_3\geq 1/2+\epsilon+p_2$ which cannot occur. 
So, it remains to examine the case of the deterministic allocation $a$-$b$-$c$ (or $a$-$c$-$b$ which is symmetric). Indeed, the allocation $a$-$b$-$c$ with agent-payment vector $(1/3, 0, 2\epsilon)$ is an iEF pair with a total payment of $1/3+2\epsilon$, as desired.

\begin{table}[h]
\centering
\caption{An instance of subsidy minimization (left) and utility maximization (right) with three agents where B-payments are superior, for an arbitrarily small $\epsilon>0$. \label{tab:ief-A-vs-B-payments}}
{\begin{tabular}{l r}
{\begin{tabular}{c| cccc}
%\toprule
\hline
&$a$&$ b$&$c$\\
%\midrule
\hline
$1$&$1/3$&$2/3$&$0$\\
$2$&$0$&$1/2+\epsilon$&$1/2-\epsilon$\\
$3$&$0$&$1/2+\epsilon$&$1/2-\epsilon$\\
%\bottomrule
\hline
\end{tabular}}
&\quad\quad
{\begin{tabular}{c| cccc}
%\toprule
\hline
&$a$&$ b$&$c$\\
%\midrule
\hline
$1$&$1/4$&$3/4$&$0$\\
$2$&$0$&$1/2+\epsilon$&$1/2-\epsilon$\\
$3$&$0$&$1/2+\epsilon$&$1/2-\epsilon$\\
%\bottomrule
\hline
\end{tabular}}
\end{tabular}}
{}
\end{table}

For utility maximization and $R=1$, consider the instance at the right of Table~\ref{tab:ief-A-vs-B-payments}. 
Under B-payments, consider the lottery $\Q$ that selects allocations $a$-$b$-$c$ and $a$-$c$-$b$ equiprobably and the payment vector $\p$ such that $p(a) = -1/6$, $p(b) = -5/12-\epsilon$, and $p(c)=-5/12+\epsilon$. One can again verify that the pair ($\Q,\p$) is iEF where the expected utility of each agent is $1/12$.
Any iEF pair with A-payments, however, would lead to a minimum expected utility of at most $\epsilon/3$.  
Indeed, let $\Q'$ and $\p'$ be an iEF pair with minimum expected utility larger than $\epsilon/3$. Again, Theorem \ref{thm:characterization} suggests that it cannot be that, in $\Q'$, at least two agents get an item that they value at $0$ as this would create a cycle of positive weight in the interim envy graph. So, agent $1$ can only be assigned items $a$ or $b$. 
Let agent $2$ get item $a$ with positive probability (the argument for agent $3$ is symmetric). Then agent $1$ must get item $b$ and, hence, the allocation $b$-$a$-$c$ is in $\Q'$. It must be $p_2\geq 1/2+\epsilon+p_1$ and $p_2\geq 1/2-\epsilon+p_3$, so that agent $2$ does not envy $1$ or $3$. By summing the two inequalities and adding $p_2$ to both sides, and since $p_1+p_2+p_3=-1$, we get $p_2\geq 0$. Since payments are non-positive, we obtain $p_2=0$ and, hence, $p_1 = -1/2-\epsilon$ and $p_3 = -1/2+\epsilon$ and the minimum expected utility is at most $0$. So, agent $1$ is always assigned item $a$ in $\Q'$. If $\Q'$ randomizes over allocations $a$-$b$-$c$ and $a$-$c$-$b$, it must be $1/2-\epsilon+p_3\geq 1/2+\epsilon+p_2$ and $1/2-\epsilon+p_2\geq 1/2+\epsilon+p_3$ which cannot occur. So, there remains the deterministic allocation $a$-$b$-$c$ (or $a$-$c$-$b$ which is symmetric). Indeed, the allocation $a$-$b$-$c$ forms an iEF pair with the agent-payment vector $(-2\epsilon/3, -1/2-2\epsilon/3, -1/2+4\epsilon/3)$ with a minimum expected utility of $\epsilon/3$, as desired. 
\end{proof}

\begin{theorem}\label{thm:b-vs-a}
The solution of subsidy minimization and utility maximization with A-payments can be strictly better than the solution of the corresponding problems with B-payments.
\end{theorem}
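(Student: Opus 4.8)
The plan is to mirror the structure of the proof of Theorem~\ref{thm:a-vs-b}, but with the roles of A- and B-payments reversed. That is, I would construct two small matching instances (three agents and items $a$, $b$, $c$ suffice, by analogy with the previous constructions): one for subsidy minimization and one for utility maximization, each designed so that the flexibility of \emph{per-agent} payments beats the coarser \emph{per-bundle} payments. The key asymmetry I want to exploit is that B-payments tie the compensation to the \emph{item} an agent receives, so two different agents who receive the same bundle must be compensated identically; in a lottery where the same item flows to different agents across the support, this rigidity should force extra subsidy (or depress the minimum utility). I would therefore pick valuations where the optimal iEF lottery routes a single ``expensive'' item to two distinct agents with different valuation profiles, so that an A-payment can discriminate between them but a B-payment cannot.

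\emph{For subsidy minimization}, I would first exhibit an explicit iEF pair $(\Q,\p)$ with A-payments whose total expected subsidy $\sum_{i\in\agents} p_i$ is small, by randomizing over two (or three) matchings that are individually proportional (so that Lemma~\ref{lem:ief-implies-prop} is respected) and checking the iEF-with-A-payments inequalities of Definition~\ref{defn:A-payments} directly, or, more cleanly, by verifying via Theorem~\ref{thm:characterization} that the interim envy graph has no positive-weight cycle and reading off $p_i$ as longest-path weights. I would then argue a \emph{lower bound} on any B-payment solution: I would enumerate which allocations can lie in the support of an iEF lottery (again pruning non-proportional matchings by Lemma~\ref{lem:ief-implies-prop}), and in each surviving case show that the B-payment constraints of Definition~\ref{defn:B-payments} force a strictly larger total $\E_{A\sim\Q}[\sum_{i}p(A_i)]$. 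The structural reason is that a bundle-payment $p(\cdot)$ must simultaneously satisfy the envy constraints for \emph{every} agent who can receive that bundle, so it is pinned down by the most demanding agent, wasting subsidy on the others.

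\emph{For utility maximization} with $R=1$, the argument is dual: I would present an iEF pair with A-payments achieving a certain minimum expected utility, and then show that any B-payment scheme, constrained by $\sum p(\cdot)$ contributing the fixed rent $R$ in expectation together with the rigidity just described, cannot lift the minimum expected agent utility above a strictly smaller value. Here too I would first restrict attention to proportional matchings via Lemma~\ref{lem:ief-implies-prop}, then do a short case analysis on the support, and finally exhibit the optimal A-payment vector explicitly (verifying iEF via Definition~\ref{defn:A-payments} or Theorem~\ref{thm:characterization}).

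\emph{The main obstacle} I anticipate is the lower-bound direction in each instance, i.e.\ proving that \emph{no} B-payment lottery does as well as the A-payment one. The existence (upper-bound) halves are routine verifications, but ruling out all B-payment solutions requires a complete case analysis over the possible supports of iEF lotteries, and the B-payment constraints couple the bundle-prices across all agents receiving a given item in a way that is easy to get subtly wrong. I would manage this by leaning on Lemma~\ref{lem:ief-implies-prop} to cut the support down to a handful of proportional matchings and on Theorem~\ref{thm:characterization} (no-positive-cycle) to discard further candidates quickly, so that only a small, explicitly checkable set of B-payment configurations remains; the arithmetic should then be as self-contained as in the proof of Theorem~\ref{thm:a-vs-b}. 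The delicate modeling point to get right is that a B-payment is indexed by the \emph{received bundle}, not by the agent, so when writing the iEF inequalities I must substitute $p(A_k)$ (the payment attached to whatever bundle agent $k$ holds) rather than a fixed per-agent constant, and take the conditional expectation of that quantity as in Definition~\ref{defn:B-payments}.
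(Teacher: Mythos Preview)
Your overall architecture---construct small three-agent matching instances, exhibit an explicit iEF pair with A-payments of small cost (respectively, high minimum utility), then case-analyse all B-payment possibilities---is exactly the paper's approach. The concrete instances the paper uses (Table~\ref{tab:ief-B-vs-A-payments}) also exploit the intuition you describe, though from the dual angle: rather than one item flowing to two agents, it is one agent (agent~$3$, the only one needing compensation) who receives \emph{different} items across the support, so a B-payment must inflate several item prices while an A-payment targets her directly.

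There is, however, a genuine gap in the tools you plan to use for the B-payment \emph{lower bound}. You propose to ``prune non-proportional matchings by Lemma~\ref{lem:ief-implies-prop}'' and to ``discard further candidates'' via Theorem~\ref{thm:characterization}. Neither applies here. Lemma~\ref{lem:ief-implies-prop} concerns iEF lotteries \emph{without payments}; once (B-)payments are allowed, allocations in the support need not be proportional, so you cannot restrict the support that way. Likewise, Theorem~\ref{thm:characterization} characterises iEF-ability with \emph{A-payments} specifically (and the paper notes in Section~\ref{sec:open} that analogous characterisations for B- or C-payments are open), so invoking it to rule out B-payment candidates is unjustified. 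The paper handles the B-payment lower bound differently: it argues directly from Definition~\ref{defn:B-payments} that if an agent receives an item she values at~$0$ the attached item-payment must be large (forcing total subsidy $\geq 2/5$), and otherwise works through the remaining proportional supports by writing out the coupled inequalities $p(a)-p(b)\geq 3\epsilon$ and $p(c)-p(b)\geq 3\epsilon$ that agent~$3$'s iEF conditions impose. You will need a direct argument of this kind in place of Lemma~\ref{lem:ief-implies-prop} and Theorem~\ref{thm:characterization}.
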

\begin{proof}
For subsidy minimization, consider the instance at the left of Table~\ref{tab:ief-B-vs-A-payments}.
For A-payments, consider the lottery $\Q$ that selects among allocations $a$-$b$-$c$, $a$-$c$-$b$, and $b$-$c$-$a$ equiprobably, together with the payment vector $\p= (0, 0, 3\epsilon)$. It can be verified that the pair ($\Q,\p$) is iEF with a total payment of $3\epsilon$.
An iEF pair with B-payments, however, would require payments at least $6\epsilon$. Indeed, consider an iEF pair ($\Q',\p'$) with a smaller payment amount. If in $\Q'$ an agent gets with positive probability an item she values at $0$, the total payment is at least $2/5$. Hence $\Q'$ randomizes over (at most) $a$-$b$-$c$, $a$-$c$-$b$, and $b$-$c$-$a$. If $a$-$b$-$c$ is not in $\Q'$, then agent $2$ deterministically gets item $c$. Then, $\Q'$ cannot include both $a$-$c$-$b$ and $b$-$c$-$a$ as, then, both $2/5+p_1\geq 3/5+p_3$ and $1/3-\epsilon +p_3 \geq 1/3+2\epsilon+p_1$ must hold so that agents $1$ and $3$ do not envy each other. Regardless of whether $\Q'$ contains just $a$-$c$-$b$ or just $b$-$c$-$a$, the payment is at least $1/5$ so that agent $2$ is not envious. Similarly, if $b$-$c$-$a$ is not in $\Q'$, then agent $1$ deterministically gets item $a$ and the analysis is symmetric, leading to a payment of at least $1/5$ so that agent $1$ is not envious. 
If $a$-$c$-$b$ is not in $\Q'$, then a payment of at least $1/5$ is again needed. Finally, when $\Q'$ consists of all three allocations, this yields $1/3-\epsilon+p(a)\geq 1/3+2\epsilon+p(b)$ and $1/3-\epsilon+p(c)\geq 1/3+2\epsilon+p(b)$ so that agent $3$ is not envious. This gives a lower bound of $6\epsilon$ on the payment.
Indeed, the lottery that selects equiprobably among allocations $a$-$b$-$c$, $a$-$c$-$b$, and $b$-$c$-$a$ together with payments $p(a) = 3\epsilon$, $p(b)=0$, and $p(c) = 3\epsilon$ is an iEF pair with a total payment of $6\epsilon$. 

\begin{table}[h]
\centering
\caption{An instance of subsidy minimization (left) and utility maximization (right) with three agents where A-payments are preferable, for an arbitrarily small $\epsilon>0$. \label{tab:ief-B-vs-A-payments}}
{\begin{tabular}{l r}
{\begin{tabular}{c| cccc}
%\toprule
\hline
&$a$&$ b$&$c$\\
%\midrule
\hline
$1$&$2/5$&$3/5$&$0$\\
$2$&$0$&$3/5$&$2/5$\\
$3$&$1/3-\epsilon$&$1/3+2\epsilon$&$1/3-\epsilon$\\
%\bottomrule
\hline
\end{tabular}}
&\quad\quad
{\begin{tabular}{c| cccc}
%\toprule
\hline
&$a$&$ b$&$c$\\
%\midrule
\hline
$1$&$9/20-\epsilon$&$11/20+\epsilon$&$0$\\
$2$&$0$&$11/20+\epsilon$&$9/20-\epsilon$\\
$3$&$3/10$&$2/5$&$3/10$\\
%\bottomrule
\hline
\end{tabular}}
\end{tabular}}
{}
\end{table}

For utility maximization and $R=1$, consider the instance at the right of Table~\ref{tab:ief-B-vs-A-payments}.
For A-payments, consider the lottery $\Q$ that selects allocation $a$-$b$-$c$ with probability $3/10$, $a$-$c$-$b$ with probability $2/5$, and $b$-$c$-$a$ with probability $3/10$, together with the payment vector $\p=(-11/30-\epsilon/2, -11/30-\epsilon/2, -4/15+\epsilon)$. It can be verified that the pair ($\Q,\p$) is iEF with minimum expected utility of $11/150+\epsilon$.
An iEF pair with B-payments leads to a minimum expected utility of at most $2\epsilon/3$. 
Assume otherwise and consider a pair ($\Q',\p'$) with minimum expected utility greater than $2\epsilon/3$. $\Q'$ cannot include $c$-$a$-$b$ as then it should hold $p(a)>p(c)$ and $p(c)>p(a)$. Furthermore, $\Q'$ cannot include just the allocation $a$-$c$-$b$ as then it must hold $p'(a)-p'(b)\geq 1/10+2\epsilon$ and $p'(a) -p'(b) \leq 1/10$ so that agents $1$ and $3$ do not envy each other.
Furthermore, if $\Q'$ includes $a$-$c$-$b$, then, since $c$-$a$-$b$ is not in $\Q'$, it must be $p'(a) = p'(c) =-3/10$ and $p'(b) = -2/5$ so that agent $3$ is not envious; the minimum expected utility is at most $0$ in this case. So, $\Q'$ randomizes over allocations $a$-$b$-$c$ and $b$-$c$-$a$; any such choice yields $p(a) = -2/5-4\epsilon/3$, $p'(b) = p'(c) = -3/10+2\epsilon/3$ and the minimum expected utility is $2\epsilon/3$.
\end{proof}

\subsection{Computing $\epsilon$-iEF allocations with C-payments}\label{sec:compute-payments}
In this section, we show how to solve efficiently subsidy minimization and utility maximization when we are allowed to use C-payments (and sharp approximations of iEF). Our algorithms use linear programming and the machinery we developed in Section~\ref{sec:computing}. Our result for subsidy minimization is the following.

\begin{theorem}\label{thm:sm}
Let $\epsilon>0$. Consider an instance of subsidy minimization with C-payments and let $\opt$ be the value of its optimal solution. Our algorithm computes a lottery $\Q$ and a C-payment vector $\p$ of expected value $\opt$, so that the pair ($\Q,\p$) is $\epsilon$-iEF with C-payments.
\end{theorem}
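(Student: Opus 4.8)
The plan is to follow the linear-programming template of Section~\ref{sec:computing}, augmenting the lottery variables with variables that encode the expected payments. For each matching $b\in\M$ I keep the probability variable $x(b)\ge 0$, and for each agent $i\in\agents$ and matching $b\in\M$ I introduce a payment-mass variable $q_i(b)\ge 0$ intended to equal $x(b)\cdot p_i(b)$; since subsidies are non-negative these variables are non-negative, and the total expected subsidy is exactly $\sum_{i\in\agents}\sum_{b\in\M}q_i(b)$, which becomes the minimization objective. Multiplying the $\epsilon$-iEF inequality of Definition~\ref{defn:C-payments} (for agents $i,k$ and item $j$) through by $\prm_{b\sim\x}[b(i)=j]=\sum_{b\in\M_{ij}}x(b)$ linearizes it into
\begin{align*}
\sum_{b\in\M_{ij}}\left[(v_i(j)-v_i(b(k))+\epsilon)\,x(b)+q_i(b)-q_k(b)\right]\ \ge\ 0,
\end{align*}
one constraint per triple $(i,j,k)$, together with $\sum_{b\in\M}x(b)=1$. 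This primal has only $\bigO(n^3)$ constraints but exponentially many variables, exactly the regime handled in Section~\ref{sec:computing}.

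Next I would pass to the dual, which has the polynomially many variables $y(i,j,k)\ge 0$ (one per iEF constraint) and a free variable $z$ (for the normalization), and exponentially many constraints: one per matching $b$ from the $x(b)$ columns, and one per pair $(i,b)$ from the $q_i(b)$ columns. The $x(b)$-constraint reads $\sum_{i\in\agents}\sum_{k\neq i}(v_i(b(i))-v_i(b(k))+\epsilon)\,y(i,b(i),k)+z\le 0$, which is exactly a 2EBM objective with $\psi(i,j,k,\ell)=(v_i(j)-v_i(\ell)+\epsilon)\,y(i,j,k)$ and the constant $z$ folded in as $\frac{z}{n(n-1)}$ per edge-pair; hence it is separated in polynomial time by the oracle already built in Section~\ref{sec:computing}. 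The $q_i(b)$-constraint reads $\sum_{k\neq i}y(i,b(i),k)-\sum_{i'\neq i}y(i',b(i'),i)\le 1$; for fixed $i$ its left-hand side is maximized over $b$ by choosing the item $b(i)=j$ and then matching the remaining agents to the remaining items so as to minimize $\sum_{i'\neq i}y(i',b(i'),i)$, i.e.\ by solving a polynomial number of ordinary min-cost assignment problems. Both dual families thus admit polynomial-time separation, so the ellipsoid method solves the dual while touching only polynomially many matchings; the primal can then be restricted to those matchings and solved exactly, yielding $\x$ and the values $q_i(b)$ on a support of polynomial size.

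Finally I would reconstruct the C-payments by setting $p_i(b)=q_i(b)/x(b)$ for each matching $b$ in the support and each agent $i$. The expected subsidy equals the LP optimum, and since the program is a relaxation of the exact (genuine-payment, $0$-iEF) subsidy-minimization problem, this optimum does not exceed $\opt$, which is precisely the guarantee claimed.

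The step I expect to be the crux is the coupling between the payment variables and the lottery variables. A feasible LP point may place payment mass $q_i(b)>0$ on a matching $b$ with $x(b)=0$; such ``phantom'' payments are invisible to the reconstructed pair yet appear in the linearized constraints, so the conditional-expectation form of the reconstructed iEF conditions need not be preserved when one passes from the LP solution back to $\Q,\p$ on the support. This is exactly where the $\epsilon$-slack is spent: the $+\epsilon$ term baked into every iEF constraint furnishes enough room to discard (or cap) the payment mass sitting on zero-probability matchings without destroying feasibility, so that each reconstructed conditional inequality is violated by at most $\epsilon$, i.e.\ the returned pair $\Q,\p$ is $\epsilon$-iEF while its expected subsidy stays at most $\opt$. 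Making this trade-off precise—bounding the per-allocation payments so that phantom mass provably vanishes while the objective is unharmed—is the only place the argument goes beyond a routine reuse of the Section~\ref{sec:computing} machinery.
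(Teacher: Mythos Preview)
Your high-level plan—linearize via payment-mass variables $q_i(b)$ meant to equal $x(b)p_i(b)$, solve the exponential-variable LP through its dual with a 2EBM oracle for the $x$-columns and ordinary assignment for the $q$-columns—is exactly the paper's route, and your identification of phantom mass (positive $q_i(b)$ on matchings with $x(b)=0$) as the only non-routine step is correct.

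The gap is in how you propose to spend the $\epsilon$. Baking $+\epsilon$ into every linearized constraint buys you nothing toward discarding phantom mass: the baked-in term is precisely the $\epsilon$-slack you want to \emph{end up with} in the conditional inequality, not extra room you can burn along the way. Concretely, your LP guarantees
\[
\sum_{b\in\M_{ij}}\bigl[(v_i(j)-v_i(b(k))+\epsilon)\,x(b)+q_i(b)-q_k(b)\bigr]\ \ge\ 0,
\]
while $\epsilon$-iEF for the reconstructed pair requires the \emph{same} inequality but with the $q$-terms summed only over $\{b:x(b)>0\}$. The difference between the two left-hand sides is $\sum_{b\in\M_{ij}:\,x(b)=0}\bigl(q_i(b)-q_k(b)\bigr)$, which can be strictly positive and arbitrarily large relative to $\epsilon\cdot\Pr[b(i)=j]$; in that case discarding the phantom mass drives the support-restricted sum below $0$ and the reconstructed pair fails $\epsilon$-iEF. ``Capping'' does not help either, since $q_i(b)$ appears with a $+$ sign in some constraints and a $-$ sign in others.

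The paper's fix goes the other way: it solves the LP with $\epsilon=0$ and then perturbs the \emph{lottery} rather than the payments. Using that an extreme solution has only polynomially many phantom matchings $\K_2$, it sets $x'(b)=(1-\delta)x(b)$ on the original support and $x'(b)=\delta/|\K_2|$ on each phantom matching, keeping all $t_i(b)$ unchanged. Now every matching carrying payment mass has positive probability, so $p_i(b)=t_i(b)/x'(b)$ is well-defined everywhere it matters; and the perturbation shifts each linearized constraint by at most $2\delta V_{\max}$, so choosing $\delta=\epsilon/(2V_{\max})$ yields $\epsilon$-iEF with the objective unchanged. Replacing your ``bake in $\epsilon$, then discard'' step with this lottery perturbation closes the gap.
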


%\proof{Proof of Theorem \ref{thm:sm}.}
We begin by defining a linear program for computing an iEF pair of lottery and C-payment vector. We use the variable vector $\x$ to denote the lottery. By Definition~\ref{defn:C-payments}, the iEF constraint for agent $i\in \agents$ who is assigned item $j\in \items$ with positive probability under lottery $\x$ and another agent $k\in\agents\setminus\{i\}$ is
\begin{align}\label{eq:ief-condition-p}
    v_i(j)+\E_{b\sim \x}[p_i(b)|b(i)=j] &\geq \E_{b\sim \x}[v_i(b(k))+p_k(b)|b(i)=j].
\end{align}
By multiplying the left-hand-side of (\ref{eq:ief-condition-p}) with the positive probability $\prm_{b\sim\x}[b(i)=j]$, we get
\begin{align}\nonumber
    &v_i(j)\cdot \prm_{b\sim\x}[b(i)=j]+\E_{b\sim \x}[p_i(b)|b(i)=j]\cdot \prm_{b\sim\x}[b(i)=j]\\\label{eq:ief-condition-p-LHS}
     &= \sum_{b\in \M_{ij}}{x(b)\cdot v_i(j)}+\sum_{b\in \M_{ij}}{x(b)\cdot p_i(b)}.
\end{align}
By multiplying the right-hand-side of (\ref{eq:ief-condition-p}) again with $\prm_{b\sim\x}[b(i)=j]$, we obtain
\begin{align}\nonumber
    &\E_{b\sim \x}\left[v_i(b(k))+p_k(b)|b(i)=j]\cdot \prm_{b\sim\x}[b(i)=j\right]\\\label{eq:ief-condition-p-RHS}
    &= \sum_{b\in \M_{ij}}{x(b)\cdot v_i(b(k))}+\sum_{b\in \M_{ij}}{x(b)\cdot p_k(b)}.
\end{align}
Hence, using (\ref{eq:ief-condition-p-LHS}) and (\ref{eq:ief-condition-p-RHS}), (\ref{eq:ief-condition-p}) yields
\begin{align}\label{eq:ief-condition-p-final}
\sum_{b\in \M_{ij}}{\left(x(b)\cdot \left(v_i(j)-v_i(b(k))\right)+x(b)\cdot p_i(b)-x(b)\cdot p_k(b)\right)} &\geq 0.
\end{align}
Notice the products $x(b)\cdot p_i(b)$ and $x(b)\cdot p_k(b)$ in the above expression. In such terms, both factors are unknowns that we have to compute. As $p_i(b)$ always appears multiplied with $x(b)$ in the above expressions, we can avoid non-linearities by introducing the variable $t_i(b)$ for every agent $i\in \agents$ and matching $b\in \M$ to be thought of as equal to $x(b)\cdot p_i(b)$. With this interpretation in mind, equation (\ref{eq:ief-condition-p-final}) becomes
\begin{align*}
\sum_{b\in \M_{ij}}{\left(x(b)\cdot (v_i(j)-v_i(b(k)))+t_i(b)-t_k(b)\right)} \geq 0.
\end{align*}
Furthermore, our objective is to minimize $\sum_{b\in M}\sum_{i\in\agents}{t_i(b)}$ since \begin{align*}
\sum_{b\in \M}{\sum_{i\in \agents}{t_i(b)}}=\sum_{b\in \M}{\sum_{i\in \agents}{x(b)\cdot p_i(b)}}=\E_{b\sim \x}\left[\sum_{i\in \agents}{p_i(b)}\right].
\end{align*}
Summarizing, our linear program for subsidy minimization is
\begin{equation}\label{LP-min-payments}
\begin{array}{ll@{}ll}
\text{minimize}  & \displaystyle\sum\limits_{b\in \M}\sum\limits_{i\in \agents} t_i(b) & \\
\text{subject to}& \displaystyle\sum\limits_{b\in \M_{ij}}   \left(x(b)\cdot (v_i(j)-v_i(b(k)))+t_i(b)-t_k(b)\right) \geq 0, & i\in \agents, j\in \items, k\in \agents\setminus\{i\}\\
 & \displaystyle\sum\limits_{b\in \M}   x(b) =1 &\\
 &                                                \quad\quad x(b) \geq 0, & b\in \M\\
 &                                                \quad\quad t_i(b) \geq 0, & b\in \M, i\in \agents
\end{array}
\end{equation}

The proof of our next lemma (Lemma~\ref{lem:LP-min-payments}) follows along similar lines to the approach we followed in Section~\ref{sec:computing}. A solution of the linear program~(\ref{LP-min-payments}) naturally gives an iEF pair of lottery $\x$ and C-payments vector $\p$ when $x(b)=0$ for a matching $b\in\M$ implies that $t_i(b)=0$ for every agent $i\in\agents$. Unfortunately, we have not excluded the case that $x(b)=0$ and $t_i(b)>0$ in the solution of the linear program (\ref{LP-min-payments}). We take care of such cases by modifying the solution returned by our algorithm and violating the iEF condition marginally. 

\begin{lemma}\label{lem:LP-min-payments}
The linear program (\ref{LP-min-payments}) can be solved in polynomial time.
\end{lemma}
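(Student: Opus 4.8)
The plan is to reuse the computational template of Section~\ref{sec:computing}. The linear program (\ref{LP-min-payments}) has exponentially many variables, namely $x(b)$ and $t_i(b)$ for each $b\in\M$, but only the $O(n^3)$ iEF constraints together with the single normalization constraint are nontrivial. I would therefore pass to the dual, which has polynomially many variables, and solve it with the ellipsoid method, supplying a polynomial-time separation oracle for its exponentially many constraints. Assigning a nonnegative multiplier $y(i,j,k)$ to each iEF constraint and a free multiplier $z$ to the normalization constraint, a routine dualization yields the objective of maximizing $z$ subject to two families of constraints. The variables $x(b)$ produce, for each $b\in\M$,
\[ z+\sum_{i\in\agents}\sum_{k\neq i}y(i,b(i),k)\bigl(v_i(b(i))-v_i(b(k))\bigr)\leq 0, \]
while the variables $t_i(b)$ produce, for each $i\in\agents$ and $b\in\M$,
\[ \sum_{k\neq i}y(i,b(i),k)-\sum_{k\neq i}y(k,b(k),i)\leq 1. \]

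The first family is formally identical to the constraints of (\ref{dual-LP1}) with the social welfare term removed. I would handle it exactly as in Section~\ref{sec:util}: set up the 2EBM instance with edge-pair weights $\psi(i,j,k,\ell)=y(i,j,k)(v_i(j)-v_i(\ell))+\frac{z}{n(n-1)}$, so that $\Psi(b)$ equals the left-hand side of the first displayed inequality, solve it in polynomial time (since 2EBM is polynomial-time solvable), and report a violation precisely when the optimal $\Psi(b)$ is positive.

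The second family is new, and dealing with it is where the argument needs a fresh idea. The key observation is that, for a fixed agent $i$, the left-hand side is \emph{linear} in the matching $b$: assigning $i$ to item $j$ contributes the edge weight $\sum_{k\neq i}y(i,j,k)$, and assigning any agent $k\neq i$ to item $\ell$ contributes the edge weight $-y(k,\ell,i)$, each depending only on a single edge. Hence maximizing this quantity over $b\in\M$ is an ordinary maximum-weight bipartite perfect matching (assignment) problem, solvable in polynomial time, and a violated constraint exists iff the optimum exceeds $1$. Running this for each of the $n$ agents, and combining it with the 2EBM test above, gives the required separation oracle.

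With the oracle in hand, the ellipsoid method solves the dual in polynomial time, touching only polynomially many constraints. As explained after (\ref{dual-LP1}), I would then set to zero all primal variables $x(b)$ and $t_i(b)$ not flagged during the execution and solve the resulting polynomial-size primal, recovering the lottery $\x$ and the C-payments $p_i(b)=t_i(b)/x(b)$ on its (polynomial) support. The main obstacle is conceptual rather than technical: one must recognize that the $t_i(b)$-constraints, unlike the $x(b)$-constraints, are linear rather than quadratic in the matching, so they reduce to the classical assignment problem instead of requiring 2EBM; once this is seen, the rest follows the established template.
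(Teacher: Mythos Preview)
Your proposal is correct and follows essentially the same approach as the paper: you dualize, handle the $x(b)$-constraints via 2EBM with the same edge-pair weights, and handle the $t_i(b)$-constraints via $n$ classical assignment problems using the same observation that they are linear in the matching. The only extraneous remark is the final recovery of $p_i(b)=t_i(b)/x(b)$, which the paper defers to the next lemma precisely because $x(b)=0$ with $t_i(b)>0$ is not excluded; but this lies outside the scope of the present lemma.
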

\begin{proof}
We follow our approach from Section~\ref{sec:computing} and solve the linear program (\ref{LP-min-payments}) by applying the ellipsoid method to its dual linear program:
\begin{equation}\label{dual-LP-min-payments}
\begin{array}{ll@{}ll}
\text{maximize}  & z & \\
\text{subject to}& z + \displaystyle\sum\limits_{\substack{(i,j)\in b\\k\in \agents \setminus\{i\}}} {(v_i(j)-v_i(b(k)))\cdot y(i,j,k)} \leq 0, & \,\,b \in\M\\
 & \sum\limits_{k\in \agents\setminus\{i\}}{\left(y(i,b(i),k)-y(k,b(k),i)\right)} -1 \leq 0, & \,\,i\in \agents, b\in \M\\
 & y(i,j,k) \geq 0, i\in \agents, j\in \items, k\in\agents\setminus\{i\}\\
\end{array}
\end{equation}
The first set of constraints of the dual linear program (\ref{dual-LP-min-payments}) is very similar to the constraints of the linear program (\ref{dual-LP1}) for the problem without payments. The separation oracle will check whether a given assignment of values to the dual variables satisfies the first set of constraints by solving an appropriately defined instance of 2EBM. The second set of constraints is simpler; the separation oracle will perform several classical bipartite matching computations.

In particular, consider the instance of 2EBM consisting of the complete bipartite graph $G(\agents, \items,\agents\times\items)$ with edge-pair weight
\begin{align*}
\psi(i,j,k,\ell) &= (v_i(j)-v_i(\ell))\cdot y(i,j,k)+\frac{z}{n(n-1)}
\end{align*}
for each quadruple $(i,j,k,\ell)\in \X$. Then, the total weight $\Psi(b)$ over all edge-pairs of matching $b\in \M$ is equal to the LHS of the constraint corresponding to matching $b$ among those in the first set of constraints. Furthermore, for agent $i\in \agents$, consider the complete bipartite graph $G^i(\agents,\items,\agents\times\items)$ with weight $c(i,\ell)=\sum_{k\in \agents\setminus\{i\}}{y(i,\ell,k)}-\frac{1}{n}$ for every $\ell\in \items$ and $c(k,\ell)=-y(k,\ell,i)-\frac{1}{n}$, for every $k\in \agents\setminus\{i\}$ and $\ell\in \items$. Then, the total weight of edges in a perfect matching $b\in \M$ is equal to the LHS of the second constraint of the dual linear program (\ref{dual-LP-min-payments}) associated with matching $b\in \M$ and agent $i\in \agents$.

Hence, our separation oracle for the dual linear program (\ref{dual-LP-min-payments}) first solves the instance of 2EBM and computes a maximum edge-pair-weighted perfect matching $b^*$ in $G$. Then, it computes a maximum edge-weighted perfect matching $b^*_i$ in graph $G^i$ for $i\in \agents$. If $\Psi(b^*)>0$, then it reports the constraint associated with matching $b^*\in \M$ from the first set of constraints of the dual linear program (\ref{dual-LP-min-payments}) as violating. If the total edge-weight of some matching $b^*_i$ is strictly positive, then it reports the constraint associated with agent $i\in \agents$ and matching $b^*\in \M$ from the second set of constraints of (\ref{dual-LP-min-payments}) as violating. Otherwise, i.e., if all perfect matchings have non-positive total edge-pair or edge weight, the separation oracle reports that no constraint is violated.
\end{proof}

The next lemma completes the proof of Theorem~\ref{thm:sm}.

\begin{lemma}\label{lem:violate-ief}
For every $\epsilon>0$, given any extreme solution to the linear program (\ref{LP-min-payments}) of objective value $\opt$, an $\epsilon$-iEF lottery $\x'$ with C-payments $\p'$ of total expected value $\opt$ can be computed in polynomial time.
\end{lemma}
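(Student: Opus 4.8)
The plan is to start from an extreme optimal solution $(\x,\tbold)$ of the linear program~(\ref{LP-min-payments}). Since~(\ref{LP-min-payments}) has only polynomially many constraints apart from the non-negativity ones, any extreme solution has polynomially many non-zero variables; in particular, the matchings $b$ with $x(b)>0$ and the triples with $t_i(b)>0$ are both polynomially many. On every matching $b$ with $x(b)>0$ I would recover the C-payments directly by setting $p_i(b)=t_i(b)/x(b)$, which is well defined and, by the derivation of~(\ref{eq:ief-condition-p-final}), makes the pair exactly iEF as far as these matchings are concerned. The only obstruction is the set of \emph{phantom} matchings, i.e.\ those $b$ with $x(b)=0$ but $t_i(b)>0$ for some agent $i$: they contribute $\sum_i t_i(b)$ to the objective $\opt$ yet carry no realizable payment.

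To remove the obstruction I would perturb the lottery. Fixing a small $\delta>0$, I give each phantom matching probability $\delta$, scale down the probabilities of the genuine support by the common factor $1-c\delta$ (where $c$ is the number of phantom matchings) so that $\sum_b x'(b)=1$, and keep $t'_i(b)=t_i(b)$ on every matching. This leaves the objective $\sum_b\sum_i t'_i(b)$ exactly equal to $\opt$, while making $p_i(b)=t'_i(b)/x'(b)$ well defined everywhere (equal to $t_i(b)/\delta$ on phantom matchings). For every agent--item pair $(i,j)$ that $i$ already receives with positive probability under $\x$, the conditional probability $\prm_{b\sim\x'}[b(i)=j]$ stays bounded below by a fixed positive constant as $\delta\to 0$, whereas the numerators of the relevant conditional expectations move by only $\bigO(\delta)$. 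Hence the recovered iEF inequality for such a pair converges to the corresponding constraint of~(\ref{LP-min-payments}) divided by $\prm_{b\sim\x}[b(i)=j]$, which is non-negative; so for these pairs the iEF condition is violated by at most $\epsilon$ once $\delta$ is small enough.

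The hard part will be the pairs $(i,j)$ that agent $i$ can obtain \emph{only} through phantom matchings. There $\prm_{b\sim\x'}[b(i)=j]=\Theta(\delta)$, so the slack $\epsilon\cdot\prm_{b\sim\x'}[b(i)=j]$ is itself $\Theta(\delta)$ and no longer dominates the $\Theta(\delta)$ valuation perturbation. For such a pair and an agent $k$, the constraint of~(\ref{LP-min-payments}) degenerates to the pure payment inequality $\sum_{b\in\M_{ij}}(t_i(b)-t_k(b))\geq 0$, and the target $\epsilon$-iEF condition, multiplied through by $\prm_{b\sim\x'}[b(i)=j]$, is equivalent to $\delta\sum_{b\in\M_{ij}}(v_i(j)-v_i(b(k)))+\sum_{b\in\M_{ij}}(t_i(b)-t_k(b))\geq -\epsilon\,\prm_{b\sim\x'}[b(i)=j]$. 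When the payment sum is strictly positive this holds for all small $\delta$; but when it is tight, the residual valuation term can be as negative as $-\Theta(\delta)$, so exact iEF may fail and genuine use of the $\epsilon$ budget is unavoidable. I expect the crux to be a simultaneous $\bigO(\delta)$-size corrective increase of the payments on the phantom matchings that covers these tight pairs: each such correction perturbs the remaining constraints by only $\bigO(\delta)$, which is absorbed by the $\Omega(1)$ slack of the pairs reachable under $\x$, but the corrections for the various tight phantom pairs interact (a payment boosting agent $i$ appears negatively wherever $i$ is the envied agent), so one must check they can be chosen consistently. Proving that this adjustment is always realizable while keeping the expected payment exactly $\opt$ is the main obstacle.
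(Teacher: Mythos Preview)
Your perturbation is exactly the paper's construction: with $\K_1=\{b:x(b)>0\}$ and $\K_2=\{b:x(b)=0,\ \exists i\,\, t_i(b)>0\}$, the paper takes $\delta=\epsilon/(2V_{\max})$, sets $x'(b)=(1-\delta)x(b)$ on $\K_1$ and $x'(b)=\delta/|\K_2|$ on $\K_2$, leaves every $t_i(b)$ unchanged (so the objective stays exactly $\opt$), and defines $p_i(b)=t_i(b)/x'(b)$ on $\K_1\cup\K_2$.

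Where you diverge is the analysis. The paper performs no case split and introduces no corrective payment adjustment. It simply shows, uniformly for every triple $(i,j,k)$, that
\[
\sum_{b\in\M_{ij}}\Bigl(x'(b)\bigl(v_i(j)-v_i(b(k))\bigr)+t_i(b)-t_k(b)\Bigr)\ \ge\ \sum_{b\in\M_{ij}}\Bigl(x(b)\bigl(v_i(j)-v_i(b(k))\bigr)+t_i(b)-t_k(b)\Bigr)-2\delta V_{\max}\ \ge\ -\epsilon,
\]
using only $|v_i(j)-v_i(b(k))|\le V_{\max}$, $\sum_{b\in\K_1}x(b)=1$, and feasibility of $(\x,\tbold)$ in~(\ref{LP-min-payments}), and then declares this sufficient for $\epsilon$-iEF.

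Your ``hard part'' --- pairs $(i,j)$ reachable only through $\K_2$, where $\Pr_{\x'}[b(i)=j]=\Theta(\delta)$ so that dividing the displayed bound by that probability need not give a conditional violation of at most $\epsilon$ --- is a subtlety the paper does not address: it tacitly treats the \emph{multiplied} LP inequality with additive slack $-\epsilon$ as the target, rather than Definition~\ref{defn:C-payments} read verbatim. The corrective-payment scheme you sketch and the ``main obstacle'' you flag are therefore not part of the intended argument; you are attempting to close a gap the authors simply leave open. If your goal is to reproduce the paper's proof, drop that machinery and stop at the displayed bound.
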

\begin{proof}
Let us assume that we have an extreme solution to linear program (\ref{LP-min-payments}). Since the number of non-trivial constraints is polynomial (at most $n^3$) and the rest just require that the variables $x(b)$ and $t_i(b)$ are non-negative, we conclude that the number of variables that have non-zero values are at most polynomially many (i.e., at most $n^3$).

Let $V_{\max}=\max_{i\in \agents, j\in\items}{v_i(j)}$ and set $\delta=\frac{\epsilon}{2V_{\max}}$. Denote by $\K_1$ the set of matchings $b\in \M$ with $x(b)>0$ and by $\K_2$ the set of matchings $b$ such that $x(b)=0$ and $t_i(b)>0$ for some agent $i\in \agents$.
Our transformation sets $x'(b)=(1-\delta)x(b)$ for every matching $b\in \K_1$ and $x'(b)=\delta/|\K_2|$ for every matching $b\in \K_2$. The remaining variables are left intact.

We first prove that $\x'$ is a lottery. Indeed, we have
\begin{align*}
\sum_{b\in \M}{x'(b)} &= \sum_{b\in \K_1}{x'(b)}+\sum_{b\in \K_2}{x'(b)}=(1-\delta)\sum_{b\in \K_1}{x(b)}+\sum_{b\in \K_2}{\frac{\delta}{|\K_2|}}=1.
\end{align*}
To see this, note that $\sum_{b\in \K_1}{x(b)}=1$. Furthermore, observe that
\begin{align*}
&\sum_{b\in\M_{ij}}{x'(b)\cdot (v_i(j)-v_i(b(k)))}\\
&= (1-\delta)\sum_{b\in\M_{ij}\cap \K_1}{x(b)\cdot (v_i(j)-v_i(b(k)))}+\sum_{b\in \M_{ij}\cap \K_2}{\frac{\delta}{|\K_2|}\cdot (v_i(j)-v_i(b(k)))}\\
&\geq \sum_{b\in\M_{ij}}{x(b)\cdot (v_i(j)-v_i(b(k)))}-\delta\sum_{b\in \M_{ij}\cap \K_1}{x(b)\cdot V_{\max}}-\delta\cdot V_{\max}\\
&\geq \sum_{b\in\M_{ij}}{x(b)\cdot (v_i(j)-v_i(b(k)))}-2\delta\cdot V_{\max}\\
&= \sum_{b\in\M_{ij}}{x(b)\cdot (v_i(j)-v_i(b(k)))}-\epsilon.
\end{align*}
Hence, evaluating the first constraint of linear program (\ref{LP-min-payments}) for lottery $x'$, we have (using the fact that the lottery $\x$ satisfies the constraint)
\begin{align*}
&\sum_{b\in\M_{ij}}{\left(x'(b)\cdot (v_i(j)-v_i(b(k)))+t_i(b)-t_k(b)\right)}\\
&\geq \sum_{b\in\M_{ij}}{\left(x(b)\cdot (v_i(j)-v_i(b(k)))+t_i(b)-t_k(b)\right)}-\epsilon\\
&\geq -\epsilon.
\end{align*}
To construct the payment vector $\p$ that makes $\x'$ $\epsilon$-iEF, it suffices to set $p_i(b)=t_i(b)/x'(b)$ for every $b\in \K_1\cup \K_2$ and $p_i(b)=0$ otherwise.
\end{proof}
%\hfill\Halmos\end{proof}

Our result for utility maximization is the following.

\begin{theorem}\label{thm:umax}
Let $\epsilon>0$. Consider an instance of utility maximization with C-payments and let $\opt$ be the value of its optimal solution. Our algorithm computes a lottery $\Q$ and a C-payment vector $\p$ of expected value at least $\opt-\epsilon$, so that the pair ($\Q,\p$) is $\epsilon$-iEF with C-payments.
\end{theorem}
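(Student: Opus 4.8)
The plan is to mirror the development for subsidy minimization (Theorem~\ref{thm:sm}) and adapt its two stages---solving a linear program, then perturbing an extreme solution---to the max-min objective and the rent constraint of utility maximization. First I would set up a linear program over the same variables as~(\ref{LP-min-payments}): the lottery $\x=(x(b))_{b\in\M}$ together with the linearized payments $t_i(b)$, interpreted as $x(b)\cdot p_i(b)$. The iEF constraints are identical to those of~(\ref{LP-min-payments}). Since payments are now collected from the agents, I would impose $t_i(b)\le 0$ and add the rent equality $\sum_{b\in\M}\sum_{i\in\agents} t_i(b)=-R$, encoding that the expected total contribution equals $R$. To capture the max-min objective, I would introduce a scalar $u$, maximize $u$, and add, for every agent $i\in\agents$, the constraint $u\le\sum_{b\in\M}\bigl(x(b)\cdot v_i(b(i))+t_i(b)\bigr)$, so that $u$ is a lower bound on each agent's expected utility $\E_{b\sim\x}[v_i(b(i))+p_i(b)]$.

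Next I would argue that this linear program is solvable in polynomial time, following Lemma~\ref{lem:LP-min-payments} almost verbatim. Relative to the subsidy-minimization program, the only additions are the $n$ min-utility constraints and the single rent equality, contributing just $n+1$ new dual variables; the exponentially many dual constraints are still indexed by matchings $b\in\M$ and by (agent, matching) pairs. I would therefore reuse the same separation machinery: a 2EBM computation for the matching-indexed constraints, with edge-pair weights modified to absorb the new dual variables attached to the min-utility and rent constraints, together with the per-agent bipartite matching computations that check the $t_i(b)$-type constraints. Running the ellipsoid method on the dual then yields, as in Section~\ref{sec:computing}, an extreme optimal primal solution supported on polynomially many matchings, of value $\opt_{\mathrm{LP}}\ge\opt$, where $\opt$ is the optimum of the utility-maximization problem itself (every genuine iEF pair induces a feasible LP point via $t_i(b)=x(b)\,p_i(b)$, and relaxing to $x(b)=0,\ t_i(b)\neq 0$ only enlarges the feasible region of a maximization).

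The rounding step is where the argument genuinely departs from subsidy minimization, and it is the main obstacle. As in Lemma~\ref{lem:violate-ief}, an extreme solution may have $x(b)=0$ while $t_i(b)\neq 0$, rendering $p_i(b)=t_i(b)/x(b)$ ill-defined; I would repair this by the identical perturbation, scaling the support $\K_1$ by $(1-\delta)$ and placing mass $\delta/|\K_2|$ on each offending matching of $\K_2$, leaving every $t_i(b)$ intact and setting $p_i'(b)=t_i(b)/x'(b)$. Because the surviving $t_i(b)$ are unchanged and $t_i(b)=0$ off $\K_1\cup\K_2$, the rent equality $\sum_{b}\sum_i t_i(b)=-R$ stays exactly satisfied and the payments remain non-positive; the iEF constraints degrade by at most $\epsilon$ exactly as computed in Lemma~\ref{lem:violate-ief}, giving $\epsilon$-iEF once $\delta=\tfrac{\epsilon}{2V_{\max}}$.

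The genuinely new quantitative point is the effect of this perturbation on the objective. In subsidy minimization the objective $\sum_b\sum_i t_i(b)$ is untouched, so the value is preserved exactly; here the objective is a minimum of expected valuations (plus the fixed expected payment $\sum_b t_i(b)$), and the valuations are reweighted by $\x'$. The key estimate I would establish is that each agent's expected valuation drops by at most $\delta V_{\max}$: the $(1-\delta)$-scaling of $\K_1$ removes at most $\delta\sum_{b\in\K_1}x(b)\,v_i(b(i))\le\delta V_{\max}$, while the mass redistributed to $\K_2$ contributes a non-negative amount by non-negativity of valuations. Hence each agent's expected utility, and thus their minimum, falls by at most $\delta V_{\max}=\epsilon/2\le\epsilon$ below $u=\opt_{\mathrm{LP}}\ge\opt$. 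The resulting pair $\Q,\p$ is therefore $\epsilon$-iEF with C-payments, respects the rent $R$ exactly, and attains minimum expected utility at least $\opt-\epsilon$; tracking this $\epsilon$ loss carefully is the one place the proof needs more than a transcription of the subsidy-minimization argument.
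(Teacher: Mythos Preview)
Your proposal is correct and follows essentially the same two-stage approach as the paper: formulate an LP with the linearized payment variables, a max-min objective variable, and the rent equality; solve it via the ellipsoid method using the 2EBM and bipartite-matching separation oracles; then apply the $(1-\delta)$-perturbation of Lemma~\ref{lem:violate-ief} and bound the loss in each agent's expected valuation by $\delta V_{\max}$. The only cosmetic differences are your sign convention for $t_i(b)$ and your choice $\delta=\epsilon/(2V_{\max})$, which in fact yields both $\epsilon$-iEF and an objective loss of at most $\epsilon/2$, so your parameter choice is at least as tight as the paper's.
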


%\proof{Proof of Theorem \ref{thm:umax}.}
Again, our algorithm uses the solution of an appropriate linear program defined as follows:
\begin{equation}\label{LP-max-utility}
\begin{array}{ll@{}ll}
\text{maximize}  & q & \\
\text{subject to}& q - \displaystyle\sum\limits_{b\in \M}  x(b)\cdot v_i(b(i))+\displaystyle\sum\limits_{b\in \M}{t_i(b)} \leq 0& i\in \agents\\
& \displaystyle\sum\limits_{b\in \M_{ij}}  \left(x(b)\cdot (v_i(j)-v_i(b(k)))-t_i(b)+t_k(b)\right) \geq 0, &   i\in \agents,  j\in \items, k\in \agents\setminus\{i\}\\
 & \displaystyle\sum\limits_{b\in \M}   x(b) =1 &\\
 &  \displaystyle\sum\limits_{b\in \M}{\sum\limits_{i\in \agents}{t_i(b)}} = R & \\
 &                                                \quad\quad x(b) \geq 0, & b\in \M\\
 &                                                \quad\quad t_i(b) \geq 0, & b\in \M, i\in \agents
\end{array}
\end{equation}
Compared to our previous linear program (\ref{LP-min-payments}), this one has the extra variable $q$, which denotes the minimum expected utility over all agents. Note that, here, we require $t_i(b)$ to be non-negative and interpret $t_i(b)$ as $-x(b)\cdot p_i(b)$; recall that payments in utility maximization are non-positive. The objective is to maximize $q$ and the first set of constraints guarantees that $q$ is at least the expected utility $\E_{b\sim\x}[v_i(b(i)+p_i(b)]$ for every agent $i\in\agents$. Indeed, observe that
\begin{align*}
\E_{b\sim \x}[v_i(b(i))+p_i(b)] &=\sum_{b\in \M}{x(b)\cdot (v_i(b(i))+p_i(b))}=\sum_{b\in \M}{x(b)\cdot v_i(b(i))}-\sum_{b\in \M}{t_i(b)}.
\end{align*}
The second set of constraints captures the iEF requirements. Due to the definition of the payment variables, the signs of $t_i(b)$ and $t_k(b)$ in the second constraint are different compared to the linear program (\ref{LP-min-payments}). The third set of constraints restricts $\x$ to be a lottery and the fourth one guarantees that the payments contribute to the rent.

The proof of Theorem~\ref{thm:umax} follows by the next two lemmas. Lemma~\ref{lem:LP-max-utility} uses again the approach we developed in Section~\ref{sec:computing} to compute a solution to the linear program (\ref{LP-max-utility}), and Lemma~\ref{lem:violate-ief-umax} modifies the solution of the linear program to come up with a pair of lottery and corresponding C-payments that marginally violate iEF and achieve a marginally lower objective value.

\begin{lemma}\label{lem:LP-max-utility}
The linear program (\ref{LP-max-utility}) can be solved in polynomial time.
\end{lemma}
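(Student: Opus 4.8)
The plan is to follow the template of Lemma~\ref{lem:LP-min-payments} essentially verbatim. The linear program (\ref{LP-max-utility}) has exponentially many variables (one $x(b)$ and one $t_i(b)$ for each of the $n!$ matchings $b\in\M$), so I would solve it by running the ellipsoid method on its dual, which has polynomially many variables and exponentially many constraints, driven by a polynomial-time separation oracle assembled from 2EBM and classical bipartite matching. Tracking the execution then lets me shrink the primal to polynomial size and recover the optimal $\x$, $\tbold$, and $q$.

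First I would dualize (\ref{LP-max-utility}). Introducing a dual variable $\alpha_i\ge 0$ for each of the $n$ minimum-utility constraints, $y(i,j,k)\ge 0$ for each iEF constraint, a free variable $z$ for $\sum_{b\in\M}x(b)=1$, and a free variable $\mu$ for the rent constraint $\sum_{b\in\M}\sum_{i\in\agents}t_i(b)=R$, the dual takes the form
\begin{equation*}
\begin{array}{rl}
\text{minimize} & z+\mu R\\
\text{subject to} & \sum_{i\in\agents}{\alpha_i}=1,\\
& z-\sum_{i\in\agents}{\alpha_i v_i(b(i))}-\sum_{\substack{(i,j)\in b\\ k\in\agents\setminus\{i\}}}{(v_i(j)-v_i(b(k)))\, y(i,j,k)}\ge 0, \quad b\in\M,\\
& \alpha_i+\mu+\sum_{k\in\agents\setminus\{i\}}{\big(y(i,b(i),k)-y(k,b(k),i)\big)}\ge 0, \quad i\in\agents,\ b\in\M,\\
& \alpha_i\ge 0,\quad y(i,j,k)\ge 0.
\end{array}
\end{equation*}
The constraint for the free variable $q$ is the single equality $\sum_{i}\alpha_i=1$, while the second and third families (one per matching, and one per (agent, matching) pair, respectively) are exponentially many and require the oracle.

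The separation oracle would handle the three families as follows. The equality $\sum_i\alpha_i=1$ is a single constraint, checked directly. For the second family I would set up a 2EBM instance with edge-pair weights $\psi(i,j,k,\ell)=(v_i(j)-v_i(\ell))\, y(i,j,k)+\frac{\alpha_i v_i(j)}{n-1}$, where the single-edge term $\alpha_i v_i(b(i))$ is folded into the edge-pair weights exactly via the footnote equivalence between edge-weighted matching and 2EBM; then $\Psi(b)=\sum_{i}\alpha_i v_i(b(i))+\sum_{(i,j)\in b}\sum_{k\neq i}(v_i(j)-v_i(b(k)))\, y(i,j,k)$, so computing $b^*\in\argmax_{b\in\M}\Psi(b)$ and reporting the constraint of $b^*$ whenever $\Psi(b^*)>z$ is correct. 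For the third family, for each agent $i$ I would build (just as in Lemma~\ref{lem:LP-min-payments}) a complete bipartite graph $G^i$ with edge weights $c(i,\ell)=\sum_{k\neq i}y(i,\ell,k)$ and $c(k,\ell)=-y(k,\ell,i)$ for $k\neq i$, compute a minimum-weight perfect matching $b^*_i$ (a classical bipartite matching computation), and report the constraint of $(i,b^*_i)$ whenever $\alpha_i+\mu+\sum_{k\neq i}(y(i,b^*_i(i),k)-y(k,b^*_i(k),i))<0$. If no family reports a violation, the oracle declares the point feasible.

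With these polynomial-time oracles the ellipsoid method solves the dual in polynomial time, and restricting the primal to the polynomially many returned constraints (setting all other $x(b)$, $t_i(b)$ to zero) yields a polynomial-size primal whose solution proves the lemma. The only genuinely new ingredient over Lemma~\ref{lem:LP-min-payments} — and the part I expect to require the most care — is accommodating the two extra dual variables arising from the minimum-utility objective and the rent equality: the $\alpha_i$'s must be distributed as single-edge weights inside the 2EBM instance through the $\tfrac{1}{n-1}$ trick, and the threshold of the matching-based oracle for the $t_i(b)$ constraints must be shifted by $\alpha_i+\mu$ rather than by the constant that appeared in the dual~(\ref{dual-LP-min-payments}). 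Everything else is a routine adaptation of the earlier argument.
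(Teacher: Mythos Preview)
Your proposal is correct and follows essentially the same approach as the paper: dualize, use a 2EBM instance (with the single-edge terms $\alpha_i v_i(b(i))$ folded in via the $\tfrac{1}{n-1}$ trick) as the separation oracle for the $x(b)$-constraints, use classical bipartite matching for the $t_i(b)$-constraints, and check the remaining constraint directly. The only differences from the paper's proof are cosmetic---you write the dual as a minimization with $\ge$-constraints and keep the threshold $z$ (respectively $\alpha_i+\mu$) outside the edge(-pair) weights rather than absorbing it as a constant $\tfrac{z}{n(n-1)}$ (respectively $-\tfrac{w_i}{n}+\tfrac{g}{n}$), and you record the $q$-dual constraint as the equality $\sum_i\alpha_i=1$ rather than the paper's $\sum_i w_i\ge 1$; none of this changes the argument.
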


\begin{proof}
Once again, we solve the linear program (\ref{LP-max-utility}) by applying the ellipsoid method to its dual linear program, which is now the following:
\begin{equation}\label{dual-LP-max-utility}
\begin{array}{ll@{}ll}
\text{maximize}  & z +Rg & \\
\text{subject to}& z + \displaystyle\sum\limits_{i\in \agents}{w_i\cdot v_i(b(i))}+\displaystyle\sum\limits_{\substack{(i,j)\in b\\k\in \agents \setminus\{i\}}} {(v_i(j)-v_i(b(k)))\cdot y(i,j,k)} \leq 0, & \,\, b \in\M\\
 & \displaystyle\sum\limits_{k\in \agents\setminus\{i\}}{\left(y(k,b(k),i)-y(i,b(i),k)\right)} -w_i+g \leq 0, & \,\, i\in \agents, b\in \M\\
 & \displaystyle\sum\limits_{i\in \agents}{w_i} \geq 1 &\\
 & y(i,j,k) \geq 0, i\in \agents, j\in \items, k\in\agents\setminus\{i\}\\
\end{array}
\end{equation}
Our separation oracle will check whether a given assignment of values to the dual variables satisfies the first set of constraints by solving an instance of 2EBM consisting of a complete bipartite graph $G(\agents,\items,\agents\times\items)$ with edge-pair weight
\begin{align*}
\psi(i,j,k,\ell) &= (v_i(j)-v_i(\ell))\cdot y(i,j,k) + \frac{w_i \cdot v_i(j)}{n-1}+\frac{z}{n(n-1)}
\end{align*}
for each quadruple $(i,j,k,\ell)\in\X$. Indeed, the total edge-pair weight $\Psi$ over all edge-pairs of matching $b\in \M$ is equal to the LHS of the corresponding first constraint of the dual linear program (\ref{dual-LP-max-utility}). 

For the second set of constraints, the separation oracle will solve classical maximum bipartite matching problems. In particular, for agent $i\in\agents$, it will compute a perfect matching of maximum total edge weight for the instance consisting of the complete bipartite graph $G^i(\agents,\items,\agents\times\items)$ with edge weight 
\begin{align*}
c(i,\ell) &=-\sum_{k\in \agents\setminus\{i\}}{y(i,\ell,k)}-\frac{w_i}{n}+\frac{g}{n}
\end{align*}
for every $\ell\in \items$ and 
\begin{align*}
c(k,\ell) &=y(k,\ell,i)-\frac{w_i}{n}+\frac{g}{n},
\end{align*}
for every $k\in \agents\setminus\{i\}$ and $\ell\in \items$. Again, the total weight of edges in a perfect matching $b\in \M$ is equal to the LHS of the second constraint of the dual linear program (\ref{dual-LP-max-utility}) associated with matching $b\in \M$ and agent $i\in \agents$.

Finally, checking whether the dual variables satisfy the third constraint is trivial.
\end{proof}

\begin{lemma}\label{lem:violate-ief-umax}
For every $\epsilon>0$, given an extreme solution to the linear program (\ref{LP-max-utility}) of objective value $\opt$, an $\epsilon$-iEF lottery $\x'$ with C-payments $\p'$ of total expected value at least $\opt-\epsilon$ can be computed in polynomial time.
\end{lemma}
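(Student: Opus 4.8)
The plan is to reuse, almost verbatim, the perturbation argument of Lemma~\ref{lem:violate-ief}, while additionally tracking the objective $q$. Starting from an extreme solution $(\x,\tbold,q)$ of the linear program (\ref{LP-max-utility}), I first observe --- exactly as in Lemma~\ref{lem:violate-ief} --- that only polynomially many variables are nonzero, so the sets $\K_1=\{b\in\M:x(b)>0\}$ and $\K_2=\{b\in\M:x(b)=0,\ t_i(b)>0 \text{ for some } i\in\agents\}$ have polynomial size and can be computed efficiently. Setting $V_{\max}=\max_{i\in\agents,j\in\items}v_i(j)$ and $\delta=\frac{\epsilon}{2V_{\max}}$, I define $\x'$ by $x'(b)=(1-\delta)x(b)$ for $b\in\K_1$ and $x'(b)=\delta/|\K_2|$ for $b\in\K_2$, leaving every $t_i(b)$ untouched.

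Next I verify the feasibility-type conditions. That $\x'$ is a lottery follows by the identical computation as in Lemma~\ref{lem:violate-ief}. The rent constraint is preserved for free: since the $t_i(b)$ are unchanged, $\sum_{b\in\M}\sum_{i\in\agents}t_i(b)=R$ still holds, and with the payments $p_i(b)=-t_i(b)/x'(b)$ (recall that here $t_i(b)$ is interpreted as $-x(b)\cdot p_i(b)$, so payments are non-positive) the expected total contribution is $\sum_{b\in\K_1\cup\K_2}\sum_{i\in\agents}t_i(b)=R$. For the iEF constraints, the only term affected by the reweighting is $\sum_{b\in\M_{ij}}x'(b)(v_i(j)-v_i(b(k)))$, and the same estimate used in Lemma~\ref{lem:violate-ief} gives
\begin{align*}
\sum_{b\in\M_{ij}}x'(b)(v_i(j)-v_i(b(k))) &\geq \sum_{b\in\M_{ij}}x(b)(v_i(j)-v_i(b(k)))-2\delta V_{\max}.
\end{align*}
Since the original solution satisfies the second set of constraints of (\ref{LP-max-utility}) (note the flipped signs of $t_i(b),t_k(b)$ relative to (\ref{LP-min-payments}), which do not enter this estimate since those terms are never reweighted), I obtain $\sum_{b\in\M_{ij}}(x'(b)(v_i(j)-v_i(b(k)))-t_i(b)+t_k(b))\geq -\epsilon$, i.e., the pair is $\epsilon$-iEF in the sense of Definition~\ref{defn:C-payments}.

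The genuinely new step is to control the objective. For every agent $i\in\agents$, the first constraint of (\ref{LP-max-utility}) gives $\opt\leq\sum_{b\in\M}x(b)v_i(b(i))-\sum_{b\in\M}t_i(b)$, so the original expected utility of $i$ is at least $\opt$. Under $\x'$ the payment term $-\sum_{b}t_i(b)$ is unchanged, while the valuation term only gains the non-negative $\K_2$-mass and loses at most $\delta V_{\max}$ from the factor $(1-\delta)$ on $\K_1$; hence
\begin{align*}
\E_{b\sim\x'}[v_i(b(i))+p_i(b)] &= \sum_{b\in\M}x'(b)v_i(b(i))-\sum_{b\in\M}t_i(b) \geq \opt-\delta V_{\max}=\opt-\tfrac{\epsilon}{2}.
\end{align*}
Taking the minimum over $i\in\agents$ shows the returned pair has minimum expected utility at least $\opt-\epsilon$, completing the proof once the payments $p_i(b)=-t_i(b)/x'(b)$ on $\K_1\cup\K_2$ (and $0$ otherwise) are installed.

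I expect this to be routine given Lemma~\ref{lem:violate-ief}; the only place requiring care is confirming that the single reweighting simultaneously (a) leaves the rent exactly at $R$, (b) violates iEF by at most $\epsilon$, and (c) costs at most $\epsilon$ in the minimum expected utility. Points (a) and (b) transfer directly from the subsidy-minimization analysis, so the main new content is the utility estimate (c), which is precisely where the first set of constraints of (\ref{LP-max-utility}) is needed.
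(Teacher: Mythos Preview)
Your proposal is correct and follows essentially the same route as the paper: perturb $\x$ exactly as in Lemma~\ref{lem:violate-ief}, keep $\tbold$ intact (so the rent constraint and payment term are preserved automatically), and bound the loss in $\E_{b\sim\x'}[v_i(b(i))]$ by $\delta V_{\max}$ to control the objective. Your choice $\delta=\epsilon/(2V_{\max})$ even gives a slightly sharper utility loss of $\epsilon/2$ while still yielding $\epsilon$-iEF; the paper states $\delta=\epsilon/V_{\max}$ for the utility bound, but the argument is otherwise identical.
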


\begin{proof}
We repeat the proof of Lemma~\ref{lem:violate-ief} to compute the lottery $\x'$ and the payment vector $\p$ so that the pair $\x',\p$ is $\epsilon$-iEF. Set $\delta=\frac{\epsilon}{V_{\max}}$. The expected value of agent $i\in \agents$ for her item is
\begin{align*}
\E_{b\sim\x'}[v_i(b(i))] &= \sum_{b\in\M}{x'(b)\cdot v_i(b(i))}\geq (1-\delta) \sum_{b\in \M}{x(b)\cdot v_i(b(i))}\\
&\geq \sum_{b\in \M}{x(b)\cdot v_i(b(i))}-\delta\sum_{b\in\M}{x(b)\cdot V_{\max}}\geq \sum_{b\in \M}{x(b)\cdot v_i(b(i))} - \epsilon.
\end{align*}
The first inequality is due to the definition of $\x'$ (recall that $x'(b)=(1-\delta)x(b)$ for $b\in \K_1$ while $x'(b)\geq 0$ and $x(b)=0$ for $b\in \M\setminus \K_1$), the second one follows by the definition of $V_{\max}$, and the third one follows by the definition of $\delta$ and since $\sum_{b\in\M}{x(b)}=1$. Hence, by the definition of variable $q$ in the solution of the linear program (\ref{LP-max-utility}), we have that the minimum expected utility is
\begin{align*}
\min_{i\in\agents}{\left\{\E_{b\sim \x'}[v_i(b(i)+p_i(b)]\right\}}\geq \min_{i\in\agents}{\left\{\sum_{b\in \M}{x(b)\cdot v_i(b(i))}-\sum_{b\in\M}{t_i(b)}\right\}} - \epsilon\geq q-\epsilon=OPT-\epsilon,
\end{align*}
as desired. 
\end{proof}
%\hfill\Halmos\end{proof}

\section{Open problems}
\label{sec:open}
We believe that interim envy-freeness can be a very influential fairness notion and can play for lotteries of allocations the role that envy-freeness has played for deterministic allocations. Our work aims to reinvigorate the study of this notion; we hope this will further intensify the study of fairness in random allocations overall. At the conceptual level, iEF can inspire new fairness notions, analogous to known relaxations of envy-freeness, such as envy-freeness up to one (EF1; see~\cite{B11}) and up to any item (EFX; see~\cite{CKMPS19}), that have become very popular recently. Even though it is very tempting, we refrain from proposing additional definitions here.

An appealing feature of iEF lotteries is that they are efficiently computable in matching allocation instances. Of course, besides the importance of the ellipsoid algorithm in theory (see, e.g.,~\cite{GLS88,S86}), our methods have apparent limitations. Combinatorial algorithms for solving the computational problems addressed in Sections~\ref{sec:computing} and~\ref{sec:compute-payments} are undoubtedly desirable. An intermediate first step would be to design a combinatorial algorithm for the maximum edge-pair-weighted perfect bipartite matching problem (2EBM). This could be useful elsewhere, as 2EBM is a very natural problem with possible applications in other contexts.

At the technical level, there is room for several improvements of our results. Our bounds on the price of iEF with respect to the average Nash social welfare have a gap between $\Omega(\sqrt{n})$ (Theorem~\ref{thm:poief-nash-lower}) and $\bigO(n)$ (Theorem~\ref{thm:poief-upper}). Also, in Section~\ref{sec:computing}, we show how to compute iEF lotteries that maximize the expected log-Nash social welfare. Although maximizing log-Nash and average Nash social welfare are equivalent goals for deterministic allocations, this is not the case for lotteries. Computing iEF lotteries of maximum expected average Nash welfare is elusive at this point.

We left for the end the many open problems that are related to iEF with payments. Our characterization in Section~\ref{sec:char} has been used only in the proof of Theorem~\ref{thm:a-vs-b}. It would be interesting to see whether it has wider applicability and, in particular, whether it can lead to efficient algorithms for computing iEF pairs of lotteries with A-payments. This is not clear to us, as our characterization seems to be much less restrictive than existing ones for envy-freeability (which, e.g., have given rise to transforming the rent division problem to a bipartite matching computation~\cite{GMPZ17}). Furthermore, characterizations of iEF-ability with B- or C-payments are currently elusive. Our results in Section~\ref{sec:opt} reveal gaps on the quality of solutions for the two optimization problems that the different types of payments allow. Our analysis in the proofs of Theorems~\ref{thm:a-vs-b} and~\ref{thm:b-vs-a} is not tight; determining the maximum gap between A-, B, and C-payments on the quality of solutions for subsidy minimization and utility maximization is open. Finally, from the computational point of view, our solutions to subsidy minimization and utility maximization yield pairs of lotteries and C-payments that are only approximately iEF. Can these problems be solved exactly? Also, solving both subsidy minimization and utility maximization with A- or B-payments will be of practical importance. What is the complexity of these problems?

% Acknowledgments here
%\section*{Acknowledgments.}
% Enter the text of acknowledgments here

% References here (outcomment the appropriate case) 

% CASE 1: BiBTeX used to constantly update the references 
%   (while the paper is being written).

\bibliography{ief}
\end{document}